\documentclass[11pt,a4paper]{article}
\usepackage{amsthm}
\RequirePackage{fullpage}
\usepackage{booktabs}
\RequirePackage{enumerate} 
\RequirePackage[latin1]{inputenc}
\RequirePackage{amsmath}
\RequirePackage{amssymb}

\usepackage{xcolor}
\usepackage{hyperref}
\hypersetup{
    colorlinks=true,                
    linkcolor= blue,                
    pdfborder={0 0 1},
    linkbordercolor={1 0 0},
    citebordercolor=green
}
\usepackage{algorithmicx}
\usepackage{algorithm}
\usepackage[noend]{algpseudocode}

\bibliographystyle{abbrv}

\algdef{SE}[DOWHILE]{Do}{doWhile}{\algorithmicdo}[1]{\algorithmicwhile\ #1}

\makeatletter
\newcommand{\leqnos}{\tagsleft@true\let\veqno\@@leqno}
\newcommand{\reqnos}{\tagsleft@false\let\veqno\@@eqno}
\reqnos
\makeatother

\usepackage{amsmath,amsfonts}
\usepackage{tcolorbox}
\usepackage{caption}
\usepackage{subcaption}
\usepackage{epsfig}

\newcommand{\CCase}[1]{\noindent {\bf Case #1:}}

\newtheorem{theorem}{Theorem}[section]

\newtheorem{lemma}[theorem]{Lemma}

\newtheorem{observation}[theorem]{Observation}

\newtheorem{definition}[theorem]{Definition}
\newtheorem{conjecture}[theorem]{Conjecture}

\newtheorem{example}[theorem]{Example}
\newtheorem{open-problem}[theorem]{Open Problem}
\newcommand{\beq}{\begin{equation}}
\newcommand{\eeq}{\end{equation}}
\makeatletter
\newcommand\newtag[2]{#1\def\@currentlabel{#1}\label{#2}}
\makeatother


\newcommand{\hidetext}[1]{}
\def\P{\mathbb{P}}

\begin{document}


\title{Toward a Dichotomy for Approximation of H-coloring\thanks{An extended abstract of this work appeared in ICALP 2019~\cite{RafieyRS19}}}

 \author{Akbar Rafiey \thanks{Department of Computing Science, Simon Fraser University; arafiey@sfu.ca. Supported by NSERC}   
      \and 
Arash Rafiey \thanks{Department of Math and Computer Science, Indiana State University; arash.rafiey@indstate.edu. Department of Computer Science, Simon Fraser University; arashr@sfu.ca. Supported by NSF 1751765.}  
\and 
Thiago Santos \thanks {Department of Math and Computer Science, Indiana State University; tsantos2@sycamores.indstate.edu}
}

\date{}
\maketitle

\begin{abstract}
Given two (di)graphs $G$, $H$ and a cost function $c:V(G)\times V(H) \to \mathbb{Q}_{\geq 0}\cup\{+\infty\}$, in the minimum cost homomorphism problem, MinHOM($H$), we are interested in finding a homomorphism $f:V(G)\to V(H)$ (a.k.a $H$-coloring) that minimizes $\sum\limits_{v\in V(G)}c(v,f(v))$. The complexity of \emph{exact minimization} of this problem is well understood~\cite{hell2012dichotomy}, and the class of digraphs $H$, for which the MinHOM($H$) is polynomial time solvable is a small subset of all digraphs. 

In this paper, we consider the approximation of MinHOM within a constant factor. In terms of digraphs, MinHOM($H$) is not approximable if $H$ contains a \emph{digraph asteroidal triple (DAT)}. We take a major step toward a dichotomy classification of approximable cases. We give a dichotomy classification for approximating the MinHOM($H$) when $H$ is a graph (i.e. symmetric digraph). For digraphs, we provide constant factor approximation algorithms for two important classes of digraphs, namely bi-arc digraphs (digraphs with a \emph{conservative semilattice polymorphism} or \emph{min-ordering}), and
$k$-arc digraphs (digraphs with an \emph{extended min-ordering}).  Specifically, we show that:
\begin{itemize} 
 \item \textbf{Dichotomy for Graphs:} MinHOM($H$) has a $2|V(H)|$-approximation algorithm if graph $H$ admits a \emph{conservative majority polymorphims} (i.e. $H$ is a \emph{bi-arc graph}), otherwise, it is inapproximable;
 \item MinHOM($H$) has a $|V(H)|^2$-approximation algorithm if $H$ is a bi-arc digraph;
    \item MinHOM($H$) has a $|V(H)|^2$-approximation algorithm if $H$ is a $k$-arc digraph.
\end{itemize}

Our constant factors depend on the size of $H$. However, the implementation of our algorithms provides 
a much better approximation ratio. 
It leaves open to investigate a classification of digraphs $H$, where MinHOM($H$) admits a constant factor approximation
algorithm that is independent of $|V(H)|$.
\end{abstract}
\newpage
\section{Introduction}

For a digraph $D$, let $V(D)$ denote the vertex set of $D$, and let $A(D)$ denote
the arcs of $D$. We denote the number of vertices of $D$ by  $|D|$. Instead of $(u,v) \in A(D)$, we use the shorthand
$uv \in A(D)$ or simply $uv\in D$.

A graph $G$ is a symmetric digraph, that is, $xy \in A(G)$ if and only if $yx \in A(G)$. An edge is just a symmetric arc. 

A {\em homomorphism} of a digraph $D$ to a digraph $H$ (a.k.a $H$-\textsc{Coloring}) is a mapping $f : V(D) \rightarrow V(H)$ such that for each arc $xy$ of $D$, $f(x)f(y)$ is an arc of $H$. We say the map $f$ does not satisfy arc $xy$, if $f(x)f(y)$ is not an arc of $H$. The homomorphism problem for a fixed target digraph $H$, HOM($H$), takes a digraph $D$ as input and asks whether there is a homomorphism from $D$ to $H$. 
Therefore, by fixing the digraph $H$ we obtain a class of problems, one for each digraph $H$. 
For example, HOM($H$), when $H$ is an edge, is exactly the problem of determining whether the input graph $G$ is bipartite (i.e., the \textsc{2-Coloring} problem). Similarly, if $V(H)=\{u,v,x\}, A(H)=\{uv,vu,vx,xv,ux,xu\}$, then HOM($H$) is exactly the classical \textsc{3-Coloring} problem. More generally, if $H$ is a clique on $k$ vertices, then HOM($H$) is the $k$-\textsc{Coloring} problem. The HOM($H$) problem can be considered within a more general framework, the constraint satisfaction problem (CSP). In the CSP associated with a finite relational structure $\mathbb{H}$, the question is whether there exists a homomorphism of a given finite relational structure to $\mathbb{H}$. Thus, the $H$-\textsc{Coloring} problem is a particular case of the CSP in which the involved relational structures are digraphs. A celebrated result due to Hell and Nesetril~\cite{hell1990complexity}, states that, for graph $H$, HOM($H$) is in P if $H$ is bipartite or contains a looped vertex, and that it is NP-complete for all other graphs $H$. See~\cite{bulatov2005h} for an algebraic proof of the same result, and~\cite{bulatov17,zhuk17} for a dichotomy for CSP($\mathbb{H}$).

There are several natural optimization versions of the HOM($H$) problem. One is to find a mapping $f : V(D) \rightarrow V(H)$ that maximizes (minimizes) the number of satisfied (unsatisfied) arcs in $D$. This problem is known under the name of \textsc{Max 2-CSP} (\textsc{Min 2-Csp}). For example, the most basic \textsc{Boolean Max 2-CSP}
problem is \textsc{Max Cut} where the target graph $H$ is an edge. This line of research has received a lot of attention in the literature and there are very strong results concerning various aspects of approximability of \textsc{Max 2-CSP} and \textsc{Min 2-CSP}~\cite{austrin2010towards,goemans1995improved,haastad2001some,khot2007optimal,lewin2002improved}. See~\cite{makarychev2017approximation} for a recent survey on this and approximation of \textsc{Max $k$-CSP} and \textsc{Min $k$-CSP}. We consider another natural optimization version of the HOM($H$) problem, i.e., we are not only interested in the existence of a homomorphism, but want to find the "best homomorphism". The {\em minimum cost homomorphism problem} to $H$, denoted by MinHOM($H$), for a given input digraph $D$, and a cost function $c(x,i), x \in V(D), i \in V(H)$,
seeks a homomorphism $f$ of $D$ to $H$ that minimizes the total cost $\sum\limits_{ x \in V(D)} c(x,f(x))$. The cost function $c$ can take non-negative rational values and positive infinity, that is $c:V(D)\times V(H)\to \mathbb{Q}_{\geq 0}\cup \{+\infty\}$. MinHOM was introduced in~\cite{gutin2006level}, where it was motivated by a real-world problem in defence logistics. MinHOM problem offers a natural and practical way to model and generalizes many optimization problems.

\begin{example}[\textsc{(Weighted) Minimum Vertex Cover}]
This problem can be seen as MinHOM($H$) where $V(H)=\{0,1\}, E(H)=\{11,01\}$ 
and $c(u,0)=0$, $c(u,1)> 0$ for every $ u \in V(G)$. Note that $G$ and $H$ are graphs in this example.
\end{example}

\begin{example}[\textsc{List Homomorphism (LHOM)}]
LHOM($H$), seeks, for a given input digraph
$D$ and lists $L(x) \subseteq V(H), x \in V(D)$, a homomorphism $f$ from $D$ to $H$ such that $f(x) \in L(x)$ for all $x \in V(D)$. This is equivalent to MinHOM($H$) with $c(u,i)=0$ if $i \in L(u)$, otherwise $c(u,i)=+\infty$. This problem is also known as \textsc{List $H$-Coloring} and its complexity is fully understood due to series of results~\cite{barto2011dichotomy,bulatov2003tractable,bulatov2011complexity,bulatov2016conservative,feder2003bi,hell2011dichotomy}.
\end{example}

The MinHOM problem generalizes many other problems such as \textsc{(Weighted) Min Ones}~\cite{arora1997hardness,creignou2001complexity,khanna2001approximability}, \textsc{Min Sol}~\cite{jonsson2008introduction,uppman2013complexity}, a large class of bounded
integer linear programs, retraction problems~\cite{feder2010retractions}, \textsc{Minimum Sum Coloring}~\cite{bar1998chromatic,giaro200227,kubicka1989introduction}, and various optimum cost chromatic partition
problems~\cite{halldorsson2001minimizing,jansen2000approximation,jiang1999coloring,kroon1997}. 

A special case of MinHOM problem is where the cost function $c$ is choosen from a fixed set $\Delta$. This problem is denoted by MinHOM($H,\Delta$)~\cite{cohen2017binarisation,uppman2013complexity,uppman2014computational}. The \textsc{Valued Constrained Satisfaction Problems} (VCSPs) is a generalization of this special case of the MinHOM problem. An instance of the VCSP is given by a collection of variables that must be assigned labels from a given domain with the goal to minimize the objective function that is given by 
the sum of cost functions, each depending on some subset of the variables~\cite{cohen2006complexity}. Interestingly, a recent work by Cohen \emph{et al.}~\cite{cohen2017binarisation} proved that VCSPs over a fixed valued constraint language are polynomial-time equivalent to MinHOM($H,\Delta$) over a fixed digraph and a proper choice of $\Delta$.

\paragraph{\textbf{Exact Minimization}}The complexity of \emph{exact minimization} of MinHOM($H$) was studied in a series of papers, and complete complexity classifications were given in~\cite{mincostungraph} for undirected graphs, in~\cite{hell2012dichotomy} for digraphs, and in~\cite{takhanov2007dichotomy} for more general structures. Certain minimum cost homomorphism problems
have polynomial-time algorithms~\cite{mincostungraph,yeo,gutin2006level,hell2012dichotomy}, but most are NP-hard. We remark that, the complexity of \emph{exact minimization} of VCSPs is well understood~\cite{kolmogorov2017complexity,thapper2016complexity}.

\paragraph{\textbf{Approximation}}For a minimization
problem, an \emph{$\alpha$-approximation} algorithm is a (randomized) polynomial-time algorithm that finds an approximate solution of cost at most $\alpha$ times the minimum cost. A constant ratio approximation algorithm is an
$\alpha$-approximation algorithm for some constant $\alpha$. The approximability of MinHOM is fairly understood when we restrict the cost function to a fixed set $\Delta$, and further, we restrict it to take only finite values (not $\infty$). This setting is a special case of finite VCSPs, and there are strong approximation results on finite VCSPs. For finite VCSPs, Raghavendra~\cite{raghavendra2008optimal} showed how to use the basic SDP relaxation to obtain a constant approximation. Moreover, he proved that the approximation ratio cannot be improved under \textsc{Unique Game Conjecture} (UGC). This constant is not explicit, but there is an algorithm that can compute it with any given accuracy in doubly exponential time. In another line of research, the power of so-called \emph{basic linear program (BLP)} concerning constant factor approximation of finite VCSPs has been recently studied in~\cite{dalmau2018towards,ene2013local}. However, the approximability of VCSPs for constraint languages that are not finite-valued remains poorly understood, and~\cite{arash-esa,jonsson2008introduction} are the only results on approximation of VCSP for languages that have cost functions that can take infinite values. 

Hell \emph{et al.,}~\cite{arash-esa} proved a dichotomy for approximating MinHOM($H$) when $H$ is a bipartite graph by transforming the MinHOM($H$) to a linear program, and rounding the fractional values to get a homomorphism to $H$.

\begin{theorem}[Dichotomy for bipartite graphs~\cite{arash-esa}]
For a fixed bipartite graph $H$, MinHOM($H$) admits a constant factor approximation algorithm if $H$ admits a \emph{min-ordering} (complement of $H$ is a \emph{circular arc} graph), otherwise MinHOM($H$) is not approximable unless P = NP.
\end{theorem}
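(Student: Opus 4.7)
This is a dichotomy, so the plan is to handle the two directions separately. For the hardness side I would reduce from list homomorphisms: by the list-homomorphism dichotomy of Feder--Hell--Huang for bipartite graphs, a bipartite $H$ without a min-ordering (equivalently, whose complement is not a circular arc graph) makes LHOM($H$) NP-complete. Since, as noted in the excerpt, LHOM($H$) is the restriction of MinHOM($H$) to cost functions valued in $\{0,+\infty\}$, any polynomial-time $\alpha$-approximation for MinHOM($H$) would distinguish instances of value $0$ from instances of value $+\infty$ and hence decide LHOM($H$); this forces $\mathrm{P}=\mathrm{NP}$, giving the negative half.

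For the positive side, assume $H$ admits a min-ordering $\prec$, meaning that whenever $ab,a'b'\in E(H)$ with $a\prec a'$ and $b'\prec b$ we have $ab'\in E(H)$. The plan is to LP-relax and round. I would solve the basic LP relaxation of MinHOM($H$), with variables $x_{v,i}\in[0,1]$ for $(v,i)\in V(D)\times V(H)$ and edge variables $y_{uv,ij}\in[0,1]$ for $uv\in E(D)$, $ij\in E(H)$, subject to $\sum_i x_{v,i}=1$, $x_{v,i}=0$ if $c(v,i)=+\infty$, the marginal equalities $\sum_j y_{uv,ij}=x_{u,i}$ and $\sum_i y_{uv,ij}=x_{v,j}$, and objective $\min\sum_{v,i} c(v,i)\,x_{v,i}$. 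I would then round using the min-ordering:
\[
f(v)\;=\;\min_{\prec}\Bigl\{\,i\in V(H)\,:\,x_{v,i}\ge 1/|V(H)|\,\Bigr\}.
\]
The cost bound is immediate: the chosen image $f(v)$ carries at least $1/|V(H)|$ of the LP mass at $v$, so $c(v,f(v))\le |V(H)|\cdot\sum_i c(v,i)\,x_{v,i}$, yielding a $|V(H)|$-approximation after summing. Validity would be established as follows: for an edge $uv\in E(D)$, the marginal constraints yield witnesses $a^\ast\preceq f(u)$ with $a^\ast f(v)\in E(H)$ and $b^\ast\preceq f(v)$ with $f(u)\,b^\ast\in E(H)$ (these exist by the marginals together with the minimality in the rounding rule), and the min-ordering axiom, possibly applied iteratively along a chain of intermediate images, forces $f(u)f(v)\in E(H)$.

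The main obstacle is the validity step. The threshold $1/|V(H)|$ must be large enough for the witness edges returned by the LP marginals to sit in a configuration that the min-ordering closure can actually resolve into $f(u)f(v)$, yet small enough for the rounding rule to be well defined at every vertex. Ruling out failure modes where the closure ``escapes'' the chosen images, and leveraging bipartiteness to keep the chain of applications of the axiom finite, is where the combinatorics of the min-ordering carries the proof; the choice of threshold is what converts a purely structural closure property into a concrete approximation factor depending only on $|V(H)|$.
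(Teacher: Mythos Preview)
Your hardness direction is fine; it is exactly the reduction the paper uses.

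The positive direction, however, has a genuine gap in the validity step, and the gap is not just a matter of tightening the argument --- the rounding rule you propose can output a non-homomorphism.

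First, the closure step is in the wrong direction. Suppose you did have witnesses $a^\ast\preceq f(u)$ with $a^\ast f(v)\in E(H)$ and $b^\ast\preceq f(v)$ with $f(u)b^\ast\in E(H)$. Applying the min-ordering axiom to the pair $f(u)b^\ast$, $a^\ast f(v)$ (here $a^\ast<f(u)$ and $b^\ast<f(v)$) yields $a^\ast b^\ast\in E(H)$, not $f(u)f(v)\in E(H)$. Getting the ``upper'' corner $f(u)f(v)$ is exactly the \emph{max}-ordering property, which you are not assuming.

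Second, even the existence of your witnesses is not guaranteed by the basic LP. Here is a concrete failure. Let $H$ have parts $A=\{a_1,a_2,a_3\}$, $B=\{b_1,b_2,b_3\}$ and edges $\{a_1b_1,\ a_2b_1,\ a_2b_2,\ a_3b_1,\ a_3b_3\}$; one checks directly that $a_1<a_2<a_3$, $b_1<b_2<b_3$ is a min-ordering. Take $D$ a single edge $uv$ and the feasible basic-LP point
\[
x_u=(0,\tfrac16,\tfrac56),\qquad x_v=(\tfrac1{12},\tfrac1{12},\tfrac56),
\]
with $y_{uv,a_2b_1}=y_{uv,a_2b_2}=\tfrac1{12}$ and $y_{uv,a_3b_3}=\tfrac56$ (all other $y$'s zero). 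The marginals check. With threshold $1/|V(H)|=1/6$ your rule gives $f(u)=a_2$ and $f(v)=b_3$, but $a_2b_3\notin E(H)$. Note that no $a^\ast\preceq a_2$ is adjacent to $b_3$ (the only neighbour of $b_3$ is $a_3$), so your claimed witness $a^\ast$ simply does not exist here. Since your argument nowhere uses optimality of the LP solution, this already refutes the claimed validity proof; and by attaching further gadgets to $u$ one can force an optimal LP solution with the same feature.

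The approach taken in the cited reference (and extended in this paper to digraphs) is quite different from the basic LP. It writes an LP in \emph{cumulative} variables $v_i$ (morally $\mathbb{1}[f(v)\ge a_i]$) with the ordering-sensitive constraints $u_i\le v_{\ell^+(i)}$ and $v_i\le u_{\ell^-(i)}$; this LP is tailored so that random \emph{threshold} rounding (pick $X\in[0,1]$, set $v'_i=1$ iff $v_i\ge X$) already yields a homomorphism to the min--max completion $H'$ of $H$. Extra inequalities (the analogues of (C8)--(C11)) are then used to bound, edge by edge, the probability that an arc of $D$ lands on an added arc of $H'\setminus H$, and a separate randomized \textsc{Shift} procedure repairs such arcs while controlling the expected cost. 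The min-ordering is used precisely in this repair step --- to guarantee that shifting an endpoint \emph{down} the order never destroys previously satisfied arcs --- which is the opposite of the ``upward'' closure your argument would need. Reproducing this with the unstructured basic LP and a single deterministic threshold does not seem possible; at minimum, you need an LP that encodes the order and a post-processing step that exploits it.
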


 Beyond this, there is no result concerning the approximation of MinHOM($H$). We go beyond the bipartite case and present a constant factor approximation algorithm for \emph{bi-arc} graphs (graphs with a \emph{conservative majority} polymorphism). Designing an approximation algorithm for MinHOM($H$) when $H$ is a digraph is  much more complex than when $H$ is a graph. We improve the state of the art by providing constant factor approximation algorithms for MinHOM($H$) where $H$ belongs to two important classes of digraphs, namely \emph{bi-arc} digraphs (digraphs with a \emph{conservative semilattice} polymorphism a.k.a min-ordering), and \emph{k-arc} digraphs (digraphs with a $k$-min-ordering). To do so, we introduce new LPs that reflect the structural properties of the target (di)graph $H$ as well as new methods to round the fractional solutions and obtain homomorphisms to $H$. We will show our randomized rounding procedure can be de-randomized, and hence, we get a deterministic polynomial algorithm. Furthermore, we argue that our techniques can be used towards finding a dichotomy for the approximation of MinHOM($H$).

\subsection{Our Contributions} 
We say that a problem is not approximable if there is no polynomial-time approximation algorithm with a multiplicative guarantee unless P = NP.
Most of the minimum cost homomorphism problems are NP-hard, therefore we investigate the approximation of MinHOM($H$).
\begin{tcolorbox}[colback=white]
	{\sc Approximating Minimum Cost Homomorphism to Digraph $H$:}
	\begin{quote}
	\emph{Input:} A digraph $D$ and a vertex-mapping costs $c(x,u), x \in V(D), u \in V(H)$,\\
	\emph{Output:} A homomorphism $f$ of $D$ to $H$ with the total cost of $\sum\limits_{\small x \in V(D)} c(x,f(x))\leq \alpha\cdot OPT$, where $\alpha$ is a constant.
\end{quote}
\end{tcolorbox}
Here, $OPT$ denotes the cost of a minimum cost homomorphism of $D$ to $H$. 
Moreover, we assume that the size of $H$ is constant. Recall that we approximate the cost over real homomorphisms, rather than approximating the maximum weight of satisfied constraints, as in, say, \textsc{Max CSP}. One can show that if LHOM($H$) is not polynomial-time solvable then there is no approximation algorithm for MinHOM($H$)~\cite{arash-esa, approximation-2011}.  
\begin{observation}
If LHOM($H$) is not polynomial-time solvable, then there is no approximation algorithm for MinHOM($H$).
\end{observation}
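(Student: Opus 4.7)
The plan is to give a simple polynomial-time reduction from LHOM$(H)$ to MinHOM$(H)$ that converts any multiplicative approximation for MinHOM$(H)$ into an exact decision procedure for LHOM$(H)$. Concretely, given an instance $(D,L)$ of LHOM$(H)$ with $L(x)\subseteq V(H)$ for each $x\in V(D)$, I would define the cost function
\[
c(x,i) \;=\; \begin{cases} 0 & \text{if } i \in L(x),\\ +\infty & \text{if } i \notin L(x),\end{cases}
\]
which is permitted because the MinHOM$(H)$ framework explicitly allows $c:V(D)\times V(H)\to\mathbb{Q}_{\ge 0}\cup\{+\infty\}$. Any finite-cost homomorphism $f$ of $D$ to $H$ must satisfy $f(x)\in L(x)$ for every $x$, so it is a valid list homomorphism; conversely, every list homomorphism has cost $0$ in this instance.

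Next I would observe that if an $\alpha$-approximation algorithm existed for MinHOM$(H)$ for some constant $\alpha$, then it could be run on the constructed instance. If a list homomorphism exists, then $\mathrm{OPT}=0$, and the algorithm must return a solution of cost at most $\alpha\cdot 0 = 0$, which is necessarily a valid list homomorphism. If no list homomorphism exists, then every mapping $f$ has infinite cost, so the algorithm cannot produce any finite-cost output (and we interpret this as a ``no'' answer). In either case, the existence of a list homomorphism would be decided in polynomial time, contradicting the assumed intractability of LHOM$(H)$ unless P $=$ NP.

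For readers uncomfortable with the $+\infty$ entries interacting with the approximation ratio, I would provide an alternative reduction using only rational costs: set $c(x,i)=0$ if $i\in L(x)$ and $c(x,i)=M$ otherwise, where $M$ is a large rational chosen so that $M > \alpha\cdot |V(D)|$ (or, since all zero-cost solutions have cost $0$, any $M>0$ suffices to distinguish the two cases multiplicatively when $\mathrm{OPT}=0$). The same dichotomy between feasibility and infeasibility of the LHOM instance is preserved, and again any constant-factor approximation decides LHOM$(H)$.

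The only real subtlety, and hence the step requiring a small amount of care, is the treatment of the $\mathrm{OPT}=0$ regime: a multiplicative guarantee against $\mathrm{OPT}=0$ forces the approximate cost to also be $0$, which is exactly what the argument needs but which must be justified against any convention that would allow an additive slack. Once this is pinned down, the observation follows immediately from the assumed hardness of LHOM$(H)$.
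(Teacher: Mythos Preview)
Your argument is correct and is exactly the standard reduction the paper has in mind: the paper does not spell out a proof of this observation (it cites \cite{arash-esa,approximation-2011}), but it explicitly records the same $0/{+\infty}$ cost encoding of LHOM($H$) as an instance of MinHOM($H$) in the List Homomorphism example, and your use of the multiplicative guarantee against $\mathrm{OPT}=0$ is the intended punchline. Nothing further is needed.
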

The complexity of the LHOM problems for graphs, digraphs, and relational structures (with arity two and higher)  have been classified in \cite{feder2003bi,hell2011dichotomy,bulatov2011complexity} respectively. LHOM($H$) is polynomial-time solvable if the digraph $H$ does not contain a \emph{digraph asteroidal triple (DAT)}\footnote{The definition of DAT (Definition ~\ref{def-DAT}) is rather technical and it is not necessary to fully understand in this paper.}  as an induced sub-digraph, and NP-complete when $H$ contains a DAT~\cite{hell2011dichotomy}. 

MinHOM($H$) is polynomial-time solvable when digraph $H$ admits a $k$-min-max-ordering, a subclass of DAT-free digraphs, and otherwise,  NP-complete ~\cite{hell2012dichotomy, monoton-proper}. 

In this paper, we take an important step towards closing the gap between DAT-free digraphs and the ones that admit a $k$-min-max-ordering. First, we consider digraphs that admit a min-ordering. Digraphs that admit a min-ordering have been studied under the name of \emph{bi-arc} digraphs~\cite{hell2016bi} and \emph{signed interval} digraphs~\cite{Ross,hell2020min}. Deciding if digraph $H$ has a min-ordering and finding a min-ordering of $H$ is in P~\cite{hell2016bi}. We provide a constant factor approximation algorithm for MinHOM($H$) where $H$ admits a min-ordering. 
\begin{theorem}[Digraphs with a min-ordering]\label{bi-arc-digraph-theorem}
If digraph $H$ admits a min-ordering, then MinHOM($H$) has a constant factor approximation algorithm. 
\end{theorem}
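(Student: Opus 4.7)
The plan is to prove Theorem~\ref{bi-arc-digraph-theorem} by an LP-relaxation-and-rounding scheme that reflects the min-ordering of $H$. Fix a min-ordering $v_1 < v_2 < \cdots < v_n$ of $V(H)$, with respect to which the operation $\min$ is a conservative polymorphism of $H$. This extends the bipartite-graph approach of Hell \emph{et al.}~\cite{arash-esa} to digraphs, but the LP and rounding must respect the directed structure and the fact that out-neighborhoods need not be intervals in the min-ordering.

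First, I would formulate an LP with a variable $x_u \in [0,1]$ for each $x \in V(D)$ and $u \in V(H)$, representing the fractional probability that $x$ is mapped to $u$. The objective is $\min \sum_{x,u} c(x,u)\, x_u$. The basic constraints are $\sum_u x_u = 1$ for every $x \in V(D)$ and $x_u = 0$ whenever $c(x,u) = \infty$. For each arc $xy \in A(D)$, I would add constraints phrased in terms of the prefix sums $F_x(v_i) := \sum_{j \le i} x_{v_j}$ that reflect the staircase structure of arcs of $H$ under the min-ordering, for example monotone inequalities between $F_x$ and $F_y$ indexed by extremal out- and in-neighbors of each $v_i$. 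Every integral homomorphism satisfies these constraints, so the LP optimum is at most $OPT$.

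Next, I would sample a single threshold $t \in [0,1]$ uniformly and, for each $x$, set $f(x) = v_i$ where $i = \min\{j : F_x(v_j) \ge t\}$. This is the comonotone coupling of the marginals $(x_u)_u$ and satisfies $\Pr[f(x) = u] = x_u$, so the expected cost of $f$ equals the LP value. The core claim is that, thanks to the min-ordering, the rounded map $f$ is close to a homomorphism: any arc of $D$ that is violated can be repaired by pushing offending vertices down along the min-ordering using the $\min$ polymorphism together with an auxiliary list-homomorphism of $D$ to $H$, which exists in polynomial time since LHOM$(H)$ is tractable for bi-arc digraphs. Formalizing this repair so that each violated arc is fixed without creating new violations, and bounding the total cost blow-up by a factor depending only on $|V(H)|$, yields the claimed constant factor. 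Finally, the algorithm is derandomized by enumerating the $O(|V(D)|\,|V(H)|)$ breakpoints of the CDFs $\{F_x\}_{x \in V(D)}$ and returning the cheapest resulting valid homomorphism.

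The main obstacle is the joint design of the LP arc constraints and the repair step. The constraints must be valid---satisfied by every integral homomorphism---yet strong enough that the rounding is ``close'' to a homomorphism in a quantitative sense, and the repair must not cascade uncontrollably across $D$. Because $\min$ is a polymorphism only when applied to two true homomorphisms, the argument must engineer the rounded assignment and a controlled auxiliary map so that applying $\min$ coordinatewise produces a genuine homomorphism of cost at most $O(|V(H)|) \cdot OPT$; controlling this blow-up in the presence of infinite list costs is the technical heart of the proof.
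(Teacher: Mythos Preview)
Your high-level architecture---LP relaxation with CDF variables, single-threshold rounding, a repair phase, and derandomization over the breakpoints---matches the paper. But two of the load-bearing pieces are missing or wrong, and without them the argument does not close.

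First, you do not specify the LP arc constraints, and the right ones are not obvious. With only a min-ordering (not min-max), threshold rounding of the natural prefix-sum constraints does \emph{not} yield a homomorphism to $H$. The paper handles this by first completing $H$ to a digraph $H'$ on the same vertex set by adding a set $E'$ of ``phantom'' arcs so that the given ordering becomes a min-max-ordering of $H'$ (Observation~\ref{observation1}). The basic constraints \ref{C1}--\ref{C7} then guarantee that threshold rounding produces a homomorphism to $H'$. On top of that, the paper adds the extra inequalities \ref{C8}--\ref{C12}, one pair per phantom arc and per arc of $D$, which encode that if an arc of $D$ lands on $a_ia_j\in E'$ after rounding, then the LP has already reserved enough mass on true in- (or out-) neighbours below $a_i$ (or $a_j$). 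These inequalities are what make the cost of repair analyzable; your ``monotone inequalities indexed by extremal neighbours'' is too vague to supply them.

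Second, your proposed repair---take $\min$ coordinatewise with an auxiliary list homomorphism---does not give cost control, and you correctly flag that $\min$ is only a polymorphism of two genuine homomorphisms, which your rounded map is not. The paper's repair is different in kind: it introduces a \emph{second} random variable $Y$ and, whenever an arc of $D$ is mapped to some $a_ia_j\in E'$, it \emph{shifts} the image of the offending endpoint to an earlier true neighbour $a_t$, choosing $a_t$ with probability proportional to the LP increment $v_t-v_{t+1}$. The extra constraints \ref{C8}--\ref{C12} are exactly what bound the probability of each such shift by $v_t-v_{t+1}$ (Lemma~\ref{pr-bound}), and a BFS propagation together with a counting argument over at most $|V(H)|$ missing neighbours per level yields the $|V(H)|^2$ factor. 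Your scheme has no analogue of this proportional-to-LP shifting, so there is nothing tying the repaired cost back to the LP value. This is the genuine gap: the ``technical heart'' you identify is precisely the randomized \textsc{Shift} procedure plus the bespoke LP constraints that certify it, and your proposal does not supply either.
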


Sections \ref{min-to-max}, \ref{sec-Approx-ALG} are dedicated to the proof of Theorem \ref{bi-arc-digraph-theorem}.  
In section \ref{extended-min-section}, we turn our attention to digraphs with $k$-min-orderings, for integer $k>1$. They are also called digraphs with \emph{extended $X$-underbar}~\cite{bagan,woginger,gary}. It was shown in~\cite{woginger} that if $H$ has the $X$-underbar property, then the  HOM($H$) problem is polynomial-time solvable. In Lemmas~\ref{List-extended-min-ordering-1} and \ref{List-extended-min-ordering-2}, we show that if $H$ admits a $k$-min-ordering, then $H$ is a DAT-free digraph, and provide a simple proof that LHOM($H$) is polynomial-time solvable. Note that in general if $H$ is a DAT-free digraph, then LHOM($H$) is polynomial-time solvable \cite{hell2011dichotomy}. Finally, we have the following theorem. 

\begin{theorem}[Digraphs with a $k$-min-ordering]\label{k-arc-theorem}
If digraph $H$ admits a $k$-min-ordering for some integer $k>1$, then MinHOM($H$) has a constant factor approximation algorithm. 
\end{theorem}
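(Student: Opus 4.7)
My plan is to extend the LP-plus-rounding framework used for Theorem \ref{bi-arc-digraph-theorem} by exploiting the additional combinatorial layer supplied by a $k$-min-ordering, namely a partition of $V(H)$ into $k$ classes together with linear orderings that generalize the single min-ordering used in the bi-arc digraph case, so that when restricted to each bipartite piece determined by two classes one recovers a min-ordering. In spirit, I would first decide, for every vertex of the input digraph $D$, which class of $H$ it is mapped to, and then run the single-class min-ordering rounding of Theorem \ref{bi-arc-digraph-theorem} on the bipartite pieces that remain. Lemmas \ref{List-extended-min-ordering-1} and \ref{List-extended-min-ordering-2} guarantee that such $H$ is DAT-free and that LHOM($H$) is polynomial, so list-consistency checks used inside the rounding stay in polynomial time.

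Concretely, I would set up an LP with variables $x_{v,a}$ for $v\in V(D)$, $a\in V(H)$, objective $\sum_{v,a}c(v,a)\,x_{v,a}$, and probability constraints $\sum_{a} x_{v,a}=1$. For every arc $uv\in A(D)$ and every pair of classes $(V_i,V_j)$, I would add the "min-closed" inequalities that make the bi-arc LP valid, restricted to the bipartite piece of $H$ between $V_i$ and $V_j$. Because a $k$-min-ordering induces a genuine min-ordering on each such piece, the inequalities from Theorem \ref{bi-arc-digraph-theorem} are sound when carried over, and the fractional optimum lower-bounds $OPT$.

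The rounding then proceeds in two stages. Stage one aggregates the LP solution into class-marginals $y_{v,i}=\sum_{a\in V_i}x_{v,a}$ and produces a class-assignment $\pi:V(D)\to[k]$ that is arc-compatible with the quotient of $H$ obtained by collapsing each $V_i$ to a single vertex; since $k$ and $|V(H)|$ are constants and the quotient is itself a target for which LHOM is polynomial, finding such a $\pi$ reduces to a polynomial LHOM instance on a constant-size target. Stage two conditions on $\pi$: for each arc $uv\in A(D)$, the rescaled values $x_{v,a}/y_{v,\pi(v)}$ form a feasible fractional solution to the bi-arc LP restricted to the piece $H[V_{\pi(u)},V_{\pi(v)}]$, and I apply the min-ordering rounding of Theorem \ref{bi-arc-digraph-theorem} piece-by-piece to obtain an actual homomorphism of $D$ to $H$. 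A factor of at most $|V(H)|$ is lost at each stage, giving the $|V(H)|^2$ ratio claimed in the abstract.

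The principal obstacle is the coupling between the two stages: the class-assignment $\pi$ must simultaneously extend to an honest homomorphism and leave the residual LP with cost not much larger than $OPT$. I expect to handle this by sampling $\pi(v)$ proportionally to $y_{v,\cdot}$ conditioned on arc-feasibility, and bounding the cost blow-up via a standard conditional-expectation argument; this same argument then derandomizes the overall algorithm into a deterministic polynomial-time procedure, exactly as in the bi-arc digraph case.
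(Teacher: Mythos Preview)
Your high-level decomposition---first fix which class $V_i$ each vertex of $D$ lands in, then exploit the min-ordering on each bipartite slab $H[V_i,V_{i+1}]$---is exactly the paper's strategy. But both stages of your execution diverge from the paper, and Stage~2 contains a real gap.

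For Stage~1 you propose to sample the class assignment $\pi$ from LP marginals. The paper does something much simpler and safer: by definition of a $k$-min-ordering, \emph{every} arc of $H$ goes from some $V_i$ to $V_{i+1}$, so the quotient is precisely $\overrightarrow{C_k}$. Hence if $D$ is weakly connected there are at most $k$ homomorphisms $D\to\overrightarrow{C_k}$ (one per image of a fixed anchor vertex), and the paper simply enumerates all $k$ of them, builds and solves a dedicated LP for each, and returns the best. No sampling, no conditional-expectation argument, and no risk of producing an arc-infeasible $\pi$. (Your phrase ``every pair of classes $(V_i,V_j)$'' suggests you may have missed that only circularly consecutive pairs carry arcs.)

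The genuine gap is in Stage~2. Once $\pi$ is fixed, the bipartite slabs $H[V_0,V_1],H[V_1,V_2],\dots,H[V_{k-1},V_0]$ are \emph{not} independent instances: the class $V_\ell$ is simultaneously the right side of one slab and the left side of the next, so a vertex $v\in U_\ell$ must receive a single image in $V_\ell$ that is consistent with both its in-arcs from $U_{\ell-1}$ and its out-arcs to $U_{\ell+1}$. Rounding ``piece-by-piece'' with the Theorem~\ref{bi-arc-digraph-theorem} procedure would in general assign $v$ two different images, and you give no mechanism to reconcile them. The paper avoids this by writing, for each fixed $\pi$, a \emph{single} LP over all classes at once (cumulative variables $v_i$ are defined only for $a_i\in V_{\pi(v)}$, with constraints (A1)--(A12) linking consecutive slabs), rounding \emph{all} variables with one common threshold $X$, and then running one global \textsc{Shift} whose BFS is allowed to propagate across slab boundaries around the cycle. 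The $|V(H)|^2$ ratio then comes directly from the same shift analysis as in Lemma~\ref{pr-bound}, not from multiplying two separate stage losses.
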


Considering graphs, Feder \emph{et al.,}~\cite{feder2003bi} proved that LHOM($H$) is polynomial-time solvable if $H$ is a \emph{bi-arc} graph, and is NP-complete otherwise. In the same paper (\cite{feder2003bi}, Theorem 4.2), they mentioned that a graph $H$ is a bi-arc graph if and only if it admits a conservative majority polymorphism. In Section \ref{sec-majority}, we show that the same dichotomy classification holds in terms of approximation. 

\begin{theorem}[Dichotomy for graphs]
There exists a constant factor approximation algorithm for MinHOM(H) if $H$ is a bi-arc graph, otherwise, MinHOM(H) is inapproximable unless P = NP. 
\end{theorem}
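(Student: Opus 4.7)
The theorem has two directions, which I would handle separately.

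The inapproximability direction is essentially immediate from prior work. By Feder, Hell and Huang's dichotomy for LHOM on graphs~\cite{feder2003bi}, a graph $H$ is a bi-arc graph if and only if LHOM($H$) is polynomial-time solvable, and is NP-complete otherwise. Combined with the Observation recorded in the introduction --- that NP-hardness of LHOM($H$) rules out any multiplicative approximation of MinHOM($H$) unless P $=$ NP --- this yields inapproximability whenever $H$ fails to be a bi-arc graph.

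For the algorithmic direction I would adapt the LP-and-round framework used in Sections~\ref{min-to-max}--\ref{sec-Approx-ALG} and~\ref{extended-min-section} for bi-arc and $k$-arc digraphs, replacing the role of the min-ordering by the conservative majority polymorphism $m$ of $H$ that is guaranteed by the Feder--Hell--Huang characterization of bi-arc graphs. The plan is to write a natural LP relaxation of MinHOM($H$) with vertex variables $x_{v,i}$ for $v \in V(G)$, $i \in V(H)$ and edge variables $y_{uv,ij}$ for $uv \in E(G)$, $ij \in E(H)$, with the standard marginal constraints $\sum_{i} x_{v,i}=1$ and $\sum_{j} y_{uv,ij}=x_{v,i}$ and objective $\min \sum_{v,i} c(v,i)\,x_{v,i}$; the LP value lower-bounds $\mathrm{OPT}$. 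The rounding then exploits the two defining properties of $m$: edges of $H$ are closed under componentwise application of $m$, so three homomorphisms combine via $m$ into a homomorphism; and conservativity ($m(a,b,c) \in \{a,b,c\}$) keeps the cost charged to each vertex inside the support of the LP marginals at that vertex, so that the composite mapping's cost is controlled by the LP objective.

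The crux is implementing the rounding so that the output is both a genuine homomorphism and has cost at most $2|V(H)|$ times the LP value. The main obstacle is that independent marginal sampling typically does not yield homomorphisms, so $m$ cannot be applied directly to three independent samples of $x_{v,\cdot}$; instead the samples must be coupled through the edge variables $y_{uv,ij}$ to be edge-consistent on every edge of $G$, and residual violations --- arising because the local edge couplings are not in general jointly realisable as a single global distribution --- must be resolved by iterated local applications of $m$. The bi-arc representation of $H$, equivalently the 2-SAT-style propagation underlying the Feder--Hell--Huang tractability proof for LHOM, is what guarantees that at most $|V(H)|$ rounds of local correction suffice and that each round inflates the per-vertex cost by at most a constant, giving the claimed $2|V(H)|$ factor. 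Derandomization by the method of conditional expectations, exactly as in the digraph cases of Sections~\ref{min-to-max}--\ref{extended-min-section}, converts the randomized rounding into a deterministic polynomial-time algorithm.
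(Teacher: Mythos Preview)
Your inapproximability direction is correct and matches the paper's reasoning exactly.

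For the algorithmic direction, however, there is a genuine gap. Your plan is to use the conservative majority polymorphism $m$ directly as a rounding device on top of the basic LP with vertex and edge marginals, combining three sampled assignments via $m$. The difficulty you yourself flag is the real one, and your proposal does not resolve it: majority combines three \emph{homomorphisms} into a homomorphism, but none of your three samples is a homomorphism, nor are the samples coupled through the $y_{uv,ij}$ variables edge-consistent on all of $G$ simultaneously. The sentence ``the bi-arc representation \dots\ guarantees that at most $|V(H)|$ rounds of local correction suffice and that each round inflates the per-vertex cost by at most a constant'' is an assertion, not an argument; nothing in the Feder--Hell--Huang LHOM algorithm or in the 2-SAT propagation you allude to supplies a cost bound for iterated local applications of $m$, and you have not said what ``local correction'' concretely is. As written, the proposal does not yield an algorithm, let alone the stated $2|V(H)|$ factor.

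The paper takes a completely different route and never uses the majority polymorphism in the algorithm. It passes from $H$ to the associated bipartite graph $H^*$ (vertex set $I \cup I'$, with $ij' \in E(H^*)$ iff $ij \in E(H)$), observes that $H$ bi-arc implies $H^*$ is co-circular-arc and hence admits a \emph{min-ordering}, and then reuses the min-ordering LP of Sections~\ref{sec-minmax}--\ref{min-to-max} for the instance $(G^*,H^*)$ with one extra linear constraint $u_i-u_{i+1}=u'_{\pi(i)}-u'_{\pi(i)+1}$ tying the two copies of each vertex together. Rounding is the same threshold-and-shift procedure as in Section~\ref{sec-Approx-ALG}, followed by a second BFS-style stage that repairs ``unstable'' vertices whose two copies disagree; the coupling constraint is what makes the cost of this second stage chargeable back to the LP. The $2|V(H)|$ factor comes from this two-stage analysis, not from any use of $m$.
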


By combining the approach for obtaining the dichotomy in the graph case, together with the idea of getting an approximation algorithm for digraphs admitting a min-ordering, we might be able to achieve a constant factor approximation algorithm for MinHOM($H$) when $H$ is DAT-free. 
\begin{conjecture}
MinHOM($H$) admits a constant factor approximation algorithm when $H$ is a DAT-free digraph, otherwise, MinHOM($H$) is not approximable unless P = NP. 
\end{conjecture}

Our constant factors depend on the size of $H$. However, the implementation of the LP and the  ILP would yield a small integrality gap (Section \ref{sec-experiment}). This indicates perhaps a better analysis of the performance of our algorithm is possible. 

\begin{open-problem}
For which digraphs MinHOM($H$) is approximable within a constant factor independent of size of $H$?
\end{open-problem}

\section{Definitions and Preliminaries}

Complexity and approximation of the minimum cost homomorphism problems, and in general the constraint satisfaction problems, are often studied under the existence of \emph{polymorphisms}~\cite{barto2017polymorphisms}.
A polymorphism of $H$ of arity $k$ is a mapping $f$ from the set of $k$-tuples over $V(H)$ to $V(H)$ such that if $x_iy_i \in A(H)$ for $i = 1, 2, \dots, k$, then $f(x_1,x_2,\dots,x_k)f(y_1,y_2,\dots,y_k) \in A(H)$. If $f$ is a polymorphism of $H$ we also say that $H$ admits $f$. A polymorphism $f$ is \emph{idempotent} if it satisfies $f(x,x,\dots,x) = x$ for all $x\in V(H)$, and is \emph{conservative} if $f(x_1,x_2,\dots,x_k)\in \{x_1,x_2,\dots,x_k\}$.
A conservative \emph{semilattice} polymorphism is a conservative binary polymorphism that is associative, idempotent, commutative.
A conservative \emph{majority} polymorphism $\mu$ of $H$ is a conservative ternary polymorphism such that  $\mu(x,x,y)=\mu(x,y,x)=\mu(y,x,x)=x$ for all $x,y\in V(H)$. 

A conservative semilattice polymorphism of $H$ naturally defines a binary relation $x \leq y$ on the vertices of $H$ by $x \leq y$ if and only if $f(x,y)=x$; by associative, the relation $\leq$ is a linear order on $V(H)$, which we call a {\em min-ordering} of $H$. 
\begin{definition}
The ordering $v_1 < v_2 < \dots < v_n$ of $V(H)$ is a
\begin{itemize}
\item[--]  \emph{min-ordering} if and only if $uv\in A(H), u'v'\in A(H)$ and $u<u', v'<v$ implies that $uv' \in A(H)$;
\item [--] \emph{max-ordering} if and only if $uv \in A(H), u'v' \in A(H)$ and $u < u', v' < v$ implies that $u'v \in A(H)$;
\item [--] \emph{min-max-ordering} if and only if $uv \in A(H), u'v' \in A(H)$ and $u < u', v' < v$ implies that $uv',u'v \in A(H)$.
\end{itemize}
\end{definition}

For a bipartite graph $H=(B,W)$ let $\overrightarrow{H}$ be the digraph obtained by orienting all the edges of $H$ from $B$ to $W$. If $\overrightarrow{H}$ admits a min-ordering then we say $H$ admits a min-ordering. This is equivalent to the following definition of min-ordering  for bipartite graphs. We say $H$ admits a min-ordering if there is an ordering $x_1,x_2,\dots,x_p$ of the vertices in $B$ and an ordering 
$y_1,y_2,\dots,y_q$ of the vertices in $W$, so that whenever $x_iy_j \in E$ and $x_ry_s$ (with $i<r$ and $s<j$) are edges of $H$ then 
$x_ry_s$ is an edge of $H$. Min-max-ordering for bipartite graphs is defined similarly. 

It is worth mentioning that, a bipartite graph $H$ admits a conservative majority if and only if it admits a min-ordering~\cite{arash-esa}. Moreover, the complement of $H$ is a circular arc graph with clique cover two \cite{feder2003bi}. 

\begin{definition}
Let $H=(V,E)$ be a digraph that admits a homomorphism $f : V(H) \rightarrow  \overrightarrow{C_k}$ (here  $\overrightarrow{C_k}$ is the induced directed cycle on $\{0,1,2,\dots,k-1\}$( i.e., arc set $\{(01,12,23,...,(k-2)(k-1),(k-1)0\}$). Let $V_i=f^{-1}(i)$, $0 \le i \le k-1$. 
\begin{itemize}
\item [--]A $k$-{\em min-ordering} of $H$ is a linear ordering $<$ of the vertices of $H$, so that $<$ is a min-ordering on the subgraph induced by any
two circularly consecutive $V_i, V_{i+1}$ (subscript addition modulo $k$).
\item [--]A $k$-{\em min-max-ordering} of $H$ is a linear ordering $<$ of the vertices of $H$, so
that $<$ is a min-max-ordering on the subgraph induced by any two circularly consecutive $V_i, V_{i+1}$ (subscript addition modulo $k$).
\end{itemize}
\end{definition}

We close this section by giving a formal definition of a digraph asteroidal triple (DAT). The definition is rather technical and it is not necessary to fully understand it in this paper. We
give a brief discussion on DAT for the sake of completeness.

\begin{definition}[Invertible Pair]
Let $H$ be a digraph. Define $\widehat{H^k}$ to be the digraph with the vertex set $\{ (a_1,a_2,\dots,a_k) \mid a_i \in V(H), 1 \le i \le k\}$ and the arc set 

$A(\widehat{H^k})=\{ 
(a_1,\dots,a_k)(b_1,\dots,b_k) \mid a_ib_i \in A(H),  1 \le i \le k, 
 a_1b_j \not\in A(H) \forall j,  2 \le j \le k  \} \cup  \{(a_1,\dots,a_k)(b_1,\dots,b_k) \mid b_ia_i \in A(H),  1 \le i \le k, 
 b_ja_1 \not\in A(H) \forall j,  2 \le j \le k  \}$.

We say $(x,y)$ is an invertible pair if $(x,y),(y,x)$ belong to the same strong component of $\widehat{H^2}$. 
\end{definition}
\begin{definition}[DAT]\label{def-DAT}
A digraph asteroidal triple of $H$ is an induced sub-digraph of $\widehat{H^3}$, on three directed paths $P_1,P_2,P_3$ where $P_1$ goes from $(a,b,c)$ to $(\alpha,\beta,$ $\beta)$,
$P_2$ goes from $(b,a,c)$ to $(\alpha,\beta,\beta)$, and $P_3$ goes from $(c,a,b)$ to $(\alpha,\beta,\beta)$ and $(\alpha,\beta)$ is an invertible pair.
\end{definition}
If $H$ contains a DAT then all three pairs $(a,b),(b,c),(c,a)$ are invertible. Note that an invertible pair is an obstruction to existence of min-orderings \cite{feder2003bi, arash-esa}. Moreover, $H$ does not admit a conservative majority polymorphism $g$ because of the directed path $P_1$, $g(a,b,c) \ne a$, and because of $P_2$, $g(a,b,c) \ne b$, and finally because of $P_3$, $g(a,b,c) \ne c$. Therefore, the value of $g(a,b,c)$ can not be any of the $a,b,c$ \cite{hell2011dichotomy}. 

\section{LP for Digraphs with a min-max-ordering}\label{sec-minmax}
 
Before presenting the LP, we give a procedure to modify the lists associated to the vertices of $D$.
To  each vertex  $x \in D$, we associate a list $L(x)$ that initially contains $V(H)$. Think of $L(x)$ as the set of possible images for $x$ in a homomorphism from $D$ to $H$. Apply the \emph{arc consistency} procedure as follows.  Take an arbitrary arc $xy \in A(D)$ ($yx \in A(D)$) and let $a \in L(x)$. 
If there is no out-neighbor (in-neighbor)
of $a$ in $L(y)$ then remove $a$ from $L(x)$. Repeat this until a list becomes empty or no more changes can be made. Note that if we end up with an empty list after arc consistency then there is no homomorphism of $D$ to $H$.

Let $a_1,a_2,a_3,\dots,a_p$ be a min-max-ordering $<$ of the target digraph $H$. Define $\ell^+(i)$ to be the smallest subscript $j$ such that $a_j$ is an out-neighbor of $a_i$, if such $j$ exists. Furthermore,  define $\ell^-(i)$ to be the smallest subscript $j$ such that $a_j$ is an
in-neighbor of $a_i$, if such $j$ exists.

Consider the following linear program. For every vertex $v$ of $D$ and every vertex $a_i$ of $H$ define a variable $v_{i}$.  We also define 
a variable $v_{p+1}$ for every $v \in D$ whose value is set to zero.
\begin{tcolorbox}[colback=white]
\begin{center}
\begin{tabular}{r l l l}
    \text{min}  &   $\sum\limits_{v,i} c(v,a_i)(v_i-v_{i+1})$ & & \\
    \text{subject to:}  &   $v_i\geq  0$ & &\newtag{(C1)}{C1}\\
    & $v_1=  1$  &&  \newtag{(C2)}{C2} \\
    & $v_{p+1} =  0$ && \newtag{(C3)}{C3}\\
    & $v_{i+1} \leq  v_{i}$ &&\newtag{(C4)}{C4}\\
    & $v_{i+1} =  v_{i}$   &\text{if } $a_i \not\in L(v)$&\newtag{(C5)}{C5}\\
    & $u_{i} \leq  v_{l^+(i)}$  &$\forall uv \in A(D)$&\newtag{(C6)}{C6}\\
    & $v_{i} \leq  u_{l^-(i)}$   &$\forall uv \in A(D)$&\newtag{(C7)}{C7}
\end{tabular}
\end{center}
\end{tcolorbox}

Let us denote the set of constraints of the above LP by $\mathcal{S}$. In what follows, we prove that there is a one-to-one correspondence between integer solutions of $\mathcal{S}$ and homomorphisms from $D$ to $H$ when $H$ admits a min-max-ordering.
\begin{theorem}\label{LP-minmax}
If digraph $H$ admits a min-max-ordering,  then there is a one-to-one correspondence between homomorphisms of $D$ to $H$ and integer solutions of $\mathcal{S}$.
\end{theorem}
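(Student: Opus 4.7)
The plan is to exhibit an explicit bijection between integer solutions of $\mathcal{S}$ and homomorphisms $f\colon D\to H$, and then verify that the LP constraints correspond exactly to the homomorphism condition together with the list-consistency enforced by \ref{C5}. In any integer solution, \ref{C1}, \ref{C2} and \ref{C4} force each $v_i$ to lie in $\{0,1\}$, and together with \ref{C3} the sequence $v_1\ge v_2\ge\cdots\ge v_{p+1}$ is a non-increasing $0/1$ string that drops from $1$ to $0$ exactly once. For each $v\in V(D)$ there is therefore a unique jump index $i^*(v)\in\{1,\dots,p\}$ with $v_{i^*(v)}=1$ and $v_{i^*(v)+1}=0$. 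The bijection sends an integer solution to the map $f(v):=a_{i^*(v)}$, and conversely sends a homomorphism $f$ with $f(v)=a_{i(v)}$ to the indicator sequence $v_i=1$ for $i\le i(v)$ and $v_i=0$ otherwise; since only the single jump difference $v_{i^*(v)}-v_{i^*(v)+1}=1$ contributes to the objective, its value collapses to $\sum_{v}c(v,f(v))$.

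For the forward direction, \ref{C1}--\ref{C4} are immediate. \ref{C5} holds because the only index at which $v_{i+1}\neq v_i$ is $i=i(v)$, and $f(v)=a_{i(v)}$ lies in $L(v)$ by soundness of arc consistency. For \ref{C6} with arc $uv\in A(D)$ the only nontrivial case is $u_i=1$, i.e.\ $a_i\le f(u)$, where one must verify $a_{\ell^+(i)}\le f(v)$. If $a_i=f(u)$ this is immediate, since $f(v)$ is an out-neighbor of $a_i$ and $\ell^+(i)$ indexes the smallest such neighbor. If $a_i<f(u)$, assume for contradiction that $f(v)<a_{\ell^+(i)}$; applying the min-ordering property to the arcs $a_i a_{\ell^+(i)}$ and $f(u)f(v)$ yields $a_i f(v)\in A(H)$, contradicting the minimality of $\ell^+(i)$. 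Constraint \ref{C7} follows by a symmetric argument on in-neighbors.

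For the backward direction, \ref{C5} guarantees $f(v)\in L(v)$. Fix an arc $uv\in A(D)$ and set $i=i^*(u)$, $j=i^*(v)$. Applying \ref{C6} at index $i$ gives $v_{\ell^+(i)}\ge u_i=1$, so $\ell^+(i)\le j$; applying \ref{C7} at index $j$ gives $u_{\ell^-(j)}\ge v_j=1$, so $\ell^-(j)\le i$. Hence both $a_i a_{\ell^+(i)}$ and $a_{\ell^-(j)} a_j$ are arcs of $H$, with $a_{\ell^-(j)}\le a_i$ and $a_{\ell^+(i)}\le a_j$. Feeding this pair of arcs into the max-ordering part of the hypothesis produces $a_i a_j\in A(H)$, while the degenerate equality cases deliver the desired arc directly; hence $f$ is a homomorphism.

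The main obstacle I expect is this last step: \ref{C6} and \ref{C7} separately certify only the existence of \emph{some} out-neighbor of $a_i$ and \emph{some} in-neighbor of $a_j$ in the correct positions, and it is precisely the max-ordering component of the hypothesis that welds these two witnesses into the specific arc $a_i a_j=f(u)f(v)$. This is exactly why the theorem requires a full min-max-ordering rather than a bare min-ordering.
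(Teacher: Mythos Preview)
Your proof is correct and follows essentially the same route as the paper's own argument: the encoding $f(v)=a_t\mapsto (v_i=\mathbf{1}[i\le t])$, the use of the min-ordering to verify \ref{C6}--\ref{C7} in the forward direction, and the use of the max-ordering on the pair of arcs $a_{i^*(u)}a_{\ell^+(i^*(u))}$ and $a_{\ell^-(i^*(v))}a_{i^*(v)}$ to recover the homomorphism in the backward direction are exactly the steps the paper takes. Your treatment is in fact a bit cleaner in that you separate out the equality cases $\ell^+(i)=j$ and $\ell^-(j)=i$ explicitly, whereas the paper's text writes strict inequalities where non-strict ones are what actually follow; your closing remark about why a bare min-ordering would not suffice for the backward direction is also a nice addition not present in the original.
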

\begin{proof}
For homomorphism $f:D \to H$, if $f(v)=a_t$ we set $v_i=1$ for all $i \leq t$, otherwise we set $v_i=0$. We set $v_1=1$ and $v_{p+1}=0$ for all $v\in V(D)$. Now all the variables are nonnegative and we have $v_{i+1}\leq v_i$. Note that if $a_i\not\in L(v)$ then $f(v)\neq a_i$ and we have $v_i-v_{i+1}=0$. It remains to show that $u_i\leq v_{l^+(i)}$ for every $uv$ arc in $D$. Suppose for contradiction that $u_i=1$ and $v_{l^+(i)}=0$ and let $f(u)=a_r$ and $f(v)=a_s$. This implies that $u_r=1$, whence $i\leq r$; and $v_s=1$, whence $s<l^+(i)$. Since $a_ia_{l^+(i)}$ and $a_ra_s$ both are arcs of $H$ with $i \leq r$ and $s<l^+(i)$, the fact that $H$ has a min-ordering implies that $a_ia_s$ must also be an arc of $H$, contradicting the definition of $l^+(i)$. The proof for $ v_{i}  \leq  u_{l^-(i)}$ is analogous. 

Conversely, if there is an integer solution for $\mathcal{S}$, we define a homomorphism $f$ as follows: we let $f(v) = a_i$ when $i$ is the largest subscript with $v_i = 1$. We prove that
this is indeed a homomorphism by showing that every arc of $D$ is mapped to an arc of $H$. Let $uv$ be an arc of $D$ and assume $f(u)=a_r$, $f(v)=a_s$. We show that $a_ra_s$ is an arc in $H$. Observe that $1=u_r\leq v_{l^+(r)}\leq 1$ and $1=v_s\leq u_{l^-(s)}\leq 1$, therefore we must have $v_{l^+(r)}=u_{l^-(s)}=1$. Since $r$ and $s$ are the largest subscripts such that $u_r=v_s=1$ then $l^+(r)\leq s$ and $l^-(s)\leq r$. Since $a_ra_{l^+(r)}$ and $a_{l^-(s)}a_s$ are arcs of $H$, we must have the arc $a_ra_s$, as $H$ admits a max-ordering.

Furthermore, $f(v) = a_i$ if and only if $v_i = 1$ and $v_{i+1} = 0$, so, $c(v, a_i)$
contributes to the sum if and only if $f(v) = a_i$.
\end{proof}
We have translated the minimum cost homomorphism problem to a linear program. In fact, this linear program corresponds to a minimum cut problem in
an auxiliary network, and can be solved by network flow algorithms \cite{mincostungraph,approximation-2011}. In \cite{arash-esa}, a similar result to Theorem \ref{LP-minmax} was proved for the MinHOM(H) problem on undirected graphs when target the graph $H$ is bipartite and admits a min-max-ordering. We
shall enhance the above system $\mathcal{S}$ to obtain an approximation algorithm for the
case where $H$ is only assumed to have a min-ordering.

\section{LP for Digraphs with a min-ordering}\label{min-to-max}

In the rest of the paper assume lists are not empty. Moreover, non-empty lists guarantee a homomorphism when $H$ admits a min-ordering. For the sake of completeness we present the proof of the following lemma. The argument is simple and perhaps could have appeared in earlier literature.  
\begin{lemma}[\cite{hombook}]
\label{lemma-hom}
Let $H$ be a digraph that admits a min-ordering. If all the lists are non-empty after arc consistency, then there exists a homomorphism from $D$ to $H$.
\end{lemma}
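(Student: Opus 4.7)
The plan is to exploit the min-ordering of $H$ by defining the homomorphism in the greediest possible way: for each $x \in V(D)$, set $f(x)$ to be the minimum element of $L(x)$ with respect to a fixed min-ordering $<$ of $V(H)$. This is well defined because the arc consistency procedure terminated with every list non-empty. The task then reduces to verifying that $f$ respects every arc of $D$.

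Fix an arbitrary arc $xy \in A(D)$, and let $a = f(x) = \min L(x)$ and $b = f(y) = \min L(y)$. I would like to show $ab \in A(H)$. Arc consistency gives me two witnesses for free: since $a \in L(x)$ was not pruned, there is some $b' \in L(y)$ with $ab' \in A(H)$; and since $b \in L(y)$ was not pruned (applying consistency on the incoming side for the arc $xy$), there is some $a' \in L(x)$ with $a'b \in A(H)$. By the minimality in the definition of $f$, I have $a \leq a'$ and $b \leq b'$.

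Now I would do a short case analysis. If $a = a'$, then $ab = a'b \in A(H)$ directly; if $b = b'$, then $ab = ab' \in A(H)$ directly. Otherwise $a < a'$ and $b < b'$, and I apply the min-ordering property to the arcs $ab' \in A(H)$ and $a'b \in A(H)$: with the identifications $u=a,\, v=b',\, u'=a',\, v'=b$ one has $u < u'$ and $v' < v$, so the min-ordering axiom forces $uv' = ab \in A(H)$. Either way, $f(x)f(y) \in A(H)$, which is exactly what we wanted.

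The argument has no real obstacle: the only nontrivial step is recognizing that arc consistency simultaneously provides both a forward witness for $a$ and a backward witness for $b$, and that together these two witnesses are exactly the configuration to which the min-ordering axiom applies. One cosmetic point to be careful about is that the arc consistency step was described for both out-neighbours and in-neighbours, so the two witnesses $a'$ and $b'$ are indeed both guaranteed after the procedure has stabilized; otherwise one would need to symmetrize the procedure explicitly. Since the same construction works uniformly for every arc $xy \in A(D)$, the map $f$ is a homomorphism from $D$ to $H$, completing the proof.
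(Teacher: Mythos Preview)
Your proof is correct and is essentially identical to the paper's argument: both define $f(x)=\min L(x)$ under the min-ordering, invoke arc consistency to obtain the forward witness $b'$ and backward witness $a'$, and then apply the min-ordering axiom to the pair of arcs $ab'$, $a'b$ to conclude $ab\in A(H)$. Your explicit case split for $a=a'$ or $b=b'$ is slightly more careful than the paper, since the min-ordering axiom as stated uses strict inequalities, but otherwise the two proofs coincide.
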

\begin{proof}
Let $a_1,a_2,\dots,a_p$ be a min-ordering of $H$.  
For every vertex $x$ of $D$, define $f(x)=a_i$ where $a_i$ is the smallest element (according to the ordering) in $L(x)$.  
We show that $f$ is a homomorphism from $D$ to $H$. Let $xy$ be an arc of $D$. Suppose $f(x)=a_i$ and $f(y)=a_j$. Because of the arc-consistency, there exist $a_{j'}$ in $L(y)$ such that $a_ia_{j'} \in A(H)$ and there exists $a_{i'} \in L(x)$ 
such that $a_{i'}a_j \in A(H)$. Note that $j \le j'$ and $i \le i'$. Since $a_1,a_2,\dots,a_p$ is a min-ordering, then $a_ia_j \in A(H)$ and $f(x)f(y) \in A(H)$. 
\end{proof}
Suppose $a_1, a_2, \cdots, a_p$ is a min-ordering of $H$.
Let $E'$ denote the set of all the pairs $(a_i,a_j)$ such
that $a_ia_j$ is not an arc of $H$, but there is an arc $a_ia_{j'}$ of $H$ with $j' < j$ and an arc $a_{i'}a_j$
of $H$ with $i' < i$. Let $E=A(H)$ and define $H'$ to be the digraph with vertex set $V(H)$ and arc set
$E \cup E'$. Note that $E$ and $E'$ are disjoint sets.

\begin{observation}\label{observation1}
The ordering $a_1, a_2, \cdots, a_p$ is a min-max-ordering of $H'$.
\end{observation}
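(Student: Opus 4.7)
The goal is to check that, for any two arcs $a_ia_j, a_{i'}a_{j'} \in A(H')$ with $i<i'$ and $j'<j$, both $a_ia_{j'}$ and $a_{i'}a_j$ lie in $A(H')=E\cup E'$. I plan to record the following elementary consequence of the definition of $H'$: for every arc $a_pa_q\in A(H')$ there exist $r\le q$ and $s\le p$ with $a_pa_r,\,a_sa_q\in A(H)$ (the $E$-case is covered by $r=q,s=p$, and the $E'$-case is exactly the definition of $E'$). Applying this to the two given arcs, I obtain four arcs in $H$:
\begin{equation*}
a_ia_k,\ a_{i'}a_{k'},\ a_ma_j,\ a_{m'}a_{j'}\in A(H),
\qquad k\le j,\ k'\le j',\ m\le i,\ m'\le i'.
\end{equation*}

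For the max-ordering direction I show $a_{i'}a_j\in A(H')$ and this is the easy case: if $a_{i'}a_j\notin A(H)$, then $k'\le j'<j$ gives a witness $a_{i'}a_{k'}\in A(H)$ with $k'<j$, and $m\le i<i'$ gives a witness $a_ma_j\in A(H)$ with $m<i'$, so $a_{i'}a_j\in E'$ directly from the definition of $E'$.

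For the min-ordering direction I show $a_ia_{j'}\in A(H')$, which is the substantive case. Assuming $a_ia_{j'}\notin A(H)$, I will produce the two witnesses required for $a_ia_{j'}\in E'$ by combining the four arcs above with the min-ordering of $H$. For the ``forward'' witness, I compare $k$ and $k'$: if $k'<k$, min-ordering applied to $a_ia_k$ and $a_{i'}a_{k'}$ (with $i<i'$, $k'<k$) yields $a_ia_{k'}\in A(H)$, while if $k\le k'$ the arc $a_ia_k$ already satisfies $k\le k'\le j'$. In either case I get an arc $a_ia_\alpha\in A(H)$ with $\alpha\le\min(k,k')\le j'$; the boundary $\alpha=j'$ would contradict $a_ia_{j'}\notin A(H)$, so $\alpha<j'$. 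For the ``backward'' witness, I compare $m$ and $m'$: if $m<m'$, min-ordering applied to $a_ma_j$ and $a_{m'}a_{j'}$ (with $m<m'$, $j'<j$) yields $a_ma_{j'}\in A(H)$, while if $m'\le m$ the arc $a_{m'}a_{j'}$ already satisfies $m'\le m\le i$. In either case I get an arc $a_\beta a_{j'}\in A(H)$ with $\beta\le\min(m,m')\le i$, and again $\beta=i$ is excluded. Thus $a_ia_{j'}\in E'\subseteq A(H')$.

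The main (minor) obstacle is bookkeeping: the witnesses may collapse ($k=j$, $m=i$, $k'=j'$, $m'=i'$) and the min-ordering of $H$ requires strict inequalities, so the case split above is needed to handle equalities cleanly and to convert each boundary equality into the conclusion $a_ia_{j'}\in A(H)$, contradicting the working hypothesis. No deep fact about $H$ beyond its min-ordering is used.
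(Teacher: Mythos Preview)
Your proof is correct and follows essentially the same idea as the paper's: both arguments reduce to finding, for each of the two required arcs, witnesses in $E=A(H)$ with smaller indices and then invoking the min-ordering of $H$. The paper organizes this as a three-way case split on which of the two given arcs lie in $E$ versus $E'$, whereas you streamline the bookkeeping with the single observation that every arc of $H'$ comes equipped with $E$-witnesses at or below its endpoints; this avoids the explicit case analysis but is otherwise the same proof.
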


\begin{proof}
We show that for every pair of arcs $e=a_ia_{j'}$ and $e'=a_{i'}a_j$ in $E \cup E'$, with $i'<i$ and $j'<j$,
both $g=a_ia_j$ and $g'=a_{i'}a_{j'}$ are in $E \cup E'$. If both $e$ and $e'$ are in $E$, $g \in E \cup E'$ and $g' \in E$.

If only one of the arcs $e,e'$, say $e$, is in $E'$, there is a vertex $a_{j''}$ with $a_ia_{j''} \in E$ and $j'' < j'$, and a vertex $a_{i''}$ with $a_{i''}a_{j'} \in E$ and $i'' < i$. Now, $a_{i'}a_j$ and $a_ia_{j''}$ are both in $E$, so $g \in E \cup E'$. We may assume that $i'' \neq i'$, otherwise $g' = a_{i''} a_{j'} \in E$. If $i'' < i'$, then $g' \in E \cup E'$ because $a_{i'}a_{j''} \in E$; and if $i'' > i'$, then $g' \in E$ because $a_{i'}a_j \in E$.

If both edges $e, e'$ are in $E'$, then the earliest out-neighbor of $a_i$ and earliest in-neighbor of $a_j$ in $E$ imply that $g \in E \cup E'$, and the earliest out-neighbors of $a_{i'}$ and earliest in-neighbor of $a_{j'}$ in $E$ imply that $g' \in E \cup E'$.
\end{proof}

\begin{observation}\label{observation2}
Let $e=a_ia_j \in E'$. Then $a_i$ does not have any out-neighbor in $H$
after $a_j$, or $a_j$ does not have any in-neighbor in $H$ after $a_i$.
\end{observation}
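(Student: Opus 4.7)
The plan is to prove Observation \ref{observation2} by contradiction, using only the min-ordering property of $H$ together with the fact that $e \in E'$ means $e \notin E$ (i.e., $a_ia_j$ is not an arc of $H$).

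First I would suppose for contradiction that the conclusion fails, so there exists an out-neighbor $a_k$ of $a_i$ in $H$ with $k > j$, and an in-neighbor $a_l$ of $a_j$ in $H$ with $l > i$. Thus both $a_ia_k$ and $a_la_j$ lie in $E = A(H)$. Now I would apply the min-ordering property from the definition in the preceding section: with $u = a_i, v = a_k, u' = a_l, v' = a_j$, we have $u < u'$ (since $i < l$) and $v' < v$ (since $j < k$), so $uv' = a_ia_j$ must also be an arc of $H$. This contradicts $(a_i,a_j) \in E'$, because membership of $(a_i,a_j)$ in $E'$ requires that $a_ia_j \notin E$.

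Since the assumption leads to a contradiction, at least one of the two ``witness'' arcs cannot exist: either $a_i$ has no out-neighbor $a_k$ with $k > j$, or $a_j$ has no in-neighbor $a_l$ with $l > i$, which is exactly the statement of the observation. There is no real obstacle here — the argument is a one-line application of the min-ordering axiom to the would-be witnesses, and the only subtlety is being careful about which index inequalities correspond to which role in the min-ordering template ($u<u'$, $v'<v$).
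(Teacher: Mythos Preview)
Your proof is correct and is precisely the intended argument: the paper itself does not spell out a proof but merely states that the observation ``easily follows from the fact that $H$ has a min-ordering,'' and your contradiction via the min-ordering axiom applied to the hypothetical witnesses $a_ia_k$ and $a_la_j$ is exactly that easy verification.
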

Observation~\ref{observation2} easily follows from the fact that $H$ has a min-ordering. Since $H'$ has a min-max-ordering, we can
form system of linear inequalities ${\cal S}$,  for $H'$ as described in Section \ref{sec-minmax}.
Homomorphisms of $D$ to $H'$ are in a one-to-one correspondence with integer
solutions of $\mathcal {S}$, by Theorem \ref{LP-minmax}. However, we are interested in homomorphisms of $D$ to
$H$, not $H'$. Therefore, we shall add further inequalities to ${\cal S}$ to ensure
that we only admit homomorphisms from $D$ to $H$, i.e., avoid mapping arcs of
$D$ to the arcs in $E'$. These inequalities ensure the arc consistency and pair consistency constraints. Also in finding a list homomorphism from $D$ to $H$, they also allow us  (if necessary) to shift the image of vertex $u \in D$ to an element to a smaller element in its list, using the min-ordering properties. 

For every arc $e = a_ia_j \in E'$ and every arc $uv \in A(D)$, by Observation \ref{observation2}, if $a_j$ has an in-neighbor after $a_i$ then inequalities \ref{C8} and \ref{C11} are added, else if $a_i$ has an out-neighbor after $a_j$ then inequalities \ref{C9} and \ref{C10} are added. If neither of the previous cases happens then \ref{C9} and \ref{C11} are added. 

Additionally, for every pair $(x,y) \in V(D) \times V(D)$ consider a list $L(x,y)$ initially to be 
$L(x) \times L(y)$. 

Perform \emph{pair consistency} procedure as follows. 
Consider three vertices $x,y,z \in V(D)$. For $(a,b) \in L(x,y)$ if there is no $c \in L(z)$ such that $(a,c) \in L(x,z)$ and $(c,b) \in L(z,y)$ then remove $(a,b)$ from $L(x,y)$. Repeat this until a pair list becomes empty or no more changes can be made.

\begin{tcolorbox}[colback=white]
\begin{center}
\scalebox{0.77}{
\begin{tabular}{l l l l l}
    $v_j \leq u_s $ & $+$ & $\sum\limits_{\substack {t<i \\ a_ta_j \in E \\ a_t\in L(u)}}(u_t-u_{t+1})$ & \text{$a_s$ is the first in-neighbor of $a_j$ after $a_i$ in $L(u) $}&\newtag{(C8)}{C8}\\
    $v_j \leq v_{j+1}$ & $+$ & $\sum\limits_{\substack{ t<i \\ a_ta_j\in E \\ a_t\in L(u)}}(u_t-u_{t+1})$ & \text{if $a_j$ has no in-neighbor after $a_i$} &\newtag{(C9)}{C9}\\
    $u_i \leq v_s$ & $+$ & $\sum\limits_{\substack {t<j \\ a_ia_t\in E \\ a_t\in L(v)}}(v_t-v_{t+1})$ & \text{$a_s$ is the first out-neighbor of $a_i$ after $a_j$ in $L(v)$ } & \newtag{(C10)}{C10}\\
    $u_i \leq u_{i+1}$ & $+$ & $\sum\limits_{\substack{ t<j\\ a_ia_t\in E \\ a_t\in L(v)}}(v_t-v_{t+1})$ & \text{if $a_i$ has no out-neighbor after $a_j$} &\newtag{(C11)}{C11}
\end{tabular}}
\end{center}
\end{tcolorbox}
 Here, we assume that after pair consistency procedure no pair list is empty, as otherwise there is no homomorphism of $D$ to $H$. Therefore, by pair consistency, add the following constraints for every $u\neq v$ in $V(D)$ and $a_i\in L(u)$:
\begin{center}
\begin{tabular}{l l l l}
$u_i-u_{i+1}$ &$\le$ & $\sum\limits_{\substack{j: \\(a_i,a_j) \in L(u,v)}}(v_j-v_{j+1})$ &\newtag{(C12)}{C12}
\end{tabular}
\end{center}

Let the system of linear equation $\mathcal{S}$ together with constraints  ~\ref{C8},~\ref{C9},~\ref{C10}, \ref{C11}, and \ref{C12} be denoted by $\widehat{\mathcal{S}}$.

\begin{lemma}\label{one-to-one-ex}
If $H$ admits a min-ordering, then there is a one-to-one correspondence between homomorphisms of $D$ to
$H$ and the integer solutions of $\widehat{\mathcal{S}}$.
\end{lemma}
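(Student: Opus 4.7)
The plan is to leverage Theorem~\ref{LP-minmax} applied to the auxiliary digraph $H'$, which admits a min-max-ordering by Observation~\ref{observation1}. The sub-system (C1)--(C7) of the extended $\mathcal{S}$ is exactly the system from Section~\ref{sec-minmax} for $H'$, so its integer solutions are already in bijection with $\mathrm{Hom}(D,H')$. The role of the extra constraints (C8)--(C12) is to carve out of that bijection precisely those integer solutions that correspond to homomorphisms using no arc of $E' = A(H')\setminus A(H)$, i.e.\ genuine homomorphisms to $H$.

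For the forward direction, given a homomorphism $f:D\to H$, I would encode it by setting $v_i = 1$ iff $f(v)=a_t$ with $t\ge i$, exactly as in Theorem~\ref{LP-minmax}. Since $A(H)\subseteq A(H')$, constraints (C1)--(C7) are inherited immediately. The bulk of the work is verifying (C8)--(C11) at every $(a_i,a_j)\in E'$ and every arc $uv\in A(D)$: writing $f(u)=a_r$, $f(v)=a_m$, the argument for (C8) in the $(C8),(C11)$-branch splits into $r\ge s$ (immediate: $u_s=1$) and $r<s$; in the latter, the sub-case $r<i$ is handled by the min-ordering applied to $a_ra_m$ and $a_sa_j$, which forces $a_ra_j\in A(H)$ and thereby activates the $t=r$ term, while the sub-cases $r=i$ and $i<r<s$ are ruled out respectively by the branch hypothesis ``$a_i$ has no out-neighbour after $a_j$'' and by the minimality of $s$ as the \emph{first} in-neighbour of $a_j$ after $a_i$. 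The remaining constraints (C9)--(C11) are handled symmetrically, and (C12) holds because, for a global homomorphism, the pair $(f(u),f(v))$ automatically survives pair consistency.

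For the reverse direction, begin with any integer solution and apply Theorem~\ref{LP-minmax} to the sub-system (C1)--(C7) to obtain a homomorphism $f:D\to H'$ defined by $f(v)=a_i$ for the largest $i$ with $v_i=1$. It then suffices to show that $f$ never uses an arc of $E'$. Assume, for contradiction, that $uv\in A(D)$ with $f(u)=a_i$ and $f(v)=a_j$ where $(a_i,a_j)\in E'$. The key observation is that $u_t-u_{t+1}=0$ for every $t<i$ (the unique drop in the $u$-sequence occurs at $t=i$), and similarly for $v$; moreover $u_s=0$ for every $s>i$. Plugging these values into whichever of $\{(C8),(C11)\}$ or $\{(C9),(C10)\}$ was added for $(a_i,a_j)$ according to Observation~\ref{observation2} collapses the right-hand side to $0$, while the left-hand side equals $v_j=1$ or $u_i=1$, yielding a contradiction.

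The main obstacle will be the forward-direction case analysis for (C8)--(C11): the min-ordering is what produces the ``missing'' arc $a_ra_j$ (or $a_ia_m$) needed to populate the right-hand side, and one must combine this with the minimality of $a_s$ and with Observation~\ref{observation2} to eliminate all unwanted positions of $(r,m)$ relative to $(i,j)$. Once this is in place, the reverse direction is essentially bookkeeping, because the inequalities were tailored to force an immediate numerical contradiction as soon as any arc of $E'$ is used.
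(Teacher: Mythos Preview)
Your proposal is correct and follows essentially the same approach as the paper's own proof: both directions go through Theorem~\ref{LP-minmax} applied to $H'$, and the added inequalities are shown to exclude precisely the integer solutions that would use an arc of $E'$. Your forward direction is organized as an explicit case split on the position of $r=f(u)$ relative to $i$ and $s$, whereas the paper argues the same thing by assuming a constraint fails and deriving $f(u)=a_i$, $f(v)=a_j$; the underlying uses of the min-ordering and of Observation~\ref{observation2} are identical. One small remark: in your sub-case $i<r<s$ you invoke only the minimality of $s$, but to reach a contradiction there you still need the min-ordering step (from $a_ra_m$ and $a_sa_j$ with $r<s$, $m\ge j$) to produce $a_ra_j\in E$ --- the same step you already used for $r<i$.
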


\begin{proof} In the proof of Theorem \ref{LP-minmax} we shown that from an integer solution for  ${\cal S}$, one can obtain a homomorphism from 
$D$ to $H'$. Let $f$ be such a homomorphism. We show that $f$ is a homomorphism from $D$ to $H$. Let $uv$ be an arc of $D$ and  let $f(u) = a_i, f(v) = a_j$. We have $u_i = 1$, $u_{i+1}=0$, $v_j = 1$, $v_{j+1}=0$, and for all $a_ta_j\in E$ with $t < i$ we have $u_t - u_{t+1} = 0$. We show that $a_ia_j \in E$. If it is not the case, then one constrain among \ref{C8} and \ref{C9} must hold. If $a_s$ is the first in-neighbor of $a_j$ after $a_i$, then we will also have $u_s = 0$, and so inequality \ref{C8} fails. Else, if $a_j$ has no in-neighbor after $a_i$, then inequality \ref{C9} fails.

Conversely, suppose $f$ is a homomorphism from $D$ to $H$ (i.e., $f$ maps each edge of $D$ to
an edge of $H$). We show that the additional constraints hold. 

We first show that inequality~\ref{C8} holds (the reasoning for other inequalities is similar). 
Let $uv \in A(D)$, and let $a_ia_j \in E'$. Now if $v_j=0$ or $u_s=1$ then~\ref{C8} holds. Hence, we may assume $u_s=0$ and $v_j=1$. Note that $a_s \in L(u)$ is the first in-neighbor of $a_j$ after $a_i$. Now, according to the way we define $u_i,u_s,v_j$ when 
we have homomorphism $f$ from $D$ to $H$, we have $f(u)=a_r < a_i< a_s$ and $a_j$ is before or equal to $f(v)$ in the min-ordering.
 
Notice that $u_{r}-u_{r+1}=1$. 
Now if $a_j < f(v) $ then $a_rf(v), a_sa_j \in A(H)$ and min-ordering implies that 
$a_ra_j \in A(H)$. On the other hand, if $f(v)=a_j$ then $a_ra_j \in A(H)$. Therefore, 
$a_ra_j \in A(H)$ and $a_r\in L(u)$. 
Now the sum of $(u_t-u_{t+1})$ over all $a_t \le a_i$ such
that $a_t$ is in $L(u)$ and is an in-neighbor of $a_j$ is one. Therefore, the 
inequality~\ref{C8} holds.  

Note that if there is a homomorphism from $D$ to $H$ then inequality \ref{C12} is a necessary condition for having such a homomorphism.
\end{proof}

\section{Approximation for Digraphs with a min-ordering}\label{sec-Approx-ALG}

In what follows, we describe our approximation algorithm for MinHOM$(H)$ where the fixed digraph $H$ has a min-ordering. We start off with an overview of our algorithm, Algorithm~\ref{alg:app-MinHom}. The proofs of the correctness and approximation bound are postponed to later subsections.

\begin{algorithm}[tb]
\caption{Approximation MinHOM($H$)}\label{alg:app-MinHom}
\begin{algorithmic}[1]
\Procedure{Approx--MinHOM}{$H$}
\State Construct $H'$ from $H$ (as in Section \ref{sec-minmax}) 
\State Let $u_i$s be the (fractional) values returned by $\widehat{\mathcal{S}}$. 
\State Sample $X$ uniformly from $[0,1]$
\State For all $u_i$s: if $X\leq u_i$ let $u'_{i}=1$, else let $u'_{i}=0$\label{rounding}
\State  Let $f(u)=a_i$ where $i$ is the largest subscript with $u'_i=1$ 

\Comment{$f$ is a homomorphism from $D$ to $H'$}
\State $F\gets$ all arcs in $E'$ to which some arcs of $D$ are mapped by $f$

\State Sample $Y$ uniformly from $[0,1]$

\While {$\exists$ arc $a_ia_j \in F$ with $i+j$ being maximum }
\While{  $\exists uv\in A(D)$ with $f(u)=a_i$ and $f(v)=a_j$ }\label{call-shift}
\If{$f(v)$ does not have an in-neighbor after $f(u)$}
\State \textsc{Shift}$(f,v)$
\ElsIf{$f(u)$ does not have an out-neighbor after $f(v)$}
\State \textsc{Shift}$(f,u)$
\EndIf


\EndWhile\label{euclidendwhile}
\State Remove $a_ia_j$ from $F$.
\EndWhile 

\State \textbf{return} $f$\Comment{$f$ is a homomorphism from $D$ to $H$}
\EndProcedure
\end{algorithmic}
\end{algorithm}

Let $D$ be the input digraph together with a costs function $c$, and let $H$ be a fixed target digraph $H$, let $a_1,\cdots, a_p$ be a min-ordering of the vertices of $H$. Algorithm~\ref{alg:app-MinHom}, first constructs digraph $H'$ from $H$ as explained in Section~\ref{min-to-max}. By Observation~\ref{observation1}, $a_1,\dots, a_p$ is a min-max-ordering for $H'$. By Lemma~\ref{one-to-one-ex}, the integral solutions of $\widehat{\mathcal{S}}$ are in one-to-one correspondence to homomorphisms from $D$ to $H$. At this point, our algorithm will minimize the cost function over extended ${\cal S}$ in polynomial-time using
a linear programming algorithm. This will generally result in a fractional solution (Even though the original system $\mathcal{S}$ is known to be totally unimodular~\cite{mincostungraph,approximation-2011} and hence has integral optima, we have added inequalities, thus losing this advantage). We will obtain
an integer solution by a randomized procedure called {\em rounding}. Choose, uniformly at random, a random variable
$X \in [0,1]$, and define the rounded values $u'_i = 1$ when $u_{i} \geq X$ ($u_i$ is the returned value by the $\widehat{\mathcal{S}}$) and $u'_{i} = 0$
otherwise. It is easy to check that the rounded values satisfy the original inequalities, i.e., correspond to a homomorphism $f$ of $D$ to $H'$.

\begin{algorithm}[ht]
\caption{The shifting procedure}\label{alg:shift}
\begin{algorithmic}[1]
\Procedure{Shift}{$f,x$}
\State Let Q be a Queue, $Q.enqueue(x)$
\While {Q is not empty}
\State $v\gets Q.dequeue( )$
\For{$uv \in A(D)$ with $f(u)f(v)\not\in A(H)$ \textbf{or} 

\hspace{3mm} $vu \in A(D)$ with $f(v)f(u)\not\in A(H)$}

\Comment{Here we assume the first condition hold, the other case is similar}

\Comment{Plus, we assume $f(v)$ does not have an in-neighbor after $f(u)$}

\State Let $t_1 < \dots < t_k$ be indices s.t. $a_{t_j}<f(v), a_{t_j}\in L(v),f(u)a_{t_j}\in A(H)$ \label{round_index}
\State Let $P_v\gets \sum\limits_{j=1}^{j=k} (v_{t_j} - v_{t_j+1})$ and $P_{v,t}\gets ({v_t-v_{t+1}})\mathbin{/}{P_v}$
\If{$\sum\limits_{\small p=1}^{\small q} P_{v,{t_p}} < Y \leq \sum\limits_{\small p=1}^{\small q+1} P_{v,{t_p}}$}  
\State $f(v)\gets a_{t_q}$, set $v'_{i}=1$ for $1\leq i\leq t_q$, and set $v'_{i}=0$ for $t_p< i$
\EndIf
\For{ $vz \in A(D)$ ($zv \in A(D)$) with $f(v)f(z)\not\in A(H)$ 

\hspace{7mm} ($f(z)f(v) \not\in A(H)$) }
\State $Q.enqueue(z)$
\EndFor
\EndFor
\EndWhile
\State \textbf{return} $f$\Comment{$f$ is a homomorphism from $D$ to $H'$}
\EndProcedure

\end{algorithmic}
\end{algorithm}

Now the algorithm will modify the solution $f$ to become a homomorphism from $D$ to
$H$, i.e., to avoid mapping the arcs of $D$ to the arcs in $E'$. This will be accomplished by another randomized procedure, which we call \textsc{Shift}, Algorithm~\ref{alg:shift}. We choose, uniformly at random, another random variable $Y \in [0,1]$, which will guide the shifting. Let $F$ denote the set of all arcs in $E'$ to which some arcs of $D$ are mapped by $f$. If $F$ is empty, we need no shifting. Otherwise, let $a_ia_j$ be an arc of $F$ \textbf{where $i+j$ is maximum}.  Since $F \subseteq E'$, Observation \ref{observation2}
implies that either $a_j$ has no in-neighbor after $a_i$ or $a_i$ has no out-neighbor after $a_j$. Suppose the first case happens (the shifting process is similar in the other case).

Consider a vertex $v$ in $D$ such that $f(v)=a_j$ (i.e. $v_j'=1$ and $v_{j+1}'=0$) and $v$ has an in-neighbor $u$ in $D$ with $f(u)=a_i$ (i.e. $u_i'=1$ and $u_{i+1}'=0$). 
For such a vertex $v$, let $S_v=\{a_{t_1},a_{t_2},\dots,a_{t_k}\}$ be the set of all vertices $a_t$ with $t < j$ such that $a_ia_t \in E$ and $a_t\in L(v)$. We will show in Lemma~\ref{set-index-shifting} that $S_v$ is not empty. 
Suppose $S_v$ consists of $a_t$ with subscripts $t$ ordered as $t_1 < t_2 < \dots < t_k$. 
The algorithm now selects one vertex from this set as follows. Let $P_{v,t} = \frac{v_{t} - v_{t+1}}{P_v},$ where 
\begin{align*}
P_v = \sum\limits_{\substack{t : a_t\in S_v}} (v_{t} - v_{t+1}).
\end{align*}
Note that $P_v > 0$ because one among the constraints \ref{C10} and \ref{C11} holds. Then $a_{t_q}$ is selected if $\sum\limits_{\small p=1}^{\small q} P_{v,{t_p}} < Y \leq \sum\limits_{\small p=1}^{\small q+1} P_{v,{t_p}}$. Thus a concrete $a_t$ is selected with probability $P_{v,t}$, which is proportional to the difference of the fractional values $v_t-v_{t+1}$. When the selected vertex is $a_t$, we shift the image of the vertex $v$ from $a_j$ to $a_t$, and set $v'_r=1$ if $r\leq t$, else set $v'_r=0$. Note that $a_t$ is before $a_j$ in the min-ordering\footnote{The images are always shifted towards smaller elements.}. 
Now we might need to shift images of the neighbors of $v$. In this case, repeat the shifting procedure for neighbors of $v$. Let $z$ be an out-neighbor of $v$ and the case with $z$ being an in-neighbor of $v$ is handled similarly. First suppose $a_jf(z)$ is an arc
of $H$. If $a_t$ has an out-neighbor after $f(z)$
then by min-ordering property $a_tf(z)$ is an arc of $H$ and hence, the image of 
$z$ does not need to be changed. If $a_t$ has no out-neighbor after $f(z)$ in $L(z)$, then all the out-neighbors of $a_t$ are before $f(z)$ and an out-neighbor of $a_t$ in $L(z)$ is selected according to random variable $Y$ as explained above. Moreover, constraints \ref{C12} ensures that $P_z> 0$.
If $a_jf(z)$ is not an arc of $H$, then $a_jf(z)$ is an arc of $H'$. 
Now, we change the image of $z$ from $f(z)$ by selecting a vertex in $L(z)$ that is an 
out-neighbor of $a_t$ in $L(z)$ according to random variable $Y$. Again in this case, constraints \ref{C10} or \ref{C11} ensure that $P_z >0$. 

This processes continues in a Breadth-first search (BFS) like manner, until no more shift is required. See Figure \ref{shift-pic} for an illustration and Example~\ref{Example:alg-min} for detailed explanation. Note that a vertex might be visited multiple times in procedure shift while a pair $(v, a_i) \in V (D) \times V (H)$ is considered at most one time.

\begin{example}[Examples for Algorithm~\ref{alg:app-MinHom}]
\label{Example:alg-min}
    Here, we provide detailed explanation on the examples given in Figure~\ref{shift-pic}. In the right example, the target digraph is $H_1$ and the input is $D_1$. The right digraphs ($D_1,H_1$) both can be viewed as bipartite graphs and $1,2,3,4,5,6,7$ is a min-ordering of $H$. When $x$ is mapped to $3$ and $w$ is mapped to $6$ then the algorithm should shift the image of $w$ from $6$ to $5$ and since $35$ is an arc there is no need to shift the image of $y$. 
In the left example, the target digraph is $H$ and the input is $D$. In $H$, $1,2,3,4,5,6,7,8$ is a min-ordering and $24$ is a missing arc.  Suppose $x$ is mapped to $2$, $y$ to $4$, $w$ to $7$, $z$ to $8$, $u$ to $5$ and $v$ to $2$. 
Then we should shift the image of $y$ to $3$ and then $w$ to $6$ and $z$ to $6$ and then $u$ to $3$ and $v$ to one of the $1,2$. 
\end{example}

\begin{figure}[tb]
\begin{center}
\includegraphics[height=5cm]{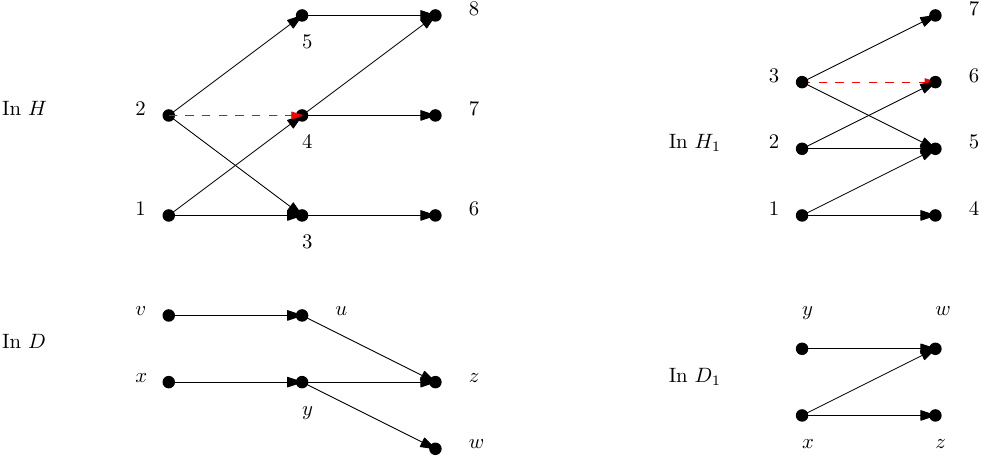}
\caption{Two examples for Algorithm~\ref{alg:app-MinHom}. See Example~\ref{Example:alg-min} for details on these examples. 
} \label{shift-pic}
\end{center}
\end{figure}

\begin{lemma}
\label{set-index-shifting}
During procedure \textsc{Shift}, the set of indices $t_1 < \dots < t_k$ considered in Line \ref{round_index} of the Algorithm \ref{alg:app-MinHom} is non-empty.
\end{lemma}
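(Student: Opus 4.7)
The plan is to argue by contradiction: assume $S_v=\emptyset$, combine arc consistency with the thresholded rounding, and then invoke constraint~\ref{C10} to derive a strict inequality $u_i\le v_s<u_i$. The only non-trivial ingredient is noticing that arc consistency alone already rules out the case where constraint~\ref{C11} would apply, so the relevant constraint in the disjunctive family is precisely \ref{C10}.

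First I would translate the rounding into LP-value inequalities. Because $f(u)=a_i$ and $f(v)=a_j$, we have $u'_i=1,\ u'_{i+1}=0,\ v'_j=1,\ v'_{j+1}=0$, so the random threshold $X$ satisfies
$$u_{i+1}<X\le u_i, \qquad v_{j+1}<X\le v_j.$$
The monotonicity constraint~\ref{C4} then yields $v_s\le v_{j+1}<X\le u_i$ for every index $s>j$; this strict inequality between $u_i$ and any $v_s$ with $s>j$ is the engine of the argument. Moreover, $u_i>0$ gives $a_i\in L(u)$.

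Now suppose for contradiction that $S_v=\emptyset$. Since $uv\in A(D)$ and $a_i\in L(u)$, arc consistency produces some out-neighbour $a_{s^*}$ of $a_i$ with $a_{s^*}\in L(v)$. The assumption $S_v=\emptyset$ forces $s^*\ge j$, and because $a_ia_j\in E'$ means $a_ia_j\notin E$, in fact $s^*>j$. Thus $a_i$ has at least one out-neighbour after $a_j$ that lies in $L(v)$, so the instance of constraint~\ref{C10} attached to the pair $(a_ia_j,uv)$ is active; letting $a_s\in L(v)$ denote the first such out-neighbour, it reads
$$u_i\ \le\ v_s+\sum_{\substack{t<j\\ a_ia_t\in E\\ a_t\in L(v)}}(v_t-v_{t+1}).$$
Because $S_v=\emptyset$ the sum collapses to $0$, so $u_i\le v_s$; but $s>j$ forces $v_s<X\le u_i$, a contradiction. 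In the complementary degenerate case in which $a_i$ has no out-neighbour after $a_j$ at all, every out-neighbour of $a_i$ necessarily has index $<j$ (since $a_ia_j\notin E$), so arc consistency itself places an element inside $S_v$ and no LP argument is needed.

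I expect the main obstacle to be bookkeeping rather than mathematics: in the case considered by the Shift procedure (where $a_j$ has no in-neighbour after $a_i$) one must be careful that the correct member of the disjunctive family \ref{C10}/\ref{C11} is the active one, and that the vertex $a_s$ named in \ref{C10} genuinely lies in $L(v)$ so that its LP variable can be used. Arc consistency, together with the supposition $S_v=\emptyset$, is exactly what certifies both points; the threshold inequality $v_s<X\le u_i$ then closes the argument in one line.
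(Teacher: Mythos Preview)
Your argument is correct when both $f(u)=a_i$ and $f(v)=a_j$ still carry the values assigned by the threshold rounding in Line~\ref{rounding}. But Line~\ref{round_index} is executed throughout the BFS inside \textsc{Shift}, and by then $f(u)$ (and possibly $f(v)$) may already have been reassigned by earlier shifts. This breaks your proof in two places. First, constraint~\ref{C10} is only present in the LP for pairs $(a_ia_j,uv)$ with $a_ia_j\in E'$; in a typical recursive step $u$ is precisely the vertex whose image was just shifted down, and the resulting $f(u)f(v)=a_ia_j$ need not lie in $E'$ (it may fail to lie in $A(H')$ altogether), so the inequality you invoke may simply not exist in the system. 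Second, even when \ref{C10} is available, your contradiction $v_s<X\le u_i$ rests on $v_{j+1}<X$, which follows from the threshold rule only when $f(v)=a_j$ is the rounded value; if $f(v)$ was itself shifted to $a_j$ from some $a_{j_0}$ with $j_0>j$, then $v_{j+1}\ge v_{j_0}\ge X$ and the strict inequality is lost.

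The paper covers the general step with a different tool: the pair-consistency constraint~\ref{C12}, which you do not touch. From $u_i-u_{i+1}>0$ (a fact that survives shifting, since \textsc{Shift} only moves a vertex to an index carrying positive fractional mass) constraint~\ref{C12} produces some $a_{l'}\in L(v)$ with $a_ia_{l'}\in E$; the paper then argues $l'<j$ by a case split, using essentially your threshold/\ref{C10} reasoning when $f(u)$ is still its rounded value, and the min-max-ordering of $H'$ when $f(u)$ has already been shifted.
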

\begin{proof}
In procedure \textsc{Shift}, consider $vz$ such that $f(v)f(z)\not\in E(H')$ and $f(v)=a_t$ and $f(z)=a_l$. We have $0 < v_t-v_{t+1}$, and together with constraint \ref{C12}, this implies \begin{align*}
0 < v_t-v_{t+1} \leq \sum\limits_{\substack{j: \\(a_t,a_j) \in L(v,z)}}(z_j-z_{j+1}).    
\end{align*}
Therefore, there must be an index $l'$ such that $(a_t,a_{l'})\in L(v,z)$. It remains to show that there exists such an 
$a_{l'}$ appearing before $a_l$ in the min-ordering. There are two cases to consider. First is $f(v)$ is set to $a_t$ in rounding step (Line \ref{rounding}). Second is image of $v$ was shifted from $a_j$ to $a_t$ in procedure \textsc{Shift}.

For the first case, note that, since $f$ is a homomorphism from $D$ to $H'$,  $a_ta_l\in E(H')\setminus E(H)$. Arc $vz$ is mapped to $a_ta_l$ in rounding step (Line \ref{rounding}) according to the random variable $X$. Note that, during procedure \textsc{Shift}, we do not map any arc of $D$ to edges in $E(H')\setminus E(H)$. Therefore, we have $X\leq v_t,z_l$. Consider the situation where $a_l$ has no in-neighbor after $a_t$. Let $a_s$ be the first out-neighbor of $a_t$ after $a_l$, then we have $ z_s < X\leq v_t$. This together with inequality \ref{C10} implies that 
\begin{align*}
0 < \sum\limits_{\substack {l'<l \\ a_ta_l\in E \\ a_{l'}\in L(z)}}(z_{l'}-z_{l'+1}). 
\end{align*}
Hence, there exists an index $l'<l$ as we wanted. The argument for the case where $a_t$ has no out-neighbor after $a_l$ is similar. 

For the second case, before mapping $v$ to $a_t$, there was an index $a_j$ such that $a_t < a_j$. There are two cases regarding $a_ja_l$. Either it is in $E(H)$ or it is in $E(H')\setminus E(H)$. In both cases, $a_{l'}$ must appear before $a_l$ as otherwise, min-max-ordering implies $a_ta_l\in E(H')$, contradicting our assumption.
\end{proof}

\begin{theorem}
Algorithm~\ref{alg:app-MinHom}, runs in polynomial time, returns a homomorphism of $D$ to $H$.
\end{theorem}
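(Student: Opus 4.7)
The plan is to chain together the three phases of Algorithm \ref{alg:app-MinHom} — LP solve, random rounding, and \textsc{Shift} — and to show each is polynomial time and produces the right intermediate object. First, I would solve the extended system $\mathcal{S}$ (constraints \ref{C1}--\ref{C12}) in polynomial time with any LP solver; since the system has $O(|V(D)|\cdot|V(H)|)$ variables and $O(|V(D)|^2\cdot|V(H)|^2)$ constraints, this is polynomial in $|V(D)|$ for fixed $H$. Let $\{v_i\}$ denote the resulting fractional values.

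Second, I would verify that after drawing $X\in[0,1]$ uniformly and setting $v'_i := \mathbf{1}[v_i\ge X]$, the rounded values form a feasible integer solution to the \emph{original} system \ref{C1}--\ref{C7}. Monotonicity \ref{C4} follows from $v_i\ge v_{i+1}$; the boundary conditions \ref{C2},\ref{C3} hold trivially; \ref{C5} is preserved because $v_i=v_{i+1}$ forces $v'_i=v'_{i+1}$; and each of \ref{C6},\ref{C7} is preserved pointwise by thresholding a linear inequality. Since the $a_1,\ldots,a_p$ is a min-max-ordering of $H'$ by Observation \ref{observation1}, Theorem \ref{LP-minmax} applied to $H'$ gives us that the rounded values correspond to a homomorphism $f:D\to H'$.

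Third, I would run \textsc{Shift} to convert $f$ into a homomorphism into $H$, i.e., one that avoids all arcs in $E'=E(H')\setminus E(H)$. For every arc $uv\in A(D)$ with $f(u)f(v)\in E'$, Observation \ref{observation2} guarantees we can shift either $f(v)$ backwards (the $a_j$-side case) or $f(u)$ backwards in the min-ordering to an in/out-neighbor in the appropriate list. Lemma \ref{set-index-shifting} guarantees that the candidate set $S_v$ used for sampling is always non-empty, so the probabilistic shift in Line \ref{round_index} is well-defined and makes sense. Lemma \ref{shifting} shows that a shift cascade either halts or moves every affected image strictly earlier in the fixed linear order, so the BFS-like process terminates after at most $|V(D)|\cdot|V(H)|$ shifts per iteration; since the set $F$ of ``bad'' arcs is bounded by $|E'|\le |V(H)|^2$, the entire \textsc{Shift} phase runs in polynomial time and produces a mapping which is still a homomorphism to $H'$ by Lemma \ref{shifting} but now avoids $E'$, hence a homomorphism to $H$.

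The main conceptual point — and what I expect to be the trickiest to articulate — is the invariant maintained throughout the cascade: each individual shift preserves the original constraints \ref{C1}--\ref{C7} (so the intermediate mapping is always a homomorphism to $H'$), while simultaneously making progress toward eliminating arcs mapped into $E'$. Both facts are packaged inside Lemma \ref{shifting}, so the proof of the theorem itself reduces to invoking Lemmas \ref{set-index-shifting} and \ref{shifting} together with the rounding and LP bounds above.
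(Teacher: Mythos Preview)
Your proposal is correct and follows essentially the same approach as the paper: the paper's own proof is a single sentence stating that the theorem follows from Lemmas~\ref{set-index-shifting} and~\ref{shifting}, with the LP/rounding correctness and the polynomial iteration bound handled in the surrounding prose exactly as you lay them out. Your write-up is simply a more explicit unpacking of that same chain.
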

\begin{proof}
It it easy to see that, if there exists a homomorphism from $D$ to $H$, then there is a homomorphism from $D$ to $H$ that maps every vertex of $D$ to the smallest vertex in its list (Lemma~\ref{lemma-hom}). We show that a sequence of shifting, either stops at some point, or it keeps shifting to a smaller vertex in each list. Lemma \ref{set-index-shifting} allows us to apply the shifting as long as it is necessary. Therefore,  on the latter case, after finite (polynomially many) steps, we end up mapping every vertex of $D$ to the smallest vertex in its list.

Consider an arc $vz\in A(D)$. Suppose $f(v)=a_t$ and $f(z)=a_l$ and $a_ta_l$ is an arc of $H$. Assume that we have shifted the image of $v$ from $a_t$ to $a_{t'}\in L(v)$ where $a_{t'}$ is before $a_t$ in the min-ordering. If $a_{t'}a_l$ is in $A(H)$ then we do not have to shift the image of $z$. Note that, since $a_{t'}$ is in $L(v)$ then it has to have an out-neighbor in $L(z)$. Let say $a_{l'}\in L(z)$ is an out-neighbor of $a_{t'}$. If $a_{l'}$ is after $a_l$ in the min-ordering then it implies $a_{t'}a_{l}\in A(H)$. Else, $a_{l'}$ is before $a_{l}$ in the min-ordering and we shift the image of $z$ to a smaller vertex in its list. 

Thus, the above argument shows that the shifting modifies the homomorphism $f$, and hence, the corresponding values of the variables. Namely, $v'_{t+1}, \dots, v'_{j}$ are reset to $0$, keeping all other values the same. It is important to note that these modified values still satisfy the original set of constraints $\mathcal{S}$, i.e., the modified mapping is still a homomorphism from $D$ to $H'$. Moreover, during the shifting, the number of arcs in $D$ that are mapped to arcs in $E'$ does not increase.

We repeat the same process for the next $v$ with these properties, until no arc of $D$ is mapped to an arc in $E'$. Each iteration involves at most $|V(H)|\cdot|V(D)|$ shifts. After at most $|E'|$ iterations, no edge of $D$ is mapped to an arc in $F$ and we no longer need to shift (See Figure~\ref{shift-pic} for an example). 
\end{proof}

\subsection{Analyzing the Approximation Ratio}
We now claim that, the cost of this homomorphism is at most $|V(H)|^2$ times the minimum cost of a homomorphism. Let $W$ denote the value of the objective function with the fractional optimum $u_i, v_j$, and $W'$ denote the value of the
objective function with the final values $u'_{i}, v'_{j}$, after the rounding and all the shifting.
Also, let $W^*$ be the minimum cost of a homomorphism of $D$ to $H$. Obviously, $W \leq W^* \leq W'$.

We now show that the expected value of $W'$ is at most a constant times $w$. Let us focus on
the contribution of one summand, say $v'_{t} - v'_{t+1}$, to the calculation of the cost.

In any integer solution, $v'_{t} - v'_{t+1}$ is either $0$ or $1$. The probability that
$v'_{t} - v'_{t+1}$ contributes to $W'$ is the probability of the event that $v'_{t} = 1$
and $v'_{t+1} = 0$. This can happen in the following situations:
\begin{enumerate}
\item 
$v$ is mapped to $a_t$ by rounding, and is not shifted away. In other words, we have
$v'_{t} = 1$ and $v'_{t+1} = 0$ after rounding, and these values do not change by
procedure \textsc{Shift}.
\item
$v$ is first mapped to some $a_j, j > t$, by rounding, and then re-mapped to $a_t$ by procedure \textsc{Shift}.


\end{enumerate}

\begin{lemma}\label{pr-bound}

The expected contribution of one summand, say $v'_t-v'_{t+1}$, to the expected cost of $W'$ is at most $|V(H)|^2c(v,a_t)(v_t-v_{t+1})$.
\end{lemma}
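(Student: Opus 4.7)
The plan is to bound $\Pr[v'_t = 1,\, v'_{t+1} = 0] = \Pr[f(v) = a_t \text{ at termination}]$ by $|V(H)|^2 (v_t - v_{t+1})$; multiplying by $c(v, a_t)$ then yields the lemma. I would split this probability according to the two disjoint events described just before the lemma statement: Case~1, in which $v$ is rounded directly to $a_t$ and is never shifted; and Case~2, in which $v$ is rounded to some $a_j$ with $j > t$ and is subsequently shifted (possibly through several BFS steps) to $a_t$.

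For Case~1, the threshold rounding with $X$ uniform on $[0,1]$ sets $f(v) = a_t$ immediately after rounding exactly when $X \in (v_{t+1}, v_t]$, an event of probability $v_t - v_{t+1}$. Since procedure \textsc{Shift} only moves images to strictly earlier positions in the min-ordering, the sub-event in which $v$ additionally stays at $a_t$ has probability at most $v_t - v_{t+1}$, which is the desired Case~1 bound.

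For Case~2, I would pin down the single shift step that finally places $f(v)$ at $a_t$. Such a step is triggered by an arc $uv \in A(D)$ (the case $vu \in A(D)$ is symmetric) together with a pair $(a_i, a_j) \in E'$, where $a_i$ is the image of $u$ at the moment of trigger and $a_j$ is the image of $v$ immediately before this shift. Conditioned on reaching this configuration, \textsc{Shift} selects $a_t$ with probability $(v_t - v_{t+1})/P_v$, using the independent uniform $Y$, where
\[
P_v = \sum_{\substack{t' < j \\ a_i a_{t'} \in E \\ a_{t'} \in L(v)}} (v_{t'} - v_{t'+1}).
\]
The probability of reaching the triggering configuration is at most $\Pr[f(v) = a_j] = v_j - v_{j+1}$, since every such configuration requires this event; this absorbs the union bound over all neighbors $u$ of $v$ into a single marginal probability. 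I would then apply constraint~\ref{C10} (or~\ref{C11} in the sub-case where $a_i$ has no out-neighbor after $a_j$, and symmetrically~\ref{C8}--\ref{C9} for the other branch of Observation~\ref{observation2}) to bound $v_j - v_{j+1}$ by a constant multiple of $P_v$, so that the contribution from each fixed pair $(a_i, a_j) \in E'$ to Case~2 is at most $v_t - v_{t+1}$. Summing over the at most $|V(H)|^2$ such pairs gives the Case~2 bound.

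The hard part will be the careful handling of the BFS chain of shifts together with the coupling between rounding decisions. Vertex $v$ may be shifted several times en route to $a_t$, and the triggering neighbor $u$'s image might itself be the result of a prior shift, so the argument must single out the canonical last shift setting $f(v) := a_t$ as the decisive event and show that the union bound over distinct neighbors $u$ is absorbed into the single event $\{f(v) = a_j\}$. A second subtlety is that the same random variable $X$ is shared across the rounding of all vertices, so the probability that rounding assigns $f(u) = a_i$ and $f(v) = a_j$ is the length of an interval intersection rather than a product of marginals; nevertheless, an upper bound by $\min\{u_i - u_{i+1},\, v_j - v_{j+1}\}$ suffices, and this is exactly what feeds into the LP constraints~\ref{C8}--\ref{C11} to extract the factor $P_v$ needed in the argument.
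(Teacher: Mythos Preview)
Your two-case split and the Case~1 bound are exactly what the paper does. The genuine gap is in Case~2. Your key step is the claim that ``the probability of reaching the triggering configuration is at most $\Pr[f(v)=a_j]=v_j-v_{j+1}$,'' but this is only valid when $a_j$ is the position to which $v$ was \emph{rounded} by $X$. The last shift sending $v$ to $a_t$ can be triggered while $v$ sits at an $a_j$ it reached via \emph{earlier} shifts, and while the neighbour $u$ sits at an $a_i$ it likewise reached via earlier shifts; in that situation no event of the form $\{X\in(v_{j+1},v_j]\}$ is forced, and you have no bound on the probability that the process ever visits the configuration $(a_i,a_j)$. Your proposed fix---``absorb the union bound over neighbours $u$ into the single event $\{f(v)=a_j\}$''---handles the multiplicity of neighbours but not the depth of the shift chain, because neither reading of $\{f(v)=a_j\}$ (rounded image, or final image) is implied by the triggering configuration. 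You flag this as ``the hard part,'' but the plan you sketch does not close it.

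The paper closes it by an induction along the BFS walk $u^0,u^1,\dots,u^k=v$ of successively shifted vertices. Fixing the rounded positions $a_0,\dots,a_k$ of the entire walk (with $a_k=a_j$), the joint $X$-event still has probability at most $v_j-v_{j+1}$ since all the intervals must overlap; then one proves inductively that the probability the process shifts $u^{k-1}$ to its current image $a_i$ is at most $u^{k-1}_i-u^{k-1}_{i+1}$. At this point the relevant constraint is~\ref{C11} applied to $u_i-u_{i+1}$ (the paper argues $a_i$ has no out-neighbour after $a_k$, so~\ref{C11} is available), giving $u^{k-1}_i-u^{k-1}_{i+1}\le P_v$; multiplying by the $Y$-selection probability $(v_t-v_{t+1})/P_v$ yields the per-walk contribution $v_t-v_{t+1}$. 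Note this differs from what you wrote: you propose to bound $v_j-v_{j+1}$ by $P_v$ via~\ref{C10}/\ref{C11}, but those constraints bound $u_i-v_s$ or $u_i-u_{i+1}$, not $v_j-v_{j+1}$, so even your base case needs to pass through the $u$-side quantity. The final $|V(H)|^2$ then comes, as in your sketch, from summing over at most $|V(H)|$ intermediate positions $a_j$ of $v$ and at most $|V(H)|$ missing arcs at each $a_j$.
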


\begin{proof}
Vertex $v$ is mapped to $a_t$ in two cases. The first case is where $v$ is mapped to $a_t$ by rounding (Line \ref{rounding}) and is not shifted away. In other words, we have
$v'_{t} = 1$ and $v'_{t+1} = 0$ after rounding, and these values do not change by procedure \textsc{Shift}. Hence, for this case we have:
\begin{align*}
\P[f(v)=a_t] &=\P[v_{t+1} < X \leq v_t]\cdot \P[v\text{ is not shifted in procedure \textsc{Shift}}]\\
&\leq v_t-v_{t+1}
\end{align*}
Whence this situation occurs with probability at most $v_t-v_{t+1}$, and the expected contribution is at most $c(v,a_t)(v_t-v_{t+1})$.

Second case is where $f(v)$ is set to $a_t$ during procedure \textsc{Shift}. The algorithm calls \textsc{Shift} if there exists $u^0u^1\in A(D)$ such that $f(u^0)f(u^1)\in E(H')\setminus E(H)$ (Line \ref{call-shift}). Let us assume it calls \textsc{Shift}$(f,u^1)$. Procedure \textsc{Shift} modifies images of vertices $u^1,u^2,\dots$. Consider the last time that \textsc{Shift} changes image of $v$. Note that $u^1,\cdots ,u^k=v$ is an oriented walk, meaning that there is an arc between every two consecutive vertices of the sequence and the $u^i$s are not necessarily distinct. 

\paragraph{Base case} We first compute the contribution for a fixed $j$, that is the contribution of shifting $v$ from a fixed $a_j$ to $a_t$. We use induction on $k$. 
Consider the simplest case where $k=1$. In this case $v$ is first mapped to $a_j, j > t$, by rounding, and then re-mapped to $a_t$ during procedure \textsc{Shift}.
This happens if there exist $i$ and $u$ such that $uv$ is an arc of $D$ mapped to $a_ia_j \in F\subseteq E'$ with $i+j$ being the maximum,
and then the image of $v$ is shifted to $a_t$ ($a_t<a_j$ in the min-ordering), where $a_ia_t \in E=A(H)$. In other words, we have
$u'_{i} = v'_{j} = 1$ and $u'_{i+1} = v'_{j+1} = 0$ after rounding (Line \ref{rounding}); and then $v$ is shifted from $a_j$ to $a_t$. According to the algorithm and the way we process the arcs in $F\subseteq E'$,  
we may assume that $v$ has not been 
shifted because of some $i'>i$ and some in-neighbor $u'$ of $v$ at the moment. Moreover, note that many in-neighbors $u$ of $v$ could cause such a shift. 
However, we have the corresponding inequality for each of them. That is we must have $w_{i+1}<X$ for every in-neighbor $w$ of $v$. To see this, consider the case where $a_{i+1}$ has some out-neighbor before $a_j$, then $a_{i+1}a_j$ is also an arc in $E'$. In this case we must have $w_{i+1} < X$ for every in-neighbor $w$ of $v$, otherwise, we have $wv$ being mapped to $a_{i+1}a_j$ by the rounding which is a contradiction to the choice of $a_ia_j$. Now suppose $a_{i+1}$ has no out-neighbor before $a_j$. In this case according to constraint~\ref{C6} we have $u_{i+1} \le v_{j+1}$ (and in particular for every other in-neighbor $w$ of $v$, we have $w_{i+1} \le v_{j+1}$), and hence, $w_{i+1} <X$. This argument implies that we only need to consider one in-neighbor for $v$, namely $u$ to compute the probability of one arc of $D$ being mapped to $a_ia_j$.

First assume that $a_i$ has some out-neighbor after $a_j$. Then we have

\begin{allowdisplaybreaks}
\begin{align*}
\P[u'_i=v'_j=1 , u'_{i+1}=v'_{j+1}=0]&=\P[\max \{u_{i+1},v_{j+1}\} < X\leq \min \{u_i,v_j \}]\\ 
&= \min \{u_i,v_j \} - \max \{u_{i+1},v_{j+1}\}\\
&\leq u_i - v_{j+1} \leq u_i-v_s \\
&\leq \sum\limits_{\substack{t<j \\ a_ia_t \in E \\ a_t\in L(v)}} (v_{t} - v_{t+1}) = P_v,
\end{align*}
\end{allowdisplaybreaks}
where in the first inequality $a_s\in L(v)$ is the first out-neighbor of $a_i$ after $a_j$ and $v_s \leq v_{j+1}$ due to ~\ref{C4}, and the last inequality follows from inequality~\ref{C10}. 

Second, for the case where $a_i$ has no out-neighbor after $a_j$, we have 

\begin{align*}
\P[u'_i=v'_j=1 , u'_{i+1}=v'_{j+1}=0]&=\P[\max \{u_{i+1},v_{j+1}\} < X\leq \min \{u_i,v_j \}]\\
&= \min \{u_i,v_j \} - \max \{u_{i+1},v_{j+1}\}\\
&\leq u_i - u_{i+1} \\
&\leq \sum\limits_{\substack{t<j \\ a_ia_t \in E \\ a_t\in L(v)}} (v_{t} - v_{t+1}) = P_v,
\end{align*}
where the last inequality follows from inequality~\ref{C11}.

Having $uv$ mapped to $a_ia_j$ in the rounding step, we shift $v$ to $a_t$ with probability $P_{v,t}={\frac{(v_t - v_{t+1})}{P_v}}$. Note that the upper bound $P_v$ is independent from the choice of $u$ and $a_i$. Therefore, for a fixed $a_j$, the probability that $v$ is shifted from $a_j$ to $a_t$ is at most $ \frac{v_t - v_{t+1}}{P_v}\cdot P_v = v_t - v_{t+1}$.

\paragraph{Inductive step} For $k>1$, consider the oriented walk $u^0,\cdots ,u^k=v$. Before calling \textsc{Shift}$(f,u^1)$, this walk is mapped to some vertices in $H$. Without loss of generality, let us assume these vertices are $a_{\ell_0},a_{\ell_1},\cdots, a_{\ell_k}$. Note that $a_{\ell_i}$s may not be distinct. Once again we compute the contribution for a fixed $\ell_k=j$, that is the contribution of shifting $v$ from a fixed $a_{\ell_k}=a_j$ to $a_t$.

The algorithm calls \textsc{Shift}$(f,u^1)$ and, in procedure \textsc{Shift}, images of $u^1,u^2,$ $\dots,u^k=v$ are changed in this order. We are interested in probability of mapping $v$ from fixed $a_{\ell_k}=a_j$ to $a_t$. Analyzing the situation for $u^1$ is the same as the case for $k=1$. As induction hypothesis, assume for $u^1,\cdots, u^{k-1}$, the probability that the algorithm shifts image of $u^{k-1}$ to some $a_{\ell_{k-1}}$ is at most $u^{k-1}_{\ell_{k-1}}-u^{k-1}_{\ell_{k-1}+1}$. At this point $f(u^{k-1})=a_{\ell_{k-1}}$ and $f(v)=a_{\ell_k}$. Without loss of generality, assume $u^{k-1}v\in A(D)$ and the case where $vu^{k-1}\in A(D)$ is handled similarly. Note that $a_{\ell_{k-1}}a_{\ell_k}$ is not an arc in $A(H)$ as otherwise no change is required for image of $v$. Define set of indices
\begin{align*}
    B=\{r \mid r<\ell_k, a_{\ell_{k-1}}a_r \in A(H) , a_r \in L(v)\}.
\end{align*} 
The algorithm chooses $a_t$ where $a_t\in L(v),a_t<a_{\ell_k}$ and $a_{\ell_{k-1}} a_t\in A(H)$ with probability
 \begin{align*}
 \frac{v_t-v_{t+1}}{\sum\limits_{\substack{r\in B}} (v_{r} - v_{r+1})}
 \end{align*}

Let assume previous image of $u^{k-1}$ was $a_{\ell'_{k-1}}$ and it shifted to $a_{\ell_{k-1}}$. There are two cases to consider, namely $a_{\ell'_{k-1}}a_{\ell_k}$ is an arc of $H$ and it is an arc in $E'$. 

First suppose $a_{\ell'_{k-1}}a_{\ell_k}$ is an arc of $H$ when $u^{k-1}u^k \in A(D)$. Now we observe that $a_{\ell_{k-1}}$ does not have any out-neighbor $a_s$ with $s \ge \ell_k$. This is because $a_{\ell_{k-1}}a_{s},$ $a_{\ell_{k-1}}a_s \in A(H)$ and the min-ordering implies $a_{\ell_{k-1}}a_{\ell_k}\in A(H)$ which contradicts our assumption. Thus, by inequality \ref{C11} or \ref{C12} we get 
\begin{align*}
    u_{\ell_{k-1}}-u_{\ell_{k-1}+1}\leq \sum\limits_{r\in B} (v_{r} - v_{r+1}).
\end{align*} 
Therefore, by the induction hypothesis and the above inequality, the probability that our algorithm shifts the image of $v$ from $a_{\ell_k}$ to $a_t$ is at most
\begin{align*}
    \frac{(u_{\ell_{k-1}}-u_{\ell_{k-1}+1})(v_t-v_{t+1})}{\sum\limits_{r\in B} (v_{r} - v_{r+1})}\leq  v_{t} - v_{t+1}
\end{align*}

In the second case suppose $a_{\ell'_{k-1}}a_{\ell_k}$ is not an arc of $A(H)$, and hence, is an arc of $H'$. Notice that $a_{\ell'_{k-1}}$ has an out-neighbor (in-neighbor) $a_s$ before $a_{\ell_k}$ and $a_{\ell_k}$ has an in-neighbor $a_{s'}$ before $a_{\ell'_{k-1}}$. If $a_{\ell_{k-1}}$ has no out-neighbor (in-neighbor) after $a_{\ell_k}$, thus by inequality \ref{C11}, we get 
\begin{align*}
    u_{\ell_{k-1}}-u_{\ell_{k-1}+1}\leq \sum\limits_{r\in B} (v_{r} - v_{r+1})
\end{align*} 
Therefore, by the induction hypothesis and the above inequality, the probability that our algorithm shifts the image of $v$ from $a_{\ell_k}$ to $a_t$ is at most
\begin{align*}
    \frac{(u_{\ell_{k-1}}-u_{\ell_{k-1}+1})(v_t-v_{t+1})}{\sum\limits_{r\in B} (v_{r} - v_{r+1})}\leq  v_{t} - v_{t+1}
\end{align*}

Hence, we may assume that $a_{\ell_{k-1}}$ has an out-neighbor $a_q$ after $a_{\ell_k}$, and therefore $a_{s'}$ must be before $a_{\ell_{k-1}}$ as otherwise the min-ordering property would imply $a_{\ell_{k-1}}a_{\ell_k}$ is an arc in $H$. Hence, in this case, $a_{\ell_{k-1}}a_{\ell_k}$ is an arc in $E'$. Notice that we had $v_q,v_{\ell_k+1},u_{\ell'_k+1} <X \le v_{\ell_k},u_{\ell'_{k-1}},u_{\ell_{k-1}}$. Thus, the probability that this situation occurs is at most 
$u_{\ell'_{k-1}}-v_{\ell_k+1} \le u_{\ell_{k-1}}-v_{q}$ (due to inequality \ref{C4}). Now by inequality \ref{C10}, we get 
\begin{align*}
    u_{\ell_{k-1}}-v_{q} \leq \sum\limits_{r\in B} (v_{r} - v_{r+1})
\end{align*} 
Therefore, by the above inequality, the probability that our algorithm shifts the image of $v$ from $a_{\ell_k}$ to $a_t$ is at most
\begin{align*}
    \frac{(u_{\ell_{k-1}}-v_{q})(v_t-v_{t+1})}{\sum\limits_{r\in B} (v_{r} - v_{r+1})}\leq  v_{t} - v_{t+1}
\end{align*}
This completes this part of the proof.

Let $L(v)=\{a^v_1\cdots,a^v_k\}$. Clearly, during procedure \textsc{Shift}, image of $v$ can be shifted to $a^v_i$ from any of vertices $a^v_{i+1},\cdots, a^v_k$. For any fixed $a_j\in\{a^v_{i+1},\cdots, a^v_k\}$, this shift is initiated from vertices in $V(H)$ that are incident with some edges in $E'$, and reaches to $a_j$ to shift image of $v$. Shifting of image of $v$ happens because of missing edges from $a_j$ that is at most $|V(H)|-d^+(a_j)-d^-(a_j)\leq |V(H)|$ ($d^+(a_j)$ and $d^-(a_j)$ are out-degree and in-degree of $a_j$ respectively). Therefore, the contribution of $v$ and $a^v_i$ to the expected value of $W'$ is at most $(1+|V(H)|(k-i))c(v,a^v_i)(v_{a^v_i}-v_{a^v_{i+1}})$ where $(v_{a^v_i}-v_{a^v_{i+1}})$ is the upper bound on the probability provided before. Recall that $W'$ is the value of the
objective function with the final values $u'_{i}, v'_{j}$, after the rounding and all the shifting.
\end{proof}

\begin{theorem}\label{de-random}
Algorithm \ref{alg:app-MinHom} returns a homomorphism with the expected cost $|V(H)|^2$ times the optimal cost. The algorithm can be de-randomized to obtain a deterministic $|V(H)|^2$- approximation algorithm.
\end{theorem}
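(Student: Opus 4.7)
The plan is to combine Lemma~\ref{pr-bound} with linearity of expectation to establish the expected cost bound, then de-randomize by exhausting over the polynomially many critical values of the random variables $X$ and $Y$.

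For the expected cost, first observe that the LP objective $w = \sum_{v,i} c(v,a_i)(v_i - v_{i+1})$ is a lower bound on $w^*$: every homomorphism of $D$ to $H$ yields, by Lemma~\ref{one-to-one-ex}, a feasible integer solution of the extended system $\mathcal{S}$ with the same objective value, so the fractional optimum $w$ is at most the minimum cost $w^*$ of a homomorphism. The algorithm's output $f$ (which is a homomorphism from $D$ to $H$, guaranteed by the termination argument already given) has cost $w' = \sum_{v,i} c(v,a_i)(v'_i - v'_{i+1})$. By Lemma~\ref{pr-bound}, the expected contribution of the summand indexed by $(v,i)$ is bounded by $|V(H)|^2 \, c(v,a_i)(v_i - v_{i+1})$. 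Summing over all $v \in V(D)$ and all $i$ and applying linearity of expectation gives
\begin{equation*}
\mathbb{E}[w'] \;\leq\; |V(H)|^2 \sum_{v,i} c(v,a_i)(v_i - v_{i+1}) \;=\; |V(H)|^2 \, w \;\leq\; |V(H)|^2 \, w^{*},
\end{equation*}
which is exactly the claimed approximation guarantee in expectation.

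For de-randomization, I would apply the method of conditional expectations to the two random variables $X, Y \in [0,1]$. The rounding step in Line~\ref{rounding} depends on $X$ only through comparisons $X \leq u_i$, so the outcome of rounding is constant on each interval determined by the multiset of values $\{u_i : v \in V(D),\, 1 \leq i \leq p\}$; this yields at most $p\,|V(D)|+1$ distinct rounding outcomes. Similarly, the shift step chooses $a_{t_q}$ by comparing $Y$ against partial sums $\sum_{p=1}^{q} P_{v,t_p}$, which are rational combinations of the (fixed) LP values $v_t - v_{t+1}$. Across all vertices $v$ and all possible shift configurations, these partial sums form a finite set of thresholds of polynomial size. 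Hence the joint behavior of the algorithm is constant on each cell of a polynomial-size partition of $[0,1]^2$. We can enumerate one representative pair $(X,Y)$ per cell in polynomial time, run the remainder of the algorithm deterministically, and output the homomorphism of smallest cost. Since the average cost over all cells (weighted by cell measure) equals $\mathbb{E}[w']$, the best choice attains cost at most $|V(H)|^2 \, w^{*}$.

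The main conceptual obstacle is showing that the shift phase genuinely has only polynomially many ``behavioral cells'' in $Y$: a given value of $Y$ triggers different shifts at different vertices visited by the BFS, and the set of thresholds at vertex $v$ depends on which neighbor $u$ triggered the shift and on $f(u)$ at the moment of the shift. However, since the denominators $P_v$ and numerators $v_t - v_{t+1}$ are all drawn from the fixed LP solution (the variables are only reset from $1$ to $0$, never recomputed), the full collection of thresholds across the entire execution is a subset of a fixed polynomial-size set; this makes the conditional-expectation argument go through cleanly. Combining the two phases gives a deterministic polynomial-time $|V(H)|^2$-approximation, completing the proof.
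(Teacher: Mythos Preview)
Your proposal is correct and follows essentially the same approach as the paper: apply Lemma~\ref{pr-bound} termwise with linearity of expectation to get $\mathbb{E}[w']\le |V(H)|^2 w\le |V(H)|^2 w^*$, then de-randomize by enumerating the polynomially many critical thresholds for $X$ (the LP values $v_t$) and for $Y$ (the partial sums $\sum P_{v,t_p}$). Your discussion of why the $Y$-thresholds remain a fixed polynomial-size set despite the BFS dynamics is a point the paper glosses over, so your treatment is, if anything, slightly more careful; note only that what you call ``method of conditional expectations'' is really just exhaustive enumeration over the cells, which is also what the paper does.
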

\begin{proof} 
By Lemma~\ref{pr-bound} the expected value of $W'$ is
\begin{align*}
\mathbb{E}[W']&=\mathbb{E}\left[\sum\limits_{v,i} c(v,a_i)(v'_i-v'_{i+1})\right]\\
&=\sum\limits_{v,i} c(v,a_i)\mathbb{E}[v'_i-v'_{i+1}]\\
&\leq|V(H)|^2\sum\limits_{v,i} c(v,a_i)(v_i-v_{i+1})\\
&\leq |V(H)|^2W\\
&\leq |V(H)|^2W^*. 
\end{align*}
At this point we have proved that Algorithm~\ref{alg:app-MinHom} produces a homomorphism whose
expected cost is at most $|V(H)|^2$ times the minimum cost. It can be transformed to a deterministic algorithm as follows. There are only polynomially many values $v_t$ (at most $|V (D)|\cdot|V (H)|$). When $X$ lies anywhere between two such consecutive values, all computations will remain the same. Thus we can de-randomize the first phase by trying all these values
of $X$ and choosing the best solution. Similarly, there are only polynomially many values of the partial sums $\sum\limits_{i=1}^{p} P_{u,t_i}$ (again at most $|V(D)|\cdot|V (H)|$), and when $Y$ lies between two such consecutive values, all computations remain the same. Thus we can also de-randomize the second phase by trying all possible values and choosing the best. Since the expected value is at most $|V(H)|^2$ times the minimum cost, this bound also applies to this best solution.
\end{proof}
\section{Approximation for Digraphs with a k-min-ordering}\label{extended-min-section}
Digraphs admitting $k$-min-ordering ($k > 1$) do not admit a min-ordering or a conservative majority polymorphism. However, this does not rule out the possibility of a constant factor approximation algorithm. We show that they are in fact DAT-free digraphs and the \textsc{List Homomorphism} problem is polynomial-time solvable for this class of digraphs.

It turns out that digraphs admitting a $k$-min-ordering do not contain a DAT. Furthermore, \textsc{List Homomorphism} problem is polynomial-time solvable for this class of digraphs (Lemmas~\ref{List-extended-min-ordering-1} and \ref{List-extended-min-ordering-2}).  

In the rest of this section $\overrightarrow{C_k}$ denotes an induced directed cycle with vertices $\{0,1,\dots,k-1\}$ and the arc set  $\{01,12,\dots,(k-2)(k-1),(k-1)0 \}$.

\begin{lemma}\label{List-extended-min-ordering-2}
Let $H$ be a digraph that admits a $k$-min-ordering. Then LHOM($H$) is polynomial-time solvable.
\end{lemma}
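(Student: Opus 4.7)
The plan is to reduce LHOM($H$) to a collection of list-homomorphism problems on pairs of circularly consecutive levels, each of which is governed by a genuine min-ordering, and then invoke the standard ``pick the smallest element'' argument of Lemma~\ref{lemma-hom}. Let $\pi : V(H) \to \overrightarrow{C_k}$ be the homomorphism witnessing the $k$-min-ordering, set $V_i = \pi^{-1}(i)$ for $0 \le i \le k-1$, and recall that the given linear order $<$ on $V(H)$ restricts to a min-ordering on the induced subgraph of $H$ between any two circularly consecutive $V_i, V_{i+1}$. Any list homomorphism $h : V(D) \to V(H)$ composes with $\pi$ to a homomorphism $g := \pi \circ h : V(D) \to \overrightarrow{C_k}$, and conversely, once $g$ is fixed, the original problem becomes a list-homomorphism problem with respect to the restricted lists $L'(x) := L(x) \cap V_{g(x)}$.

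First, I would enumerate, for each weakly connected component of $D$, all homomorphisms from that component to $\overrightarrow{C_k}$. Fix a root $x_0$ in a component; every arc $uv \in A(D)$ forces $g(v) \equiv g(u) + 1 \pmod{k}$, so $g$ is determined by $g(x_0) \in \{0,\dots,k-1\}$ and a consistent $g$ exists iff every closed walk based at $x_0$ has net forward length divisible by $k$. This is tested in polynomial time by propagation along a spanning tree, yielding at most $k$ candidate maps $g$ per component which can be handled independently.

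Given a candidate $g$, set $L'(x) = L(x) \cap V_{g(x)}$ and run arc consistency on $(D, L')$. Since every arc of $H$ runs from some $V_i$ to $V_{i+1}$, each arc $xy \in A(D)$ is only tested against the subgraph of $H$ between $V_{g(x)}$ and $V_{g(y)} = V_{g(x)+1}$, on which $<$ is a min-ordering. Hence the proof of Lemma~\ref{lemma-hom} applies verbatim: if no list becomes empty after arc consistency, mapping each $x$ to the $<$-smallest element $a$ of $L'(x)$ yields a list homomorphism, because for any arc $xy \in A(D)$ arc consistency provides $a' \in L'(x)$ with $a'b \in A(H)$ and $b' \in L'(y)$ with $ab' \in A(H)$ (where $b$ is the smallest element of $L'(y)$), so $a \le a'$ and $b \le b'$ and the min-ordering on $V_{g(x)} \cup V_{g(y)}$ forces $ab \in A(H)$. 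The algorithm therefore accepts iff, for some choice of $g$ per component, arc consistency leaves every restricted list nonempty. The only real obstacle is notational bookkeeping: one must check that the per-component enumeration of $g$ faithfully covers all possible target homomorphisms and that arc consistency on the restricted lists coincides with arc consistency on the corresponding consecutive-level subgraph, both of which follow directly from the fact that arcs of $H$ only cross between consecutive levels.
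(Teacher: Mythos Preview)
Your proposal is correct and follows essentially the same route as the paper: both arguments decompose along the level structure $V_0,\dots,V_{k-1}$, fix a homomorphism $g:D\to\overrightarrow{C_k}$ (the paper does this by picking an $\ell$ with $L(x)\cap V_\ell\neq\emptyset$ after a single arc-consistency pass on the full lists, you do it by enumerating the at most $k$ choices per component and restricting lists first), and then invoke the ``take the smallest element'' argument of Lemma~\ref{lemma-hom} on each consecutive pair $V_i,V_{i+1}$. The only cosmetic difference is the order of operations---you restrict then run arc consistency, the paper runs arc consistency once on the unrestricted lists and observes that, because every arc of $H$ crosses exactly one level, this already makes each level-restricted list arc consistent.
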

\begin{proof}
 Let $D,H,L$ be an instance of LHOM($H$) where $D$ is the input digraph and $L$ is the set of lists, i.e. for every $x \in V(D)$, $L(x) \subseteq V(H)$. We run the arc consistency procedure and suppose the lists are not empty after the arc consistency procedure. Let $V_0,V_1,\dots,V_{k-1} $ be the sets of vertices of $H$ according to the $k$-min-ordering $<$. We also note that if there exists a homomorphism $\phi: V(D) \rightarrow V(H)$, then $D$ must be homomorphic to $\overrightarrow{C_k}$ because $H$ is homomorphic to $\overrightarrow{C_k}$. This means the vertices of $D$ are partitioned into $D_0,D_1,\dots,D_{k-1}$ where the arcs of $D$ go from some $D_i$ to $D_{i+1}$, $0 \le i \le k-1$ (sum modulo $k$). For simplicity we may assume that $D$ is weakly connected; i.e. the underlying graph of $D$ is connected. Moreover, without loss of generality let $x$ be an arbitrary vertex in $D_0$ ($D_0$ is not empty). Now the vertices of $D_0$ are mapped to some $V_{\ell}$, for some $0 \le \ell \le k-1$. In other words, $L(x) \cap V_{\ell} \ne\emptyset$.




Now for every $y \in D_{j}$ and every $0 \le j \le k$, set $f(y)$ to be the smallest element in $L(y) \cap V_{j+\ell}$ according to $<$. Observe that the restriction of $<$ on $V_i \cup V_{i+1}$, $0 \le i \le k-1$, is a min-ordering. Suppose $yz$ is an arc of $D$ with $y \in D_{j}$ and $z \in D_{j+1}$. We show that $f(y)f(z)$ is an arc of $H$. Suppose $f(y)=a$ and $f(z)=b$. Since we run the arc-consistency procedure, there exists some element $b' \in L(z) \cap V_{\ell+j+1}$ such that $ab' \in A(H)$, and there exists some $a' \in L(y) \cap V_{\ell+j}$ so that $a'b \in A(H)$. The ordering $<$ on $V_{\ell+j} \cup V_{\ell+j+1}$ is a min-ordering, and hence, $ab$ is an arc of $H$.
\end{proof}
\begin{lemma}\label{List-extended-min-ordering-1}
Let $H$ be a digraph that admits a $k$-min-ordering. Then $H$ does not contain a DAT.
\end{lemma}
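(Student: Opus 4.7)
The plan is to derive a contradiction from the joint assumption that $H$ admits a $k$-min-ordering and contains a DAT, with witnesses $a,b,c$, invertible pair $(\alpha,\beta)$, and directed paths $P_1,P_2,P_3$ in $\widehat{H^3}$ all terminating at $(\alpha,\beta,\beta)$. I first use the homomorphism $f:V(H)\to \overrightarrow{C_k}$ underlying the $k$-min-ordering to attach to each vertex $v$ a level $\mathrm{lev}(v)=f(v)\in\{0,1,\dots,k-1\}$. The key observation is that every arc of $\widehat{H^m}$ is either ``forward'', in which case every coordinate advances by one level in $\overrightarrow{C_k}$, or ``backward'', in which case every coordinate regresses by one level. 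Consequently, the pairwise differences of coordinate levels are invariants along every directed path in $\widehat{H^m}$.

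Applied to $P_1,P_2,P_3$, which all end at $(\alpha,\beta,\beta)$ but start at the cyclic rotations $(a,b,c),(b,a,c),(c,a,b)$, this invariance yields after a short calculation $\mathrm{lev}(a)=\mathrm{lev}(b)=\mathrm{lev}(c)$ and $\mathrm{lev}(\alpha)=\mathrm{lev}(\beta)$. I would then rule out the degenerate case $\alpha=\beta$: the vertex $(\alpha,\alpha,\alpha)$ of $\widehat{H^3}$ has no incoming arc, since an incoming forward arc would require both $a_1\alpha\in A(H)$ and $a_1\alpha\notin A(H)$, and an incoming backward arc would require both $\alpha a_1\in A(H)$ and $\alpha a_1\notin A(H)$. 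Hence under $\alpha=\beta$ all three paths collapse to a point, forcing $a=b=c=\alpha$, a trivial DAT, so I may assume $\alpha\neq\beta$.

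The heart of the argument is a monovariant on $\widehat{H^2}$. For any pair $(x,y)$ of vertices of $H$ with $\mathrm{lev}(x)=\mathrm{lev}(y)$, set $\phi(x,y):=+1$ if $x<y$ in the $k$-min-ordering, and $\phi(x,y):=-1$ if $y<x$. I would show that $\phi$ is preserved along every arc of $\widehat{H^2}$ between same-level pairs. Indeed, for a forward arc $(z,w)\to(z',w')$ with $z,w\in V_\ell$ and $z',w'\in V_{\ell+1}$, if $z<w$ yet $w'\le z'$, then applying the min-ordering of $<$ restricted to $V_\ell\cup V_{\ell+1}$ (a min-ordering by the definition of a $k$-min-ordering) to the arcs $zz'$ and $ww'$ of $H$ forces $zw'\in A(H)$, contradicting the defining condition $zw'\notin A(H)$ of the $\widehat{H^2}$-arc; backward arcs are handled symmetrically using the min-ordering on $V_{\ell-1}\cup V_\ell$.

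Finally, since $(\alpha,\beta)$ is invertible there is a walk in $\widehat{H^2}$ from $(\alpha,\beta)$ to $(\beta,\alpha)$, and because $\mathrm{lev}(\alpha)=\mathrm{lev}(\beta)$ and the level difference between the two coordinates is invariant along arcs of $\widehat{H^2}$, this walk stays entirely in the set of same-level pairs, so $\phi$ is constant along it. This forces $\phi(\alpha,\beta)=\phi(\beta,\alpha)$, which is absurd since $\alpha\neq\beta$ gives $\phi(\alpha,\beta)=-\phi(\beta,\alpha)\neq 0$. The only real subtlety is the level bookkeeping in the first step (ensuring $a,b,c$ and $\alpha,\beta$ are squeezed into the common-level configurations above); once that is in hand, the monovariant step is essentially the standard proof that min-ordered digraphs admit no non-trivial invertible pair, transplanted to each bipartite slab $V_\ell\cup V_{\ell+1}$ provided by the $k$-min-ordering.
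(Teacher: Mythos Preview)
Your proof is correct and takes a genuinely different route from the paper's. The paper argues algebraically: it exhibits an explicit binary polymorphism $f$ (acting as $\min$ on same-level pairs and as first projection on cross-level pairs) together with an explicit ternary polymorphism $g$ (acting as majority on cross-level pairs), and then appeals to the characterization of DAT-free digraphs as precisely those whose vertex pairs can be split so that a conservative semi-lattice handles one part and a conservative majority handles the other. You instead work directly from the combinatorial definition of a DAT: the level homomorphism to $\overrightarrow{C_k}$ forces $\alpha$ and $\beta$ into a common block $V_\ell$, and then the standard ``no invertible pair under a min-ordering'' monovariant, run on the bipartite slab $V_\ell\cup V_{\ell+1}$, finishes the job. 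Your approach is more elementary and self-contained (no black-box polymorphism characterization is needed), while the paper's approach situates the lemma within the broader algebraic framework and produces the polymorphisms themselves as a by-product. One minor remark: in your level bookkeeping you actually only need $P_1$ and $P_2$ --- comparing second and third coordinates along each already gives $\mathrm{lev}(b)=\mathrm{lev}(c)$ and $\mathrm{lev}(a)=\mathrm{lev}(c)$, after which the first two coordinates of $P_1$ yield $\mathrm{lev}(\alpha)=\mathrm{lev}(\beta)$ --- so $P_3$ plays no role in your argument.
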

\begin{proof} 
It was shown in \cite{hell2011dichotomy} that digraph $H_1$ is DAT-free if and only if $V(H_1) \times V(H_1)$ can be partitioned into two sets $V_f,V_g$ where there exist two polymorphisms $f,g$ over $H_1$ such that 
$f$ is a semilattice on $V_f$ and $g$ is a majority over $V_g$. Let $V_0,V_1,\dots,V_{k-1} $ be the vertices of $H$ according to the $k$-min-ordering $<$. Define the binary polymorphism $f$ over $H$ as follows. 
\begin{enumerate}
    \item $f(x,y)=f(y,x) =x$ when $x,y \in V_i$ and $x<y$ (in the ordering $<$),
    \item $f(x,y)=x$, $f(y,x)=y$ when $x \in V_i$ and $y \in V_j$, $0 \le i \ne j \le k-1$,
    \item $f(x,x)=x$ for every $x \in V(H)$.
\end{enumerate}

First we show that $f$ is a polymorphism on $H$ and it is semilattice on $V_f = \{(x,y) \mid x,y \in V_i, 0 \le i \le k-1\}$. Let $xx',yy' \in A(H)$ where $x,y \in V_i$ and $x',y' \in V_{i+1}$. Now $f(x,y)f(x',y') \in A(H)$ because between $V_i,V_{i+1}$ we have a min-ordering, implying that $f$ is a polymorphism. It is also easy (since $<$ is min-ordering) to see that $f$ is associative. Now, define the ternary polymorphism $g$ over $H$ as follows :  
\begin{enumerate}
    \item $g(x,y,z)=x$ when $x,y,z \in V_i$,
    \item $g(x,y,z)=x$ when $x \in V_i$, $y \in V_j$, $z \in V_{\ell}$ and $i,j,\ell$ are all distinct,
    \item $g(x,y,z)=g(z,x,y)=g(x,z,y)=g(z,y,x)=g(y,z,x)=g(y,x,z)=x$ when $x,y \in V_i$, $x < y$ (in the ordering $<$), and $z \in V_j$, $i \ne j$,
    \item $g(x,x,y)=g(x,y,x)=g(y,x,x)=x$ when $x \in V_i$ and $y \in V_j$, $i \ne j$, 
    \item $g(x,x,x)=x$ for all $x \in V(H)$. 
\end{enumerate}

We show that $g$ is a polymorphism over $H$, and therefore, it is a majority polymorphism over the pairs in $V_g=\{ (x,y) \mid x \in V_i, y \in V_j, i \ne j \}$. By definition, we need to show that
\begin{align*}
    \forall xx',yy',zz'\in A(H)\implies g(x,y,z)g(x',y',z') \in A(H)
\end{align*}

\CCase{1}  If $x,y,z$ all belong to the same $V_i$, then $x',y',z' \in V_{i+1}$, and hence, by definition
\begin{align*}
    g(x,y,z)g(x',y',z') = xx' \in A(H).
\end{align*}

\CCase{2} If $x,y,z$ belong to three different partite sets, then $x',y',z'$ belong to three distinct partite sets, and hence, $$g(x,y,z)g(x',y',z')=xx' \in A(H).$$ 

\CCase{3} If $x,y$ belong to $V_i$ (possibly $x=y$) and $z \in V_j$, then $x',y' \in V_{i+1}$ and $z' \in V_{j+1}$. When $x<y$ and $x' <y'$, then by definition $g(x,y,z)g(x',y',z')=xx' \in A(H)$. 
Now suppose that $x <y $ and $y'<x'$.  Since $<$ is a min-ordering on $V_i,V_{i+1}$, we have $xy' \in A(H)$, and hence, $$g(x,y,z)g(x',y',z')=xy' \in A(H).$$  
By symmetry, the other remaining cases can be handled similarly. 
\end{proof}

\begin{theorem}
There is a $|V(H)|^2$-approximation algorithm for MinHOM($H$) when the target digraph $H$ admits a k-min-ordering, $k >1$. 
\end{theorem}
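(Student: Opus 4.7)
The plan is to adapt the LP, rounding, and shifting machinery from Sections~\ref{min-to-max} and \ref{sec-Approx-ALG} to the $k$-min-ordering setting. Since $H$ admits a $k$-min-ordering, $V(H)$ is partitioned into $V_0,\dots,V_{k-1}$ corresponding to a homomorphism $H\to \overrightarrow{C_k}$, so every arc of $H$ goes from some $V_i$ to $V_{i+1}$ (indices mod $k$), and the restriction of the linear order $<$ to $V_i\cup V_{i+1}$ is an ordinary min-ordering of $H[V_i\cup V_{i+1}]$. The essential observation is that everything in Sections~\ref{min-to-max}--\ref{sec-Approx-ALG} that uses the min-ordering is local to pairs of consecutive partite sets, so it transfers verbatim.

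First I would preprocess: run arc-consistency and (if lists are non-empty) apply the argument of Lemma~\ref{List-extended-min-ordering-2} to each weakly connected component of $D$, fixing for each component an offset so that the vertices of $D$ inherit a partition $D_0,\dots,D_{k-1}$ with arcs going $D_i\to D_{i+1}$ and $L(x)\subseteq V_i$ for every $x\in D_i$. Then I would build, for each $i$, the augmented sub-digraph on $V_i\cup V_{i+1}$ by adding the missing arcs $E'_i$ as in Section~\ref{min-to-max}, so that $<$ becomes a min-max-ordering on each $H[V_i\cup V_{i+1}]\cup E'_i$ (Observation~\ref{observation1} applies to each pair). Write the extended LP~$\mathcal{S}$ using the global ordering $<$, with the arc constraints \ref{C6}--\ref{C11} added only across consecutive partite sets (no arc of $H$ crosses non-consecutive $V_i$'s, so all other such inequalities are vacuous), plus constraints \ref{C1}--\ref{C5} and the pair-consistency constraint \ref{C12} unchanged. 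The proofs of Theorem~\ref{LP-minmax} and Lemma~\ref{one-to-one-ex}, being purely local to consecutive pairs, give a one-to-one correspondence between integer solutions of the extended $\mathcal{S}$ and homomorphisms $D\to H$.

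Next I would run Algorithm~\ref{alg:app-MinHom} unchanged: solve the LP, pick $X\in[0,1]$ uniformly and round to obtain a homomorphism $f:D\to H'$, where $H'=\bigcup_i\bigl(H[V_i\cup V_{i+1}]\cup E'_i\bigr)$; then pick $Y\in[0,1]$ and perform the BFS-like \textsc{Shift} to eliminate arcs mapped into $E'=\bigcup_i E'_i$. Every shift replaces the image of a vertex in some $D_i$ by an earlier element of $V_i$ in the min-ordering restricted to $V_i\cup V_{i\pm 1}$, so Lemmas~\ref{set-index-shifting} and~\ref{shifting} apply on each consecutive pair and \textsc{Shift} still terminates in polynomial time, producing a valid homomorphism $D\to H$. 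The expected-cost analysis of Lemma~\ref{pr-bound} is entirely local to a pair $(V_i,V_{i+1})$ (all inequalities invoked, \ref{C9}--\ref{C12}, concern only consecutive partite sets), so the bound $|V(H)|^2\,c(v,a_t)(v_t-v_{t+1})$ on the expected contribution of each summand carries over. Summing and de-randomizing exactly as in Theorem~\ref{de-random} yields a deterministic $|V(H)|^2$-approximation.

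The main obstacle is to verify that shifts propagating through all $k$ partite sets do not cycle or blow up. A shift in $V_i$ may force shifts in $V_{i+1}$ and $V_{i-1}$, which can in turn propagate around the cycle $\overrightarrow{C_k}$. The saving grace is the invariant already used in Lemma~\ref{shifting}: every shift strictly decreases the image in the relevant min-ordering on $V_j\cup V_{j\pm 1}$, so a vertex of $D_j$ can be shifted at most $|V_j|\le |V(H)|$ times; combined with the fact that shifts are only triggered by arcs currently mapped into $E'$, the total work is polynomial. The secondary subtlety is that, during \textsc{Shift}, the bound $P_v>0$ used in Lemma~\ref{set-index-shifting} must still be enforced by constraint~\ref{C12} restricted to pairs $(u,v)$ with $u\in D_{i-1}$, $v\in D_i$ or $u\in D_i$, $v\in D_{i+1}$, which is automatic because $(a_i,a_j)\in L(u,v)$ already forces $a_j$ to lie in the correct partite set.
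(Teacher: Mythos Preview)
Your approach is essentially the paper's: partition $H$ and $D$ into $k$ layers, write the localized LP (the paper's constraints \ref{A1}--\ref{A12} are exactly your \ref{C1}--\ref{C12} restricted to consecutive layers $V_i,V_{i+1}$), round with $X$, shift with $Y$ via BFS, and invoke the analysis of Lemma~\ref{pr-bound} layer by layer. Your discussion of why shifts cannot cycle (each shift strictly lowers the image within its layer) is a point the paper leaves implicit.

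There is, however, one genuine gap. You fix \emph{one} offset per weakly connected component of $D$ (via the argument of Lemma~\ref{List-extended-min-ordering-2}) and then restrict $L(x)\subseteq V_i$ accordingly. But Lemma~\ref{List-extended-min-ordering-2} is about feasibility, not optimality: the minimum-cost homomorphism may correspond to a different offset. Once the lists are pinned to a single offset, the LP is only a relaxation of that restricted subproblem, so its optimum can strictly exceed the true $OPT$, and the chain $w\le w^*\le w'$ underlying Theorem~\ref{de-random} breaks at the first inequality. The paper handles this explicitly: for a fixed vertex $x$ of (each weakly connected component of) $D$, it runs the entire pipeline once for each of the $k$ homomorphisms $\psi_\ell:D\to\overrightarrow{C_k}$ with $\psi_\ell(x)=\ell$, $\ell=0,\dots,k-1$, and outputs the cheapest result. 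Since $k\le|V(H)|$ is constant and components can be handled independently (costs are additive), this is still polynomial, and for the correct $\ell$ the LP optimum is indeed at most $OPT$. With this enumeration added, your argument matches the paper's.
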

\begin{proof}
Let $V_0,V_1,\dots,V_{k-1}$ be a partition of the vertices of $H$ according to the k-min-ordering; i.e. every arc of $H$ is from a vertex in $V_{l}$ to a vertex in $V_{l+1}$, $0 \le k-1$ (sum module $k$).   Clearly a mapping $g : V(H)\to \overrightarrow{C_k}$ with 
$g(a)=l$ when $a \in V_{l}$, $ l \in \{0,1,\dots,k-1\}$, is a homomorphism from $H$ to $\overrightarrow{C_k}$

Let $D$ be the input digraph together with the costs. Observe that if $D$ is homomorphic to $H$, then $D$ must be homomorphic to $\overrightarrow{C_k}$. We may assume that $D$ is weakly connected. Otherwise, each weakly connected component of $D$ is treated separately.

Let $x$ be a fixed vertex in $D$ and let $\psi_{\ell}$ be a homomorphism from $D$ to $\overrightarrow{C_k}$ where $\psi_{\ell}(x)=\ell$, $\ell \in \{0,1,\dots,k-1\}$. We design an approximation algorithm for MinHOM($H$) in which $x$ is mapped to $V_{\ell}$ of $H$. In order to find the approximation algorithm for MinHOM($H$) for the given digraph $D$, we consider each homomorphism $\psi_{\ell}(x)=\ell$, $\ell \in \{0,1,\dots,k-1\}$ and find an approximation algorithm from $D$ to $H$ corresponding to $\psi_{\ell}$ and output the one with best performance.  
For simplicity of notations we work with $\phi=\psi_0$. Let $U_0,U_1, U_2,\dots,U_{k-1}$ be the partition of the vertices in $D$ under $\phi$, i.e. $\phi^{-1}(\ell)=U_{\ell}$.  

Consider the following LP with set of constraints called $\mathcal{S}_k$. For every $u \in U_{\ell}$ and every 
$a_i \in V_{\ell}$, $\ell \in \{0,1,\dots,k\}$ define a variable $0 \le u_i \le 1$. For every vertex $a_i \in V_{j}$, $j \in \{0,1,\dots,k-1\}$,
let $\ell^+(i)$ be the first index $i'$ such that $b_{i'} \in V_{j+1}$ in the ordering $<$ such that $a_ib_{i'} \in A(H)$ and let $\ell^{-1}(i)$ be the first $c_{r} \in V_{j-1}$ in the ordering $<$ such that $c_ra_i \in A(H)$.

\begin{tcolorbox}[colback=white]
\begin{center}
\scalebox{0.8}{
\begin{tabular}{r l l l}
\text{min} & $\sum\limits_{\substack{\ell \in \{0,1,\dots,k-1\} \\ v \in U_{\ell},i \in V_{\ell}}} c(v,a_i)(v_i-v_{i+1})$&&\\
\text{subject to:} &
 $v_i \geq 0$ & & \newtag{(A1)}{A1}\\
& $v_1  =  1$  & $\forall \ell \text{ and every }  v \in U_{\ell}, a_i \in V_{\ell}$ &  \newtag{(A2)}{A2}\\
& $v_{p+1}  =  0$  & $|V(H)|=p$ & \newtag{(A3)}{A3}\\
& $v_{i+1}  \leq  v_{i}$ & $\forall \ell \text{ and every }  v \in U_{\ell}, a_i \in V_{\ell}$ & \newtag{(A4)}{A4}\\
& $v_{i+1}  =  v_{i}$ & $\text{if } a_i \not\in L(v)$ & \newtag{(A5)}{A5}\\
& $u_{i}  \leq  v_{l^+(i)}$ & $\forall\text{ } uv \in A(D)$ & \newtag{(A6)}{A6}\\
& $v_{i}  \leq  u_{l^-(i)}$ & $\forall\text{ } uv \in A(D)$&\newtag{(A7)}{A7}
\end{tabular}}
\end{center}
\end{tcolorbox}

Let $a_1,a_2,\dots,a_p$ be the vertices in $V_{\ell}$ according to the  $k$-min-ordering $<$,  and let  
$b_1,b_2,\dots,b_q$ be the vertices in $V_{\ell+1}$ according to $<$. 

Let $E=A(H)$ and define $H'$ to be the digraph with vertex set $V(H)$ and arc set
$E \cup E'$. Here $E'$ is the set of arcs added into $A(H)$ so that the resulting digraph admit a $k$-min-max-ordering. Note that $E$ and $E'$ are disjoint sets. Let $E'_{\ell}$ denote the set of all the pairs $(a_i,b_j) \in V_{\ell} \times V_{\ell+1}$ such
that $a_ib_j$ is not an arc of $H$, but there is an arc $a_ib_{j'}$ of $H$ with $j' < j$ and an arc $a_{i'}b_j$
of $H$ with $i' < i$. Observe that $E'= \bigcup_{\ell=0}^{\ell=k-1} E'_{\ell}$.

For every arc $e = a_ia_j \in E'_{\ell}$ and every arc $uv \in A(D)$, $u \in U_{\ell}, v \in U_{\ell+1}$  three of the following set of  inequalities is added to ${\cal S}$ (i.e. either \ref{A8}, \ref{A11} or \ref{A9}, \ref{A10} or \ref{A9},\ref{A11}).


\begin{tcolorbox}[colback=white]
\begin{center}
\scalebox{0.8}{
\begin{tabular}{l l l l l}
$v_j \leq u_s$ & $+$ & $\sum\limits_{\substack{ a_t \in L(u)\\ a_tb_j \in E_{\ell} \\ t<i }}(u_t-u_{t+1})$ & \text{if $a_s$ is the first in-neighbor of $b_j$ after $a_i$} &\newtag{(A8)}{A8}\\ 
$v_j \leq v_{j+1}$ & $+$ & $\sum\limits_{\substack{ a_t \in L(u)\\ a_tb_j\in E_{\ell}\\ t<i}}(u_t-u_{t+1})$ & \text{if $b_j$ has no in-neighbor after $a_i$} &\newtag{(A9)}{A9}\\
$u_i \leq v_s$ & $+$ & $\sum\limits_{\substack{ b_t \in L(v) \\ a_ib_t\in E_{\ell}\\ t<j}}(v_t-v_{t+1})$ & \text{if $b_s$ is the first out-neighbor of $a_i$ after $b_j$} & \newtag{(A10)}{A10}\\
$u_i \leq u_{i+1}$ & $+$ & $\sum\limits_{\substack{ b_t \in L(v)\\ a_ib_t\in E_{\ell} \\ t<j}}(v_t-v_{t+1})$ & \text{if $a_i$ has no out-neighbor after $b_j$} & \newtag{(A11)}{A11}
\end{tabular}}
\end{center}
\end{tcolorbox}

Moreover, by pair consistency, we can add the following constraints for every $u \in U_{\ell}$ and every $v \in U_{\ell'}$ in $V(D)$ and $a_i \in L(u)$:
\begin{center}
\begin{tabular}{l l l l}
$u_i-u_{i+1}$ & $\le$ & $\sum\limits_{\substack{j: \\(a_i,b_j) \in L(u,v)}}(v_j-v_{j+1})$ &\newtag{(A12)}{A12}
\end{tabular}
\end{center}
Let the extended of $\mathcal{S}_k$ be denoted by $\widehat{\mathcal{S}_k}$. 

By an argument similar to that in the previous section, one can show the following that there is a one-to-one correspondence between the homomorphisms from $D$ to
$H$ and integer solutions to $\widehat{\mathcal{S}_k}$.

In what follows we outline the process of rounding the fractional solutions of the LP to obtain an integral solution, and hence, a homomorphism from $D$ to $H$ (see \ref{shift-extended-pic}). In the first stage of the algorithm, we use a random variable $X\in [0,1]$ and round the fractional values according to $X$. This means, if $u_i < X$ then $u'_i$ is set to zero, otherwise we set $u'_i=1$. 

The intention is to map $u \in D$ to the vertex $a_i$ of $H$ when $u'_i=1$ and $u'_{i+1}=0$. However, we may set $u'_{i}=v'_j=1$, $u'_{i+1}=v'_{j+1}=0$ where $u \in U_{\ell},v \in U_{\ell+1}$, $a_i \in V_{\ell}$, $b_j \in V_{\ell}+1$ and $a_ib_j \in E'_{\ell}$, i.e. $a_ib_j$ is not an arc of $H$ but it is one of the added arcs into $H$. In other words, what we have obtained would not be a homomorphism, and hence, we have to fix this partial integral assignment. To keep track of fixings, we may assume sum $i+j$ is maximum. 

We may assume that $b_j$ does not have any in-neighbor in $V_{\ell}$ after $a_i$. Now we use a random variable $Y\in [0,1]$ to select an out-neighbor $b_t \in V_{\ell+1}$ of $a_i$ before (in the ordering $<$) $b_j$ and shift the image of $v$ from $b_j$ to $b_t$. The vertex $b_t$ is selected according to random variable $Y$ with the same rule as the one described in Section \ref{sec-Approx-ALG} (see the description after Lemma \ref{set-index-shifting}). However, this could force us to shift the image of some out-neighbor of $v$, say $w \in V_{\ell+2}$ (subscript in modulo $k$). Therefore, we deploy a BFS search (applying a version of shift procedure in Algorithm \ref{alg:app-MinHom} ) to fix the images of the vertices of $D$ that may need to be changed because of the initial change in shifting the image of $v$ to $b_t$ (see the Figure \ref{shift-extended-pic}). 
We use the same strategy as used in the case of the min-ordering to round the values of $\widehat{\mathcal{S}_k}$ and obtain an integral solution. The calculation to obtain the approximation ratio is almost identical to ones in proof of Lemma \ref{pr-bound}. 
\end{proof}
\begin{figure}[t]
  \centering
  \begin{subfigure}{.35\linewidth}
    \includegraphics[width = \linewidth]{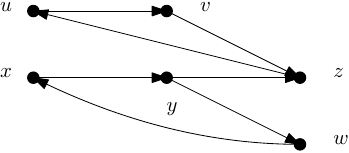}
    \caption{Input digraph $D$}
  \end{subfigure}
  \hspace{4em}
  \begin{subfigure}{.35\linewidth}
    \includegraphics[width = \linewidth]{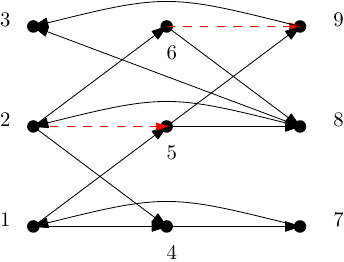}
    \caption{Target digraph $H$}
  \end{subfigure}
  \caption{An illustration of the algorithm for $k$-min-ordering. In digraph $H$, $1,2,3,4,5,6,7,8,9$ is a $3$-min-ordering. The dash arcs are the missing arcs. Suppose after the first step of rounding $u'_2=v'_5=z'_8=1$, $v'_6=u'_3=z'_9=0$.   Then the algorithm shifts the image of $v$ from $5$ to $4$ and consequently the image of $z$ from $8$ to $7$.}
  \label{shift-extended-pic}
\end{figure}

\section{A Dichotomy for Graphs}\label{sec-majority}

Feder and Vardi~\cite{feder1998computational} proved that if a graph $H$ admits a conservative majority polymorphism, then LHOM(H) is polynomial-time solvable. Later, Feder \emph{et al.,}~\cite{feder2003bi} showed that LHOM(H) is polynomial-time solvable if and only if $H$ is a \textit{bi-arc} graph. Bi-arc graphs are defined as follows.

Let $C$ be a circle with two specified points $p$ and $q$ on $C$. A bi-arc is an ordered pair of arcs $(N,S)$ on $C$ such that $N$ contains $p$ but not $q$, and $S$ contains $q$ but not $p$. A graph $H$ is a bi-arc graph if there is a family of bi-arcs $\{(N_x, S_x) : x \in V(H)\}$ such that, for any $x, y \in V(H)$, not necessarily distinct, the following hold:
\begin{itemize}
\item[--] if $x$ and $y$ are adjacent, then neither $N_x$ intersects $S_y$ nor $N_y$ intersects $S_x$;
\item[--] if $x$ and $y$ are not adjacent, then both $N_x$ intersects $S_y$ and $N_y$ intersects $S_x$.
\end{itemize}

We shall refer to $\{(N_x, S_x) : x \in V(H)\}$ as a bi-arc representation of $H$. Note that a bi-arc representation cannot contain bi-arcs $(N , S), (N',S')$ such that $N$ intersects $S'$ but $S$ does not intersect $N'$ and vice versa. Furthermore, by the above definition a vertex may have a self loop. 
\begin{theorem}[\cite{brewster2008near,feder2003bi}]
A graph admits a conservative majority polymorphism if and only if it is a bi-arc graph.
\end{theorem}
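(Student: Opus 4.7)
The plan is to establish the equivalence in both directions separately, reducing each to a combinatorial/geometric verification that exploits the structure described just before the theorem.

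For the direction (bi-arc implies conservative majority), let $H$ be a bi-arc graph with representation $\{(N_x, S_x) : x \in V(H)\}$ on a circle $C$ with poles $p, q$. I would construct a ternary operation $\mu$ as follows. Each $N_x$ contains $p$ but not $q$, so it has two endpoints, both on the $p$-side of $C$; call them the clockwise endpoint $n_x^+$ and the counter-clockwise endpoint $n_x^-$, and define $s_x^+, s_x^-$ analogously for $S_x$. Given $x, y, z \in V(H)$, take $\mu(x,y,z)$ to be the vertex in $\{x,y,z\}$ whose value $n_\cdot^+$ is the median of the three clockwise northern endpoints (with ties broken by $n_\cdot^-$, then by $s_\cdot^+$, then by $s_\cdot^-$). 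Conservativeness is immediate, and the majority identity holds because if two of the three inputs coincide, so do two of the three endpoints and the median equals that repeated value.

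For the polymorphism property, suppose $xx', yy', zz' \in E(H)$. The bi-arc condition says that $N_x \cap S_{x'} = \emptyset$, $N_{x'} \cap S_x = \emptyset$, and analogously for the $y, z$ pairs. I would argue that the cyclic order of $n_x^+, n_y^+, n_z^+$ on the $p$-side is compatible with the cyclic order of $s_{x'}^\pm, s_{y'}^\pm, s_{z'}^\pm$ on the $q$-side, because the pairwise non-intersection constraints essentially glue the two sides via a monotone correspondence. Consequently, the median vertex on the left has an $N$-arc whose endpoints lie in a position that still avoids the median $S$-arc on the right, giving the required edge $\mu(x,y,z)\mu(x',y',z')$. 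Making this precise amounts to a finite case analysis over the possible cyclic orderings of the nine arcs.

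For the converse (conservative majority implies bi-arc), I would proceed by obstruction. A result of Feder--Hell--Huang characterizes non-bi-arc graphs by a concrete list of forbidden induced substructures (``edge asteroids'' and related configurations). For each such forbidden configuration I would identify three vertices $a, b, c$ together with auxiliary ``witness'' structure (walks/paths that play the role of $P_1, P_2, P_3$ in the DAT framework recalled in Definition \ref{def-DAT}) such that any ternary polymorphism $\mu$ applied to $(a,b,c)$ is incompatible with $\mu(a,b,c) = a$, with $\mu(a,b,c) = b$, and with $\mu(a,b,c) = c$ simultaneously, exactly as indicated in the remark following Definition \ref{def-DAT}. This contradicts conservativeness, so $H$ must be bi-arc.

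The main obstacle will be verifying the polymorphism condition in the forward direction: the naive ``median of $n_\cdot^+$'' rule will need to be robust under all the possible interleavings of the three northern endpoints with the three southern endpoints on the opposite pole, and ties must be broken consistently so that adjacency is preserved. The clean way out may be to define $\mu$ not by a single coordinate but by a two-dimensional median that respects both $(n_\cdot^+, s_\cdot^-)$ pairs, since the bi-arc non-intersection condition is genuinely a two-sided constraint; constructing such a joint median and checking that it is well-defined will be the technical heart of the argument.
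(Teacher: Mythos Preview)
The paper does not prove this theorem at all: it is stated with citations to \cite{brewster2008near,feder2003bi} and used as a black box, so there is no ``paper's own proof'' to compare your proposal against. Your write-up is therefore not a reconstruction of anything in the paper but an attempt to reprove a cited result from the literature.

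On the substance of your attempt: the backward direction (non--bi-arc implies no conservative majority) via forbidden substructures is the right strategy and mirrors how the literature handles it, though the actual obstruction list (edge-asteroids in the sense of Feder--Hell--Huang) and the verification that each kills every conservative majority is a nontrivial case analysis you have only gestured at. The forward direction has a more serious gap. A one-coordinate median on the clockwise endpoint $n_\cdot^+$ will not in general be a polymorphism: the bi-arc adjacency condition is a \emph{two-sided} non-intersection constraint (both $N_x\cap S_y=\emptyset$ and $N_y\cap S_x=\emptyset$), and a median chosen from one side need not respect the other. Your own final paragraph concedes this, but the fix you propose---a ``two-dimensional median'' on the pair $(n_\cdot^+,s_\cdot^-)$---is not a well-defined operation without further structure (medians on $\mathbb{R}^2$ with the product order are not total), and you have not said how to resolve incomparable triples conservatively while preserving edges. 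In the published proofs the majority is built more carefully, essentially by exploiting that the bi-arc representation yields a special ordering in which a suitably defined median is compatible with \emph{both} arc families simultaneously; getting that compatibility is exactly the technical content, and your sketch does not supply it.
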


\begin{definition}[$G^*$]
\label{construct-G*}
Let $G=(V,E)$ be a graph. Let $G^*$ be a bipartite graph with partite sets $V,V'$ where $V'$ is a copy of $V$. Two vertices $u \in V$, and $v' \in V'$ of $G^*$ are adjacent in $G^*$ if and only if $uv$ is an edge of $G$.  
\end{definition}

A \textit{circular arc} graph is a graph that is the intersection
graph of a family of arcs on a circle. We interpret the concept of an intersection
graph literally, thus any intersection graph is automatically reflexive (i.e. there is a loop at every vertex), since a set always intersects itself. A bipartite graph whose complement is a circular arc
graph, is called a \textit{co-circular arc} graph. 
Note that co-circular arc graphs are irreflexive, meaning that no vertex has a loop. 
Let $G=(X,Y,E)$ be a bipartite graph with partite sets $X$ and $Y$, and edge set $E$. Recall that we say
$G$ admits a min-ordering, if there is an ordering $x_1,x_2,\dots,x_p$ and ordering 
$y_1,y_2,\dots,y_q$ so that when $x_iy_j \in E$ and $x_ry_s \in E$ with $i<r$ and $s<j$ then 
$x_ry_s$ is an edge of $G$. With this definition, if we oriented all the edges of $E$ from 
$X$ to $Y$, then we obtain digraph $D$ for with it is easy to see that the ordering 
$x_1,x_2,\dots,x_p, y_1,y_2,\dots,y_q$ is a min-ordering. 

\begin{lemma}\label{co-circular}
Let $H^*$ be the bipartite graph constructed from a bi-arc graph $H$, according to Definition~\ref{construct-G*}. Then the following hold.  
\begin{itemize}
    \item[--]  $H^*$ is a co-circular arc graph.  
    \item[--] $H^*$ admits a min-ordering. 
\end{itemize}
\end{lemma}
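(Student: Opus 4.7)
The plan is to construct an explicit circular-arc representation of $\overline{H^*}$ directly from the bi-arc representation of $H$, and then obtain the min-ordering by invoking the characterization of bipartite graphs with conservative majority polymorphisms already recalled in the paper.

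Fix a bi-arc representation $\{(N_x, S_x) : x \in V(H)\}$ of $H$ on a circle $C$ with distinguished points $p, q$. I would assign to each vertex $u \in V$ of $H^*$ the arc $N_u$ on $C$, and to each vertex $v' \in V'$ of $H^*$ the arc $S_v$ on $C$. The claim to verify is that two vertices of $H^*$ are non-adjacent in $H^*$ (equivalently, adjacent in the reflexive complement $\overline{H^*}$) if and only if the corresponding arcs intersect on $C$.

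The verification splits into cases. Two vertices $u, v \in V$ have arcs $N_u, N_v$ that both contain $p$ and therefore intersect; this matches the fact that $u, v$ lie in the same part of the bipartite graph $H^*$ and are thus non-adjacent. Symmetrically, for $u', v' \in V'$ the arcs $S_u, S_v$ both contain $q$ and intersect. For $u \in V$ and $v' \in V'$ with $u \neq v$, by construction they are adjacent in $H^*$ iff $uv \in E(H)$, and the bi-arc axiom gives $uv \in E(H)$ iff neither $N_u \cap S_v$ nor $N_v \cap S_u$ is nonempty; so non-adjacency in $H^*$ corresponds exactly to intersection $N_u \cap S_v \neq \emptyset$. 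The case $u = v$ (whether or not $u$ carries a loop in $H$) is handled by the not-necessarily-distinct clause in the bi-arc definition and needs a short separate check. This establishes that $H^*$ is co-circular arc.

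For the second bullet, I would appeal to the characterization already recalled in the paper: a bipartite graph admits a min-ordering if and only if its complement is a circular-arc graph (with clique cover two), equivalently iff it admits a conservative majority polymorphism~\cite{feder2003bi,arash-esa}. Since we have just exhibited $H^*$ as co-circular arc, and the two color classes $V, V'$ form the required clique cover of $\overline{H^*}$ of size two, this yields a min-ordering of $H^*$ immediately. I expect the only delicate point in the whole argument to be the $u = v$ case above, where loops of $H$ must be matched correctly against intersections of $N_u$ and $S_u$; the rest is bookkeeping directly from the definitions.
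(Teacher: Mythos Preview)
Your proposal is correct and follows essentially the same route as the paper: both assign the arc $N_i$ to the vertex $i\in V$ and the arc $S_i$ to its copy $i'\in V'$, and then invoke the equivalence between co-circular arc bipartite graphs and bipartite graphs admitting a min-ordering from~\cite{arash-esa}. The only difference is cosmetic: the paper dismisses the verification with ``It is easy to see,'' whereas you spell out the three adjacency cases and flag the loop case $u=v$; this extra care is welcome but does not constitute a different argument.
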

\begin{proof}
It is easy to see that $H^*$ is a co-circular arc graph. From a bi-arc representation $\{(N_i,S_i) : i \in V(H) \}$ of $H$, we obtain a co-circular arc representation of $H^*$ by choosing, for $i \in H$, the arc $N_i$ for vertex $i \in H^*$ and the arc $S_i$ for vertex $i' \in H^*$. A bipartite graph admits a min-ordering if and only if it is  co-circular arc graph~\cite{arash-esa}.  $H^*$ is a co-circular arc graph, and hence, it admits a min-ordering.  
\end{proof}

Let $H$ be a bi-arc graph, with vertex set $I$, and let $H^*$ be the bipartite graph constructed from $H$ having vertices $(I,I')$ according to Definition~\ref{construct-G*}. Let $a_1,a_2,\dots,a_p$ be an ordering of the vertices in $I$ and $b_1,b_2,\dots,b_p$ be an ordering of the vertices of $I'$. Note that each $a_i$ has a copy $b_{\pi(i)}$ 
in $\{b_1,b_2,\dots,b_n\}$ where $\pi$ is a permutation on $\{1,2,3,\dots,p\}$. By Lemma \ref{co-circular}, let us assume $a_1,a_2,\dots,a_p, b_1,b_2, \dots,b_p$ is a min-ordering for $H^*$.  

Let $G$ be the input graph with vertex set $V$ and let $c$ be a given cost function. 
Construct $G^*$ from $G$ with vertex set $V\cup V'$ as in Definition~\ref{construct-G*}. Now construct an instance of the MinHOM($H^*$) for the input graph $G^*$ and set $c(v',b_{\pi(i)})=c(v,a_i)$ for $v \in V$, $v' \in V'$. 


\begin{lemma}
There exists a homomorphism $f:G\to H$ with cost $\mathfrak{C}$ if and only if there exists homomorphism $f^*:G^*\to H^*$ with cost $2\mathfrak{C}$ such that, if $f^*(v)=a_i$ then $f^*(v')=b_j$ with $j = \pi(i)$.
\end{lemma}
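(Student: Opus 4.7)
The plan is a direct verification of the bijection, exploiting the symmetry built into the construction of $G^*$ and $H^*$.

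For the forward direction, given a homomorphism $f:G\to H$, I would define $f^*$ by $f^*(v)=f(v)$ for $v\in V$ and $f^*(v')=b_{\pi(i)}$ whenever $f(v)=a_i$, so that the matching condition holds by construction. To check that $f^*$ is a homomorphism $G^*\to H^*$, I would take an arbitrary arc $uv'\in A(G^*)$ (oriented from $V$ to $V'$); by the definition of $G^*$ this corresponds to an edge $uv\in E(G)$, hence $f(u)f(v)\in E(H)$. Writing $f(u)=a_i$ and $f(v)=a_j$, the image $f^*(u)f^*(v')=a_i b_{\pi(j)}$ is an arc of $H^*$ precisely because $H^*$ was built by joining $a_i\in I$ to $b_{\pi(j)}\in I'$ exactly when $a_ia_j\in E(H)$. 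The symmetric arc $vu'\in A(G^*)$ is handled identically. For the cost, every vertex $v\in V$ contributes $c(v,a_i)$, while its mate $v'\in V'$ contributes $c(v',b_{\pi(i)})=c(v,a_i)$, so the total cost of $f^*$ is $2\mathfrak{C}$.

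For the reverse direction, given $f^*:G^*\to H^*$ satisfying the matching property $f^*(v)=a_i\Rightarrow f^*(v')=b_{\pi(i)}$, I would set $f(v):=a_i$ whenever $f^*(v)=a_i$. For any edge $uv\in E(G)$, the arc $uv'\in A(G^*)$ is mapped by $f^*$ to $a_i b_{\pi(j)}\in E(H^*)$, which by the construction of $H^*$ forces $a_ia_j\in E(H)$; thus $f(u)f(v)\in E(H)$, so $f$ is a homomorphism. The cost equality follows immediately from the matching condition together with the rule $c(v',b_{\pi(i)})=c(v,a_i)$: pairing each $v$ with its copy $v'$, the sum over $V\cup V'$ equals twice the sum over $V$.

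The only subtle point, and what I would treat most carefully, is the orientation convention: $G^*$ is oriented from $V$ to $V'$ and $H^*$ from $I$ to $I'$, so one must be sure that both arcs $uv'$ and $vu'$ arising from a single undirected edge $uv\in E(G)$ are mapped to arcs of $H^*$. This is immediate from the symmetry of the edge relation in $G$ and $H$ combined with the fact that $H^*$ contains the arc $a_i b_{\pi(j)}$ if and only if it contains $a_j b_{\pi(i)}$ (both correspond to the undirected edge $a_ia_j\in E(H)$). Apart from this bookkeeping, no deeper obstacle arises; the lemma is essentially a restatement of the definitions of $G^*$, $H^*$, and the cost extension, and the $2\mathfrak{C}$ factor reflects nothing more than the duplication of each vertex.
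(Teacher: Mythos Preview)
Your proposal is correct and is precisely the direct unwinding of the definitions that the paper has in mind: the paper does not write out a proof, stating only that the lemma ``immediately follows from the construction of $H^*$ and $G^*$.'' Your verification of both directions, including the cost doubling via $c(v',b_{\pi(i)})=c(v,a_i)$ and the symmetry check for the two arcs $uv'$ and $vu'$, is exactly what is needed to justify that claim.
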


We first perform the arc-consistency and pair-consistency procedures for the vertices in $G^*$. Note that if $L(u)$ contains the element $a_i$ then $L(u')$ contains $b_{\pi(i)}$ and when $L(u')$ contains some $b_j$ then $L(u)$ contains $a_{\pi^{-1}(j)}$. Next, we define the system of linear equations $\widehat{S^*}$ with the same construction as in Sections \ref{sec-minmax}, \ref{min-to-max}. Equivalently, one can use the LP formulation in \cite{arash-esa}. However, for the sake of completeness we present the entire LP in this section. 

Consider the following linear program. For every vertex $v \in V$ from $V(G^*)=(V,V')$ and every vertex $a_i \in I$ 
from $V(H^*)=(I,I') $ define a variable $v_{i}$.  For every vertex $v' \in V'$ from $V(G^*)$ and every vertex $b_i \in I'$ 
from $V(H^*)$ define a variable $v'_{i}$. We also define the variables $v_{p+1},v'_{p+1}$ for every $v \in V$ whose value is set to zero.
Now the goal is to solve the following linear program :
\begin{tcolorbox}[colback=white]
\begin{center}
\scalebox{0.77}{
\begin{tabular}{r l l l}
\text{min} & $\sum\limits_{v,i} c(v,a_i)(v_i-v_{i+1})+\sum\limits_{v',j}c(v',b_j)(v'_j - v'_{j+1})$\\
\text{subject to:}
&$v_i,v'_{\pi(i)} \geq 0$ & & \newtag{(CM1)}{CM1}\\
& $v_1  = v'_1= 1$ & & \newtag{(CM2)}{CM2}\\
& $v_{p+1}  =v'_{p+1}=  0$ & & \newtag{(CM3)}{CM3}\\
& $v_{i+1}  \leq  v_{i}$  \ \ and \ \ $v'_{\pi(i)+1}  \leq  v'_{\pi(i)}$ & & \newtag{(CM4)}{CM4}\\
& $v_{i+1}  =  v_{i}$ \ \ and \ \ $v'_{\pi(i)+1}= v'_{\pi(i)}$ &  $\text{if } a_i \not\in L(v)$ & \newtag{(CM5)}{CM5}\\
& $u_{i}  \leq  v'_{l^+(i)}$  & $\forall\text{ } uv' \in E(G^*)$ & \newtag{(CM6)}{CM6}\\
& $v'_{i}  \leq  u_{l^-(i)}$  & $\forall\text{ } uv' \in E(G^*)$ &
\newtag{(CM7)}{CM7}\\
& $u_i-u_{i+1}=u'_{\pi(i)} -u'_{\pi(i)+1}$  & $\forall u,u' \in G^* ,\forall a_i,b_{\pi(i)} \in H^*$ & \newtag{(CM8)}{CM8}
\end{tabular}}
\end{center}
\end{tcolorbox}

Here $l^+(i)$ is the first index $j$, such that $a_ib_j$ is an edge of $H^*$, and $l^{-1}(i)$ is the first index $j$ such that $a_jb_i$ is an edge of $H^*$. 

Let $E'$ denote the set of all the pairs $(a_i,b_j)$ such that $a_ib_j$
is not an edge of $H^*$, but there is an edge $a_ib_{j'}$ of $H^*$ with $j' < j$ and an edge $a_{i'}b_j$
of $H^*$ with $i' < i$. Let $E=A(H^*)$ and define $H'^*$ to be the digraph with vertex set $V(H^*)$ and edge set
$E \cup E'$. Note that $E$ and $E'$ are disjoint sets. For every edge $e = a_ib_j \in E'$ 
and every edge $uv \in E(G^*)$, by Observation \ref{observation2}, three of the following set of inequalities will be added to $\widehat{S^*}$ (i.e. either \ref{CM9}, \ref{CM12} or \ref{CM10}, \ref{CM11} or \ref{CM10}, \ref{CM12}).

\begin{tcolorbox}[colback=white]
\begin{center}
\scalebox{0.8}{
\begin{tabular}{l l l l l}
 $v'_j \leq u_s$ & $+$ & $\sum\limits_{\substack{ a_t \in L(u)\\ a_tb_j \in E \\ t<i }}(u_t-u_{t+1})$ & \text{if $a_s$ is the first in-neighbor of $b_j$ after $a_i$} & \newtag{(CM9)}{CM9}\\ 
$v'_j \leq v'_{j+1}$ & $+$ & $\sum\limits_{\substack{ a_t \in L(u)\\ a_tb_j\in E \\ t<i}}(u_t-u_{t+1})$ & \text{if $b_j$ has no in-neighbor after $a_i$} & \newtag{(CM10)}{CM10}\\
$u_i \leq v'_s$ & $+$ & $\sum\limits_{\substack{ b_t \in L(v')\\ a_ib_t \in E\\ t<j}}(v'_t-v'_{t+1})$ & \text{if $b_s$ is the first out-neighbor of $a_i$ after $b_j$} & \newtag{(CM11)}{CM11}\\
 $u_i \leq u_{i+1}$ & $+$ & $\sum\limits_{\substack{ b_t \in L(v')\\ a_ib_t\in E\\ t<j}}(v'_t-v'_{t+1})$ & \text{if $a_i$ has no out-neighbor after $b_j$} &  \newtag{(CM12)}{CM12}
\end{tabular}}
\end{center}
\end{tcolorbox}

\begin{lemma}\label{one-to-one-majority-LP}
If $H$ is a bi-arc graph, then there is a one-to-one correspondence between homomorphisms from $G$ to
$H$ and integer solutions of $\widehat{S^*}$.
\end{lemma}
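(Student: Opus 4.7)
The plan is to reduce this statement to Lemma~\ref{one-to-one-ex} applied to $H^*$, and then argue that the additional symmetry constraint \ref{CM8} exactly captures the requirement that an integer solution of $\widehat{S^*}$ descends from a homomorphism of the original graph $G$ to $H$, rather than merely from an arbitrary homomorphism of the bipartite double $G^*$ to $H^*$.

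First, I would observe that by Lemma~\ref{co-circular}, $H^*$ (oriented from $I$ to $I'$) admits a min-ordering, namely $a_1,\dots,a_p,b_1,\dots,b_p$. Viewing $H^*$ as a digraph and $G^*$ as an oriented bipartite digraph (all arcs from $V$ to $V'$), the constraints \ref{CM1}--\ref{CM7}, together with the arc constraints \ref{CM9}--\ref{CM12} derived from the missing arcs of $H^*$ and the pair-consistency inequalities, are exactly the extended system $\mathcal{S}$ described in Sections~\ref{sec-minmax} and \ref{min-to-max} for $G^*\to H^*$. Thus by Lemma~\ref{one-to-one-ex}, integer solutions of this subsystem are in one-to-one correspondence with homomorphisms $f^*:G^*\to H^*$ via the rule $f^*(v)=a_i$ iff $i$ is the largest subscript with $v_i=1$ (and similarly for $v'$).

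Next, I would analyze the role of \ref{CM8}. In an integer solution, $v_i-v_{i+1}=1$ precisely when $f^*(v)=a_i$, and $v'_{\pi(i)}-v'_{\pi(i)+1}=1$ precisely when $f^*(v')=b_{\pi(i)}$. So \ref{CM8} asserts that $f^*(v)=a_i$ iff $f^*(v')=b_{\pi(i)}$, i.e., the images of a vertex and its copy are \emph{linked by $\pi$}. Given any homomorphism $f:G\to H$, the induced map $f^*$ with $f^*(v)=f(v)$ and $f^*(v')=b_{\pi(i)}$ when $f(v)=a_i$ is a homomorphism $G^*\to H^*$: for $vw\in E(G)$ we have $vw',wv'\in A(G^*)$, and by the definition of $H^*$, $a_ib_{\pi(j)}\in A(H^*)$ iff $a_ia_j\in E(H)$, which holds by assumption. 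Clearly this $f^*$ satisfies the $\pi$-linkage, so it gives an integer solution of $\widehat{S^*}$.

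Conversely, given an integer solution of $\widehat{S^*}$, the subsystem yields $f^*:G^*\to H^*$ by Lemma~\ref{one-to-one-ex}, and \ref{CM8} forces the $\pi$-linkage. Define $f:G\to H$ by $f(v)=f^*(v)$ (identifying $I$ with $V(H)$). For an edge $vw\in E(G)$, the arcs $vw',wv'\in A(G^*)$ are mapped by $f^*$ to arcs $a_ib_{\pi(j)}$ and $a_jb_{\pi(i)}$ of $H^*$; by construction of $H^*$ both correspond to the edge $a_ia_j\in E(H)$, so $f(v)f(w)\in E(H)$. The two maps are clearly inverse to each other, and the cost of $f^*$ is exactly $2$ times the cost of $f$ by the symmetric choice $c(v',b_{\pi(i)})=c(v,a_i)$, so minimizing over $\widehat{S^*}$ is equivalent to minimizing over homomorphisms $G\to H$.

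The only subtle point is verifying that when we append \ref{CM8} to the extended system for $G^*\to H^*$, we do not lose any integer solutions that come from a valid $f:G\to H$; this is automatic because the induced $f^*$ is by construction $\pi$-symmetric. Dually, one must confirm that the missing-arc inequalities \ref{CM9}--\ref{CM12} are the correct extensions for the \emph{oriented} $H^*$, which follows verbatim from Section~\ref{min-to-max} since $H^*$ is bipartite and its min-ordering places all of $I$ before all of $I'$. Hence the bijective correspondence holds, establishing the lemma.
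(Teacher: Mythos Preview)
Your reduction to Lemma~\ref{one-to-one-ex} is correct and is a genuinely different route from the paper's. The paper does \emph{not} invoke Lemma~\ref{one-to-one-ex}; instead it reproves the correspondence from scratch in the bipartite setting: for the forward direction it writes down the $0/1$ assignment from a homomorphism $f:G\to H$ and checks \ref{CM1}--\ref{CM12} directly, and for the reverse direction it reruns the argument of Theorem~\ref{LP-minmax} on $H^{*\prime}$ to get a homomorphism $g:G^*\to H^*$, then uses \ref{CM8} to collapse $g$ to a map $f:G\to H$. Your approach is more economical: you recognise that $(G^*,H^*)$ is literally an instance of the digraph min-ordering case, so the whole subsystem \ref{CM1}--\ref{CM7}, \ref{CM9}--\ref{CM12} is handled by Lemma~\ref{one-to-one-ex} in one stroke, and the only new content is the $\pi$-linkage analysis of \ref{CM8}. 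This buys you a shorter proof and avoids repeating the max-ordering step on $H^{*\prime}$.

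One bookkeeping point: you write that the subsystem includes ``the pair-consistency inequalities'', but $\widehat{S^*}$ as defined in Section~\ref{sec-majority} has no analogue of \ref{C12}; the labels \ref{CM9}--\ref{CM12} are the four missing-arc inequalities, not pair-consistency. This does not damage your argument, since the pair-consistency constraints are valid inequalities for every integer point arising from a homomorphism (as noted at the end of the proof of Lemma~\ref{one-to-one-ex}), so their presence or absence does not change the set of integer solutions and the bijection of Lemma~\ref{one-to-one-ex} still transfers. Just phrase it as ``the subsystem \ref{CM1}--\ref{CM7}, \ref{CM9}--\ref{CM12} is the extended system $\mathcal{S}$ of Section~\ref{min-to-max} for $G^*\to H^*$ with the pair-consistency constraints omitted; since those are valid for every homomorphism, Lemma~\ref{one-to-one-ex} still gives the bijection.''
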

\begin{proof} For a homomorphism $f: G \to H$, if $f(v)=a_t$ we set $v_i=1$ 
for all $i\leq t$, otherwise, we set $v_i=0$, we also set $v'_{j}=1$ for all $j \leq \pi(i)$ 
and $v'_{j+1}=0$ where $\pi(i)=j$. We set $v_1=1$, $v'_1=1$  and $v_{p+1}=v'_{p+1}=0$ for all $v,v' \in V(G^{*})$. Now all the variables are non-negative and we have $v_{i+1}\leq v_i$ and $v'_{j+1} \le v'_{j}$. 
Note that constraint~\ref{CM8} is satisfied  by this assignment. 
We first show that $u_i \le v'_{l^+(i)}$ for every edge $uv' \in E(G^*)$
Suppose for contradiction that $u_i=1$ and $v'_{l^+(i)}=0$ and let $f(u)=a_r$ and $f(v)=a_s$. This implies that $u_r=1$, whence $i\leq r$; and $v'_s=1$, whence $s <l^+(i)$. Since both $a_ib_{l^+(i)}$ and $a_rb_s$  are edges of $H^*$ with $i \leq r$ and $s<l^+(i)$, the fact that $H^*$ has a min-ordering implies that $a_ib_s$ must also be an edge of $H^*$,
contradicting the definition of $l^+(i)$. The proof for $v'_{j}  \leq  u_{l^-(i)}$ is analogous. Therefore, constraints 
\ref{CM6} and \ref{CM7} are 
satisfied. It is also easy to see that (similar to the case for digraphs) 
that constraints \ref{CM9},~\ref{CM10},~\ref{CM11}, and \ref{CM12} also satisfied by this assignment.

Conversely, suppose there is an integer solution for $\widehat{S^*}$. 
First we define a homomorphism $g : G^* \rightarrow H^*$ as follows : let $g(u) = a_i$ where $i$ is the largest subscript with $v_i = 1$, and $g(v')=b_j$ when $j$ is the largest subscript with $v_j = 1$. We prove that
this is indeed a homomorphism by showing that every edge of $G^*$ is mapped to an edge of $H^*$. Let $uv'$ be an edge of $G^*$ and assume $g(u)=a_r$, $g(v')=b_s$ 
We show that $a_rb_s$ is an edge in $H^*$. Observe that, by~\ref{CM6} and~\ref{CM7}, $1=u_r\leq v'_{l^+(r)}\leq 1$ and $1=v'_s \leq u_{l^-(s)}\leq 1$, therefore we must have $v'_{l^+(r)}=u_{l^-(s)}=1$. 
Since $r$ and $s$ are the largest subscripts such that $u_r=v'_s=1$ then $l^+(r)\leq s$ and $l^-(s)\leq r$. Since $a_rb_{l^+(r)}$ and $a_{l^-(s)}b_s$ are edges of $H^*$, we must have the edge $a_rb_s$ in $H^*$ because $H^*$ admits a min-ordering.  Furthermore, $g(u) = a_i$ if and only if $u_i = 1$ and $u_{i+1} = 0$,
so, $c(u, a_i)$
contributes to the sum if and only if $g(u) = a_i$ and $c(v',b_j)$ contributes to the sum if and only if $g(v')=b_j$.

Now let $f(u)=a_i$ when $g(u)=a_i$. We show that if $uv$ is an edge 
of $G$ then $f(u)f(v)$ is an edge of $H$. Since $g$ is a homomorphism 
from $G^*$ to $H^*$,  $g(u)g(v') \in E(H^*).$ Suppose $g(v')=b_j$. 
This implies $u_i=v'_{j}=1$ and $u_{i+1}=v'_{j+1}=0$. 
Now by constraint~\ref{CM12}, we have $v_{\pi^{-1}(j)}=1$, 
and $v_{\pi^{-1}(j)+1}=0$, and hence, we have $f(v)=a_{\pi^{-1}(j)}$. 
Now by definition of $H^*$, $a_ia_{\pi^{-1}(j)}$ is an edge of $H$ because $a_ib_j$ is an edge of $H^*$. 
Furthermore, $f(u) = a_i$ if and only if $u_i = 1$ and $u_{i+1} = 0$, so, $c(u, a_i)$ contributes to the sum if and only if $f(u) = a_i$. 
\end{proof}

Once again we round an optimal fractional solution of $\widehat{S^*}$, using a random variable $X\in[0,1]$. Let $f$ be a mapping from $V(G^*)$ to $V(H^*)$ obtained by rounding. We give an algorithm that modifies $f$ so that $f:G \to H $ is a homomorphism (i.e. an integral solution that satisfies $\widehat{S^*}$).

\begin{algorithm}[t]
\caption{Approximation MinHOM($H$) for graphs}\label{alg:graph-approx}
\begin{algorithmic}[1]
\Procedure{Approx--Graph-MinHOM}{$H$}
\State Construct $H^*$, $G^*$ from $H$, $G$ respectively, as in Definition \ref{construct-G*}
\State Perform rounding and Stage 1 on fractional values returned by solving LP  $\widehat{S^*}$. \Comment{Similar as calling Algorithm~\ref{alg:app-MinHom} on $H^*$, $G^*$, and $\widehat{S^*}$.} 
 
\State Let $f$ be the homomorphism from $G^*$ to $H^*$ returned in the previous step
\State $f=$\textsc{Shift}$(f)$

\State \textbf{return} $f$\Comment{$f$ is a homomorphism from $G$ to $H$}
\EndProcedure

\end{algorithmic}
\end{algorithm}


\begin{theorem}\label{shifting-majority}
There exists a randomized algorithm that modifies $f$ and obtain a homomorphism from $G$ to $H$. Moreover, the expected cost of the homomorphism returned by this algorithm is at most $2|V(H)| \cdot {OPT}$. 
\end{theorem}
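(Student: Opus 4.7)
The plan is to adapt the two-stage randomized rounding of Algorithm~\ref{alg:app-MinHom} to $(G^*, H^*)$, where by Lemma~\ref{co-circular} the bipartite graph $H^*$ admits a min-ordering $a_1, \ldots, a_p, b_1, \ldots, b_p$. First, solve $\widehat{S^*}$, draw a single $X \in [0,1]$, and set $u_i' = 1$ iff $u_i \geq X$ (analogously for the primed variables). The arguments of Theorem~\ref{LP-minmax} and Section~\ref{sec-Approx-ALG} then show that the resulting $\mathcal{F}$ is a homomorphism from $G^*$ to the augmented graph $H'^*$ with arc set $E \cup E'$. Crucially, constraint~\ref{CM8} enforces that $\mathcal{F}(u) = a_i$ iff $\mathcal{F}(u') = b_{\pi(i)}$ with probability $u_i - u_{i+1}$; since we use the same threshold $X$ on both sides, $\mathcal{F}$ is automatically $\pi$-consistent and thus projects to a well-defined candidate $f: V(G) \to V(H)$.

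Next, run the Shift subroutine with $Y \in [0,1]$ to eliminate any arc of $G^*$ mapped into $E'$. The extended constraints~\ref{CM9}--\ref{CM12} are structurally identical to~\ref{C8}--\ref{C12}, so the nonemptiness argument of Lemma~\ref{set-index-shifting} and the termination argument of Lemma~\ref{shifting} carry over. To preserve $\pi$-consistency throughout, each shift on $v' \in V'$ is \emph{coupled} with the induced update on its partner $v \in V$ via $\pi^{-1}$, and symmetrically; constraint~\ref{CM8} together with the pair-consistency inequalities~\ref{CM12} ensures that the induced update is itself a valid move in the min-ordering, and that the BFS-style cascade on either side propagates only through arcs of $G^*$ that are scheduled to be revisited. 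The final $\mathcal{F}: G^* \to H^*$ then yields the desired $f: G \to H$ via Lemma~\ref{one-to-one-majority-LP}.

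For the approximation ratio, I would repeat the analysis of Lemma~\ref{pr-bound} on each summand $v_i - v_{i+1}$ and $v'_j - v'_{j+1}$ of the objective, leveraging the bipartite min-ordering of $H^*$ to bound the number of initiators of a shift ending at a given index by $|V(H^*)| = 2|V(H)|$ (rather than the $|V(H)|^2$ of the general digraph case, since a bi-arc representation makes every cascade in $H^*$ monotone in a single coordinate). Combined with the cost symmetry $c(v, a_i) = c(v', b_{\pi(i)})$ and constraint~\ref{CM8}, which make the $V$ and $V'$ halves of the LP objective equal to one another and each a lower bound on the $G \to H$ optimum, the expected cost of $f$ comes out to at most $2|V(H)| \cdot \mathrm{OPT}$. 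The main obstacle will be verifying that the coupled shifts terminate without generating unbounded cascades; this reduces to a potential argument using the sum of image indices in the min-ordering of $H^*$, which strictly decreases with every coupled shift. De-randomization then follows exactly as in Theorem~\ref{de-random} by enumerating the polynomially many threshold values of $X$ and $Y$.
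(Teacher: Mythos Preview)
Your plan contains a genuine gap at its very first step. You assert that constraint~\ref{CM8} together with a single threshold $X$ makes the rounded map $\mathcal{F}$ automatically $\pi$-consistent, i.e., $\mathcal{F}(u)=a_i$ iff $\mathcal{F}(u')=b_{\pi(i)}$. This is false: \ref{CM8} only equates the \emph{differences} $u_i-u_{i+1}=u'_{\pi(i)}-u'_{\pi(i)+1}$, not the values $u_i$ and $u'_{\pi(i)}$ themselves. Because $\pi$ is an arbitrary permutation, the monotone sequences $(u_i)_i$ and $(u'_j)_j$ can have the same multiset of gaps while crossing the threshold $X$ at indices that do not correspond under $\pi$. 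Concretely, with $p=2$, $\pi=(1\;2)$, $u_2=0.3$, one gets $u'_2=0.7$; taking $X=0.8$ yields $\mathcal{F}(u)=a_1$ but $\mathcal{F}(u')=b_1\neq b_{\pi(1)}$. Hence the rounded map is in general \emph{not} $\pi$-consistent, and your subsequent ``preserve consistency by coupling shifts'' argument is built on a premise that does not hold.

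The paper deals with exactly this issue by a separate \textbf{Stage~2}: after Stage~1 produces a homomorphism $f^*:G^*\to H^*$, it identifies the \emph{unstable} vertices $u$ (those with $\hat u_i=1,\hat u_{i+1}=0$ but $\hat u'_q=1,\hat u'_{q+1}=0$ for $q\neq\pi(i)$) and repairs them by a BFS that, for each unstable $u$, either moves $u'$ from $b_q$ to $b_{\pi(i)}$ or moves $u$ from $a_i$ to $a_{\pi^{-1}(q)}$, propagating through four carefully analysed cases that exploit the symmetry of $H^*$. This second stage, and its interaction with the homomorphism property, is the nontrivial part of the argument and is missing from your plan. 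Correspondingly, the factor $2|V(H)|$ in the paper does not come from bounding initiators by $|V(H^*)|$ as you suggest; it is obtained as $|V(H)|$ from the Stage~1 shifting (essentially the bipartite version of Lemma~\ref{pr-bound}, but without the cascading, hence no square) plus another $|V(H)|$ from Stage~2, using \ref{CM8} to transfer the bound on $v'_{\pi(t)}-v'_{\pi(t)+1}$ to $v_t-v_{t+1}$.
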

\begin{proof}

For every variable $u_i$, $u \in V(G^*)$, set $\hat{u}_i=1$ if $X \le u_i$  else $\hat{u_i}=0$.
Similarly for every $v'_j$, $v' \in V(G^*)$, set $\hat{v}'_j=1$ if $X \le v'_j$  else $\hat{v}'_i=0$. The algorithm has two stages after rounding the fractional solution using the random variable $X$. 

\medskip

\noindent \textbf{Stage 1. Fixing the edges  $uv'$ of $G^*$ that have been mapped to non-edges $a_ib_j$ of $H^*$:} 
Suppose for some edge $uv'$ of $G^*$, $\hat{u}_i=1$, $\hat{u}_{i+1}=0$, $\hat{v}'_j=1$, $\hat{v}'_{j+1}=0$. By Observation~\ref{observation2}, either $b_{j}$ has no in-neighbor after $a_i$ or $a_i$ has no out-neighbor after $b_{j}$. Suppose the former is the case. 
We also note that because of the constrains \ref{CM5}, \ref{CM6}, $a_ib_j$ is one of the edges that should be added into $H^*$ in order to obtain a min-max-ordering for $H'^{*}$.  
 Suppose, for edge $uv'$ of $G^*$, $f(u)=a_i,f(v)=a_j$ where $a_ia_j \in E'$; i.e. $a_ib_{\pi(j)} \not\in A(H^*)$. We may assume that $a_ia_j$ is the last such non-edge in $H$ ($i+j$ is maximum) when we look at the min-ordering of $H^*$.

Choose a random variable $Y \in [0,1]$, which will guide us to shift the image of $v'$ from $b_j$ to some $b_{t}$ where $a_ib_t\in E$, and $b_t$ appears before $b_{j}$ in the min-ordering of $H^*$. Consider the set of such $b_t$s ( by definition of the min-ordering of $H^*$, this set is non-empty), and suppose it consists of $b_t$ with subscripts $t$ ordered as $t_1 < t_2 < \dots t_k$. Let $P_{v',t} = \frac{v'_t - v'_{t+1}}{P_{v'}}$
with $P_{v'} = \sum\limits_{\small a_ib_t\in E(H^*),~t<j} (v'_t - v'_{t+1})$. Select the vertex $b_{t_q}$ if
$\sum\limits_{\small p=1}^{\small q} P_{v',{t_p}} < Y \leq \sum\limits_{\small p=1}^{\small q+1} P_{v',{t_p}}$.
Thus, a concrete $b_t$ is selected with probability $P_{v',t}$, which is proportional to the difference
of the fractional values $v'_t-v'_{t+1}$. Observe that there is no need to shift the image of some vertex $w$ which is an in-neighbor of $v'$ from its current value to some other vertex (because of shifting the image of $v$). 

Now we note that the probability of shifting the image of some $v'$ from $b_j$ to $b_t$ is at most $v'_t-v'_{t+1}$. Note that as long as such edges $uv'$ exists, we repeat the shifting procedure.  
At the end of this stage we have obtained a homomorphism $f$ from $G^*$ to $H^*$.

\medskip

\begin{algorithm}
\caption{The shifting procedure for unstable vertices (Stage 2)}\label{alg:graph-shift}
\begin{algorithmic}

\Procedure{Shift}{$f$}
\While {there are unstable vertices }
    \State Let $u$ be a vertex with $f(u)=a_i$ and $f(u') \ne b_{\pi(i)}$ where $i$ is maximum.

    \State Let Q be a Queue. $Q.enqueue(u')$ 

    \While { $Q$ is not empty}

        \State $x \gets Q.dequeue( )$
        \If{ $x=v'$ }
        \State $f(v')  \gets b_{\pi(i)}$ where $f(v)=a_i$.
        \For{ $wv' \in E(D)$ with $a_{\ell}=f(w)$ and $f(w') \ne b_{\pi(\ell)}$}
            \State $Q.enqueue(w)$  
        \EndFor 
        \ElsIf{ $x=v$}
            \State $f(v)  \gets a_{i}$ where $f(v')=b_{\pi(i)}$.
            \For{$vw' \in E(D)$ with $a_{\ell}=f(w)$ and $f(w') \ne b_{\pi(\ell)}$}
                \State $Q.enqueue(w')$  
            \EndFor 
        \EndIf 
    \EndWhile
\EndWhile 
\State \textbf{return} $f$\Comment{$f$ is a homomorphism from $G$ to $H$}
\EndProcedure
\end{algorithmic}
\end{algorithm}

\noindent  \textbf{Stage 2. Making the assignment consistent with respect to both orderings:} We say a vertex $u \in V$ of $V(G^*)=(V,V')$ is {\em unstable} if $f(u)=a_i$, $f(u')=b_q$ where $q \ne \pi(i)$. 
Now we start a BFS in $V(G^*)$ and continue as long as there exists an unstable vertex $u$ in $G^*$. At each step, we start from the greatest subscripts $i$ for which there exists an unstable $u$ with $f(u)=a_i$. During the BFS, one of the following is performed:

\begin{enumerate}
    
    \item \label{action1} shift the image of $u'$ from $b_q$ to $b_{\pi(i)}$.
    \item \label{action2} shift the image of $u$ from $a_i$ to $a_{\pi^{-1}(q)}$. 
\end{enumerate}

As a consequence of the above actions we would have the following cases:

\CCase{1} We change the image of $u'$ from $b_q$ to $b_{\pi(i)}$ (with $f(u)=a_i$), and there exists some $v' \in V'$ such that $uv' \in E(G^*)$ with $f(v)=a_j$ and $f(v')=b_{\pi(j)}$.

We note that $a_ib_{\pi(j)}$ is an edge because $uv'$ is an edge, and hence, $a_j b_{\pi(i)}$ is an edge of $H^*$. This would mean there is no need to shift the image of $v$ from $a_j$ to something else (see the Figure~\ref{fig:shifting-majority-1}). 
\medskip

\CCase{2} We change the image of $u'$ from $b_q$ to $b_{\pi(i)}$ (with 
$f(u)=a_i$), and there exists some edge $vu'$ of $H^*$ with $f(v)=a_j$ and $f(v')=b_{\ell}$ with $\ell \ne \pi(j)$. 

Such vertex $v$ is added into the queue, and once we retrieve $v$ from the queue we do the following: 
moving the image of $v$ from $a_{j}$ to $a_{\pi^{-1}(\ell)}$ (see the Figure \ref{fig:shifting-majority-2}).

Note that $a_ib_{\ell} \in E(H^*)$ because $vu'$ is an edge of $G^*$, and hence $a_{\pi^{-1}(\ell)} b_{\pi(i)} $ is an edge of $H^*$.

\medskip
\CCase{3} We change the image of $v$ from $a_j$ to some  $a_{\pi^{-1}(\ell)}$  (with $f(v')=b_{\pi(\ell)}$) 
and there exists some $vw'$ such that $f(w)=a_t$ and $f(w')=b_{\pi(t)}$. We note that $a_{t}b_{\ell} \in E(H^*)$ because $v'w$ is an edge, and hence, $a_{\pi^{-1}(\ell)} b_{r}$  is an edge of $H^*$. This would mean there is no need to shift the image of $w'$ to something else.
\medskip

\CCase{4} We change the image of $v$ from $a_j$ to some $a_{\pi^{-1}(\ell)}$ (with $f(v')=b_{\ell}$).
Let $r$ be a greatest subscript such that there exists some $vw'$ where $f(w)=a_t$ and $f(w')=b_r$
with $r \ne \pi(t)$, $t <i$. Such vertex $w'$ is added into the queue, and once we retrieve $w'$ from the queue we do the following: moving the image of $w'$ from $b_{r}$ to $b_{\pi^{-1}(t)}$. 

Note that $a_tb_{\ell} \in E(H^*)$ because $wv'$ is also an edge of $G^*$. Hence, $a_{\pi^{-1}(\ell)} b_{\pi^{-1}(t)}$ is an edge of $H^*$.

When Case~2 occurs, we continue the shifting. This would mean we may need to shift the image of some -neighbor $w'$ of $v$ accordingly. We continue the BFS from $v$, and modify the images of neighbors of $v$, say $w'$, to be consistent with new image of $v$. This means we encounter either Case~3 or Case~4. 
Suppose $f(w')=b_{t}$ or $f(w')=b_{\pi(t)}$
Then 
there is no need to change the image of $w'$. 
Otherwise, we change the image of $w'$ from $b_t$ to $b_{j}$ where $a_{\pi^{-1}(\ell)} b_{j}$ is an edge of $H^*$ and we need to consider Cases 3,4 for the current vertex $w$. When we are in Case 4, then consider Cases 1,2 and proceed accordingly. 

Note that during the BFS, if we encounter a vertex $x$ (or  $x'$) that has been visited before, then we would be at Case 1 or 3 and hence, no further action is needed for in-neighbors (out-neighbors) of $x$. We also note that at each step an unstable vertex $y$ is associated to some $a_{\ell}$ where $\ell$ is decreasing. Therefore, this procedure would eventually stop, and we will no longer have unstable vertices in $V$. 
\begin{figure}[t]
  \centering
  \begin{subfigure}{0.4\textwidth}
    \includegraphics[width=5cm,height=2.7cm]{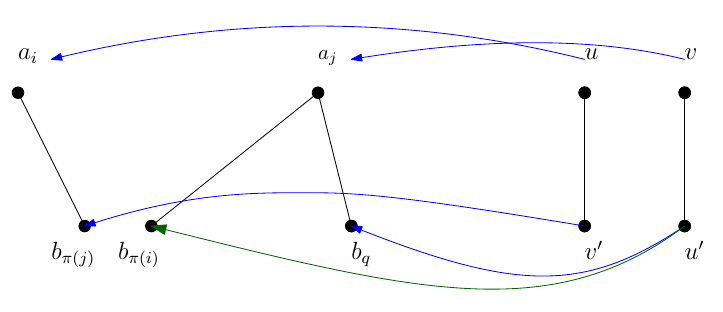}
    \caption{Case 1}
    \label{fig:shifting-majority-1}
  \end{subfigure}
  \hspace{2em}
  \begin{subfigure}{0.4\textwidth}
    \includegraphics[width=5cm,height=2.7cm]{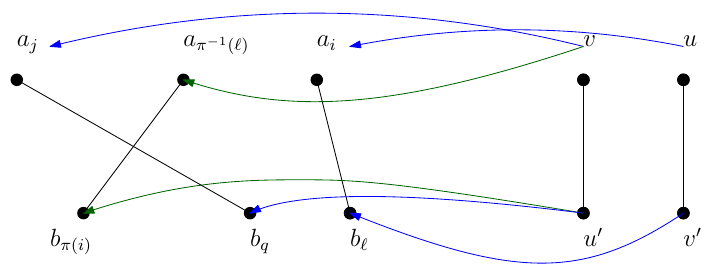}
    \caption{Case 2}
    \label{fig:shifting-majority-2}
    \hfill
  \end{subfigure}
  \caption{Illustrating the shifting process in Stage 2 of the algorithm. }\label{fig:shifting-majority}
\end{figure}

\paragraph{\textbf{Estimating the ratio}}
Vertex $v$ ($v'$, resp.) is mapped to $a_t$ ($b_t$, resp.) in three situations. 
\noindent {The first scenario} is where $v$ is mapped to $a_t$ by rounding (according to random variable $X$ in Stage 1) and is not shifted away. In other words, we have
$\hat{v}_{t} = 1$ and $\hat{v}_{t+1} = 0$ (i.e. $v_{t+1} \le X <v_t$) and these values do not change by the shifting procedure. 
Hence, for this case we have:
\begin{align*}
\P[f(v)=a_t] &=\P[v_{t+1} < X \leq v_t] \\ 
&\leq v_t-v_{t+1}
\end{align*}
Whence this situation occurs with probability at most $v_t-v_{t+1}$, and the expected contribution is at most $c(v,a_t)(v_t-v_{t+1})$.

\noindent {The second scenario} is where $f(v)$ is set to $a_t$ according to the random variable $Y$ in Stage 1. 

In this case $v$ is first mapped to $a_j, j > t$, by rounding according to variable $X$ and then re-mapped to $a_t$ during the shifting according to variable $Y$. We first compute the expected contribution for a fixed $j$, that is the contribution of shifting $v$ from a fixed $a_j$ to $a_t$. 

This happens if there exist $i$ and $u' \in V(H^*)$ such that $vu'$ is an edge of $G^*$ mapped to $a_jb_i \in E'$,
and then the image of $v$ is shifted to $a_t$ ($a_t<a_j$ in the min-ordering), where $a_tb_i \in E=E(H^*)$. In other words, we have
$\hat{u}'_{i} = \hat{v}_{j} = 1$ and $\hat{u}'_{i+1} = \hat{v}_{j+1} = 0$ after rounding; and then $v$ is shifted from $a_j$ to $a_t$. Therefore,
\begin{align*}
\P[\hat{u}_i'=\hat{v}_j=1 , \hat{u}'_{i+1}=\hat{v}_{j+1}=0]
&=\P[\max \{u'_{i+1},v_{j+1}\} < X \leq \min \{u'_i,v_j \}]\\
&= \min \{u'_i,v_j \} - \max \{u'_{i+1},v_{j+1}\}\\
&\leq v_j - v_{j+1}\\
&\leq \sum \limits_{\substack{t < j \\ a_tb_i \in E \\ a_t\in L(v)}} (v_{t} - v_{t+1}) \\
&= P_v 
\end{align*}
The last inequality is because $a_j$ has no out-neighbor after $b_i$ and it follows from inequality~\ref{CM9}. Having $vu'$ mapped to $a_jb_i$ in the rounding step, we shift $v$ to $a_t$ with probability $P_{v,t}={\frac{(v_t - v_{t+1})}{P_v}}$. Note that the upper bound $P_v$ is independent from the choice of $u$ and $b_i$. Therefore, for a fixed $a_j$, the probability that $v$ is shifted from $a_j$ to $a_t$ is at most $ \frac{v_t - v_{t+1}}{P_v}\cdot P_v = v_t - v_{t+1}$. There are at most $|V(H)|$ of such $b_i$'s, (causing the shift to $a_j$) and hence, the expected contribution of $v_t-v_{t+1}$ to the objective function is at most $|V(H)|c(v,t)(v_t-v_{t+1})$.

\noindent{The third scenario} is when the image of $v$ is shifted from some $a_j$ to $a_t$ in the second Stage of the shifting . More precisely, when one of the actions \ref{action1},\ref{action2} occurs. 

This happens because the image of $v'$ has been shifted from $b_q$ to $b_{\pi(t)}$ in Stage 2 according to variables $X$ or $Y$ (i.e. BFS). As we argued, in the previous scenarios, the overall expected value of shifting $v'$ from $b_q$ to $b_{\pi(t)}$ is $|V(H)|c(v,t)(v'_{\pi(t)}-v'_{\pi(t)+1})$. Since $v_t-v_{t+1}=v'_{\pi(t)}-v'_{\pi(t)+1}$, the overall expected value of shifting $v$ to $a_t$ is $|V(H)|(v_t-v_{t+1})$.
In conclusion, 
the expected contribution of $v_t-v_{t+1}$ to the objective function is $2|V(H)|c(v,t)(v_t-v_{t+1})$. 
\end{proof}

We remark that, as in the proof of Theorem~\ref{de-random}, the above algorithm can be de-randomized. By Lemma \ref{co-circular} and Theorem \ref{shifting-majority} we obtain the following classification theorem. 

\begin{theorem} \label{majority-dichotomy}
If $H$ admits a conservative majority polymorphism, then MinHOM(H) has a (deterministic) $2|V(H)|$-approximation algorithm, otherwise,  MinHOM(H) is inapproximable unless P=NP.
\end{theorem}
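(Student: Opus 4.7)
The plan is to assemble the two directions of the dichotomy from results already established in the excerpt, with essentially no new work required beyond careful citation.

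For the positive direction, suppose $H$ admits a conservative majority polymorphism. By the theorem of Brewster, Feder et al. cited in the excerpt, this is equivalent to $H$ being a bi-arc graph, and so Lemma~\ref{co-circular} tells us that the bipartite graph $H^*$ obtained from $H$ via Definition~\ref{construct-G*} is a co-circular arc graph and admits a min-ordering $a_1,\dots,a_p,b_1,\dots,b_p$. Given an instance $(G,c)$ of MinHOM($H$), I would build $G^*$ and the cost function $c(v',b_{\pi(i)})=c(v,a_i)$ exactly as described, orient both $G^*$ and $H^*$ from the $V$-side to the $V'$-side, and form the linear program $\widehat{S^*}$. By Lemma~\ref{one-to-one-majority-LP}, integer solutions of $\widehat{S^*}$ correspond bijectively to homomorphisms $G\to H$ (at a cost factor of $2$ introduced by counting each vertex twice). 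Solving $\widehat{S^*}$ optimally as a standard linear program gives a fractional solution whose objective value $w$ is at most $2\cdot OPT$.

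Then I would invoke Theorem~\ref{shifting-majority}, which already supplies the randomized rounding: round with parameter $X\in[0,1]$, apply Stage~1 shifting with parameter $Y\in[0,1]$ to eliminate non-arcs of $H^*$ introduced by the $E'$-edges, and apply Stage~2 BFS shifting to restore consistency between the two orderings (the constraint $\hat{u}_i-\hat{u}_{i+1}=\hat{u}'_{\pi(i)}-\hat{u}'_{\pi(i)+1}$). That theorem already certifies that the expected cost of the resulting integral solution is at most $2|V(H)|\cdot OPT$. Finally, de-randomization proceeds exactly as in Theorem~\ref{de-random}: both $X$ and $Y$ take only polynomially many ``interesting'' values (the breakpoints of the $v_i$'s and of the partial sums $\sum_{p=1}^q P_{v',t_p}$), so enumerating those values and taking the best outcome yields a deterministic $2|V(H)|$-approximation.

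For the negative direction, suppose $H$ does not admit a conservative majority polymorphism. By the theorem of Feder, Hell, and Huang cited above, $H$ is then not a bi-arc graph, and therefore LHOM($H$) is NP-complete. The Observation in the introduction (which follows from encoding lists by infinite costs) says that whenever LHOM($H$) fails to be polynomial-time solvable, MinHOM($H$) admits no polynomial-time approximation algorithm with any multiplicative guarantee unless $\mathrm{P}=\mathrm{NP}$. This immediately gives the inapproximability half.

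The genuine technical content is not in the present theorem at all but in Theorem~\ref{shifting-majority}, whose shifting analysis (in particular the Stage~2 BFS and the bookkeeping that each variable $v_t-v_{t+1}$ contributes at most $2|V(H)|c(v,a_t)(v_t-v_{t+1})$ in expectation) is the one step I would expect to be the main obstacle. Once that result and Lemma~\ref{one-to-one-majority-LP} are in hand, the present theorem is essentially a one-line corollary combining them with the LHOM dichotomy.
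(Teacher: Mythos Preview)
Your proposal is correct and matches the paper's approach exactly: the paper also presents this theorem as an immediate corollary of Lemma~\ref{co-circular} and Theorem~\ref{shifting-majority} (with de-randomization as in Theorem~\ref{de-random}) for the positive direction, and the negative direction follows from the Feder--Hell--Huang dichotomy for LHOM($H$) together with the Observation in the introduction. You have correctly identified that the real work lives in Theorem~\ref{shifting-majority}, and this theorem is a one-line assembly of those pieces.
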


\section{Experiments}\label{sec-experiment}
\subsection{Finding a solution using GNU GLPK}

GLPK extends for GNU Linear Programming Kit, and it is an open source software package, written in C. It is intended for solving large-scale linear programming problems(LP). GLPK is a well-designed algorithm to solve LP problems, at a reasonable time. It implements different algorithms, such as the simplex method and the Interior-point method for non-integer problems and branch-and-bound together with Gomory's mixed integer cuts for integer problems. With GLPK we can add each constraint of our problem as a new row of a matrix. Before calculating the minimum cost, we have to set the type of solution we are looking for, integral only or if we allow a continuous solution.

\subsection{Experimental Results}

For our experiments, we have used graphs from four different classes namely, digraphs with a majority polymorphism, \emph{balanced} digraphs with a min-ordering, bipartite digraphs with a min-ordering, and DAT-free digraphs.
For each class, we have used a variety of target digraphs and sizes, ranging from $7$ to $15$. For a particular digraph in each class, we use a variety of input digraphs $D$, created randomly, with size from $100$ to $3000$. The cost of mapping an edge from digraph $D$ to an edge in digraph $H$ is randomly assigned, with values ranging from $5$ to $100000$. 
For each instance of MinHOM($H$) with input digraph $D$, we run our program twice, once for finding optimal fractional solution, and once for an integral solution. To calculate the ratio, for a single digraph $H$ of size $N$, we run our algorithm for each digraph $D$ of size $T$, $100$ times. We then get the ratio by calculating the average of fractional solution and integral solution, for every instance of different sizes of $D$. The target digraphs that we have examined are given next to the charts. All of the experiments indicate a very small integrality gap.

\begin{figure}[H]
  \centering
  \begin{subfigure}{.6\linewidth}
    \includegraphics[width =
    \linewidth]{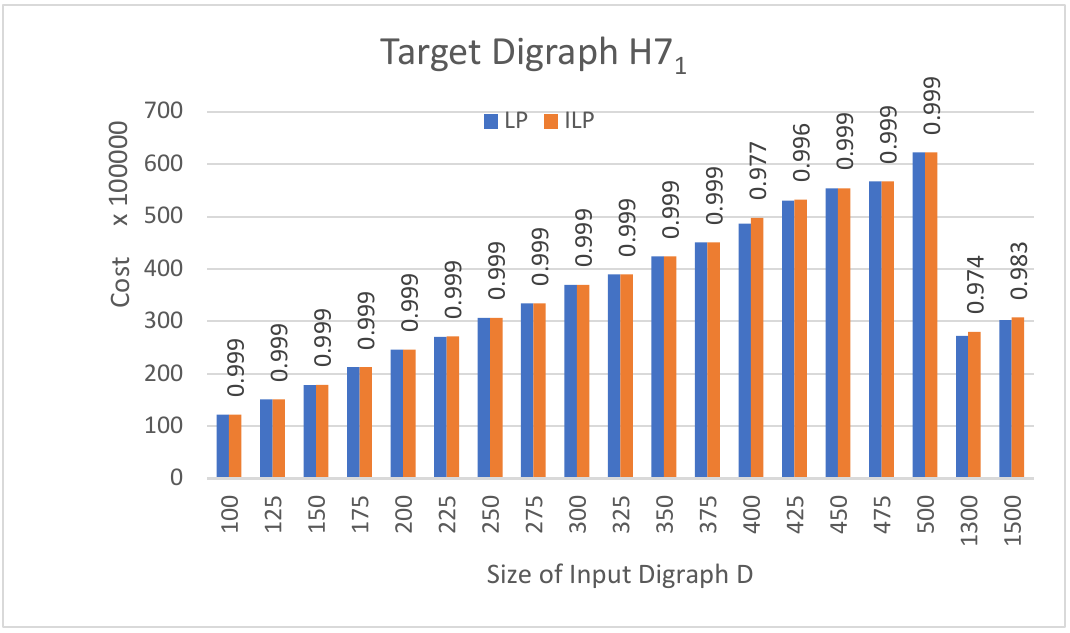}
  \end{subfigure}
  \hspace{2em}
  \begin{subfigure}{.2\linewidth}
    \includegraphics[scale=0.6]{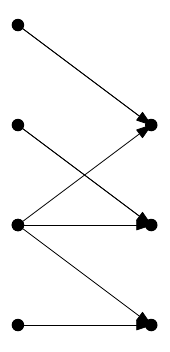}
  \end{subfigure}
  \caption{Comparison between LP and ILP when the target digraph is $H7_1$. $H7_1$ is a bipartite graph that admits a min-ordering.}
  \label{examples-H71}
\end{figure}

\begin{figure}[H]
  \centering
  \begin{subfigure}{.6\linewidth}
    \includegraphics[width =
    \linewidth]{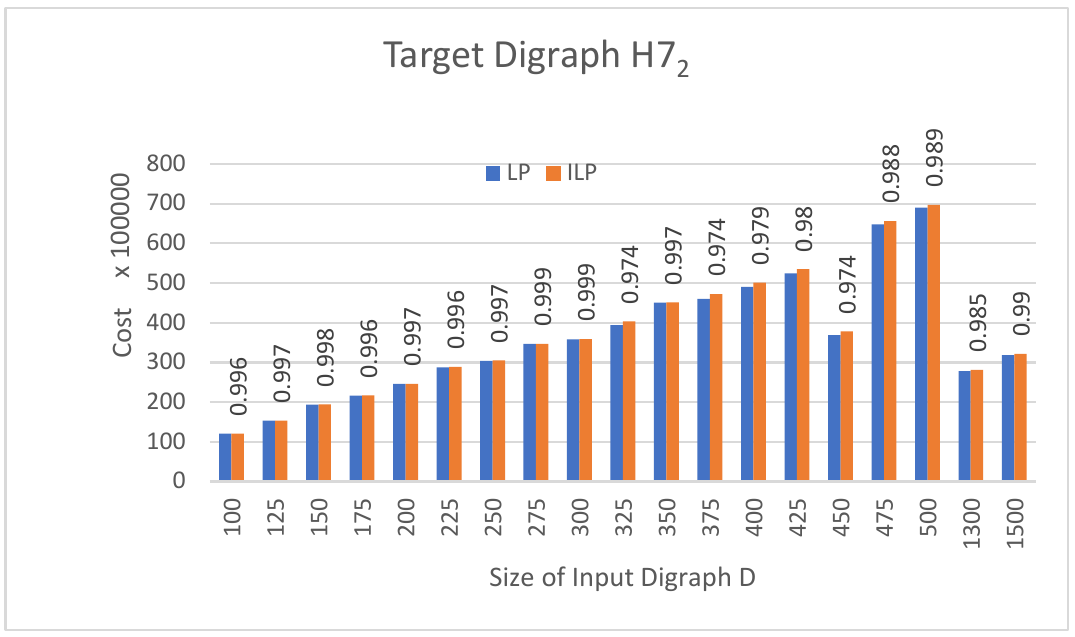}
  \end{subfigure}
  \hspace{2em}
  \begin{subfigure}{.2\linewidth}
    \includegraphics[scale=0.6]{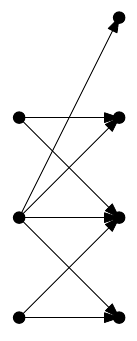}
  \end{subfigure}
  \caption{Comparison between LP and ILP when the target digraph is $H7_2$. $H7_2$ is a bipartite graph that admits a min-ordering.}
  \label{examples-H72}
\end{figure}

\begin{figure}[H]
  \centering
  \begin{subfigure}{.6\linewidth}
    \includegraphics[width =
    \linewidth]{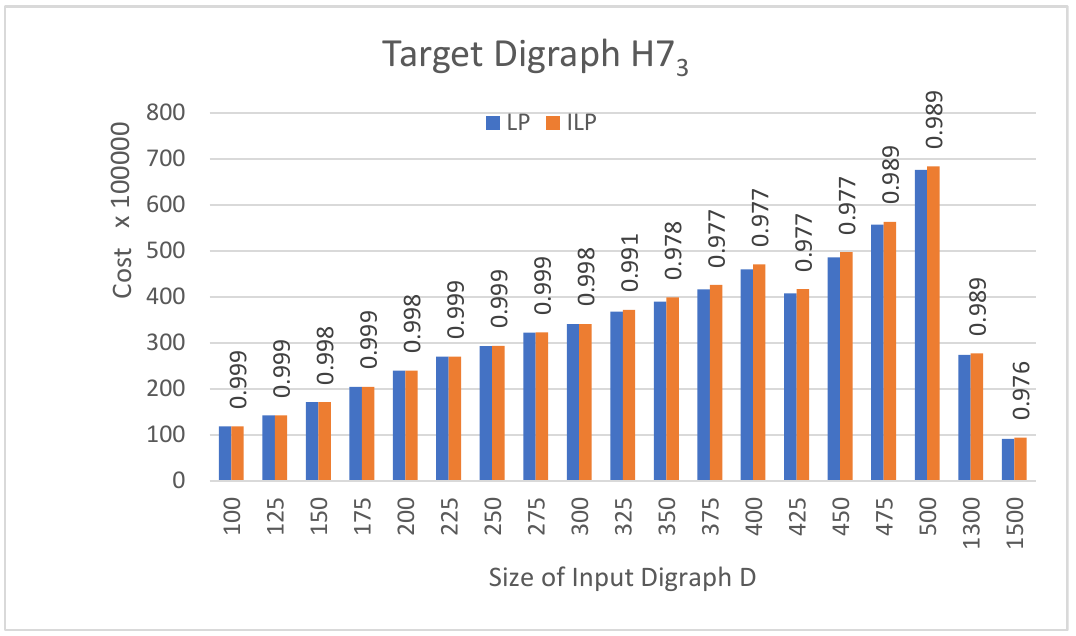}
  \end{subfigure}
  \hspace{2em}
  \begin{subfigure}{.2\linewidth}
    \includegraphics[scale=0.6]{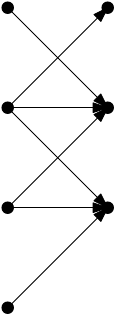}
   
  \end{subfigure}
  \caption{Comparison between LP and ILP when the target digraph is $H7_3$. $H7_3$ is a bipartite graph that admits a min-ordering.}
  \label{examples-H73}
\end{figure}

\begin{figure}[H]
  \centering
  \begin{subfigure}{.6\linewidth}
    \includegraphics[width =
    \linewidth]{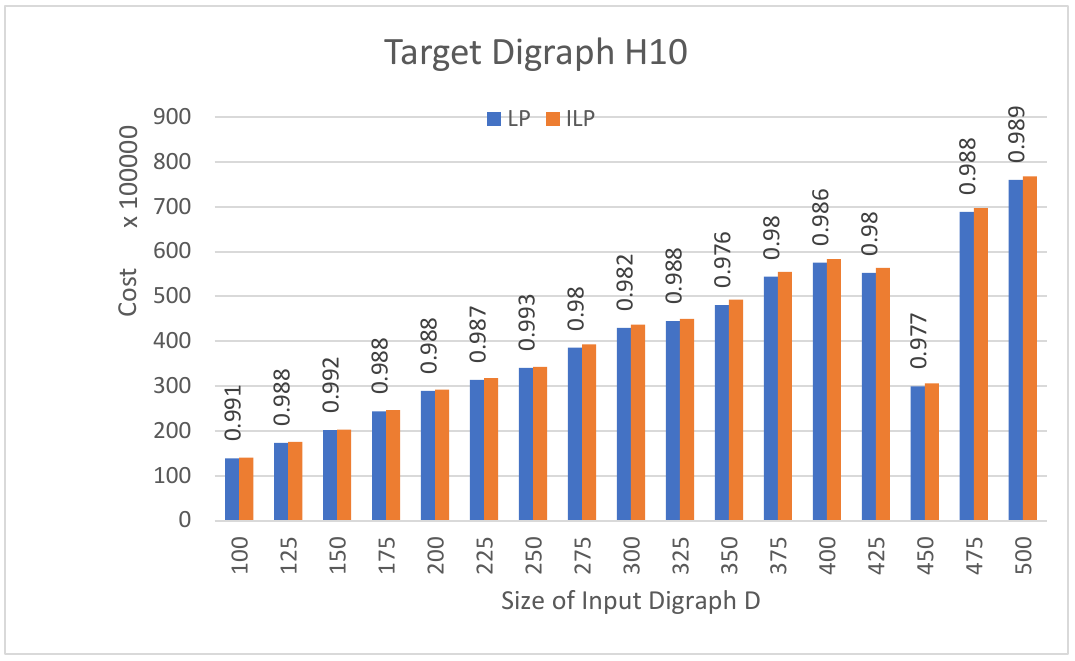}
   
  \end{subfigure}
  \hspace{2em}
  \begin{subfigure}{.2\linewidth}
    \includegraphics[scale=0.6]{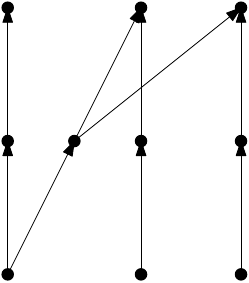}
  
  \end{subfigure}
  \caption{Comparison between LP and ILP when the target digraph is $H10$. $H10$ admits a min-ordering and it does not admit a majority polymorphism.}
  \label{examples-H10}
\end{figure}

\begin{figure}[ht!]
  \centering
  \begin{subfigure}{.6\linewidth}
    \includegraphics[width =
    \linewidth]{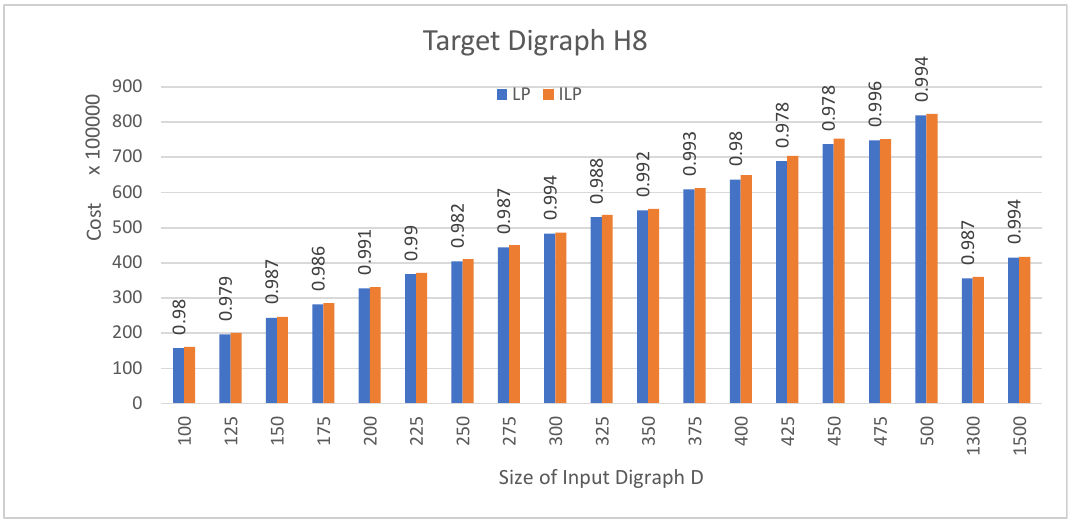}
  
  \end{subfigure}
  \hspace{2em}
  \begin{subfigure}{.2\linewidth}
    \includegraphics[scale=0.6]{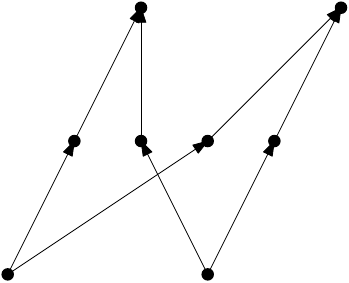}
  
  \end{subfigure}
  \caption{Comparison between LP and ILP when the target digraph is $H8$. $H8$ admits a conservative majority polymorphism. It does not admit a min-ordering.}
  \label{examples-min-ordering}
\end{figure}

\begin{figure}[H]
  \centering
  \begin{subfigure}{.6\linewidth}
    \includegraphics[width =
    \linewidth]{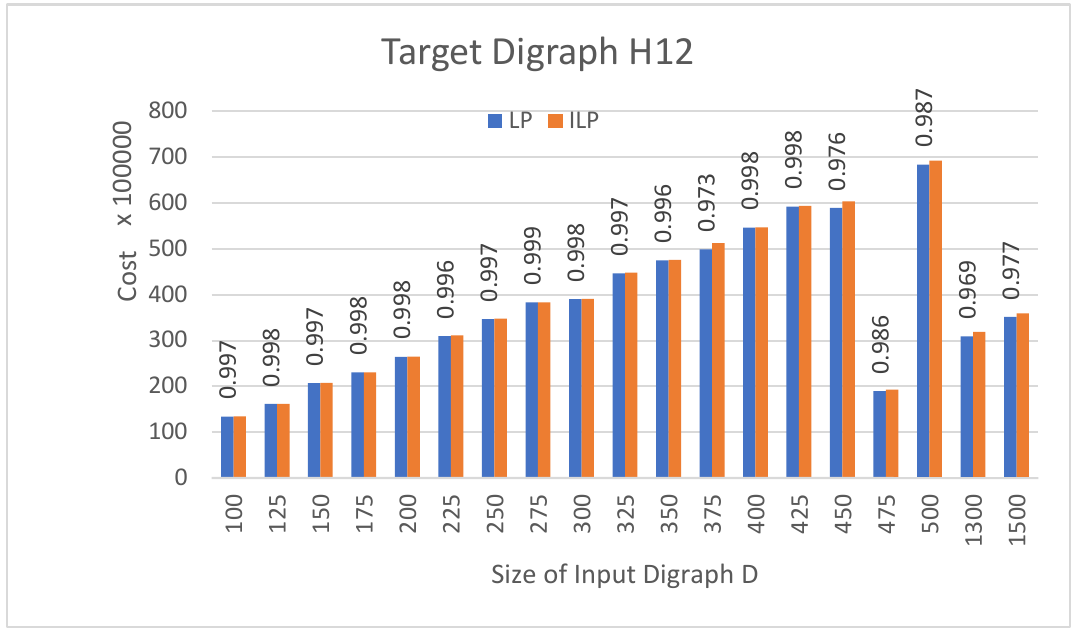}
  \end{subfigure}
  \hspace{1em}
  \begin{subfigure}{.2\linewidth}
    \includegraphics[scale=0.5]{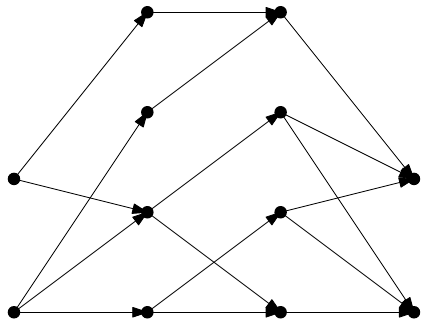}
  \end{subfigure}
  \caption{Comparison between LP and ILP when the target digraph is $H12$. $H12$ is a \emph{balanced} digraph that admits a min-ordering.}
  \label{examples-H12}
\end{figure}

\begin{figure}[H]
  \centering
  \begin{subfigure}{.6\linewidth}
    \includegraphics[width =
    \linewidth]{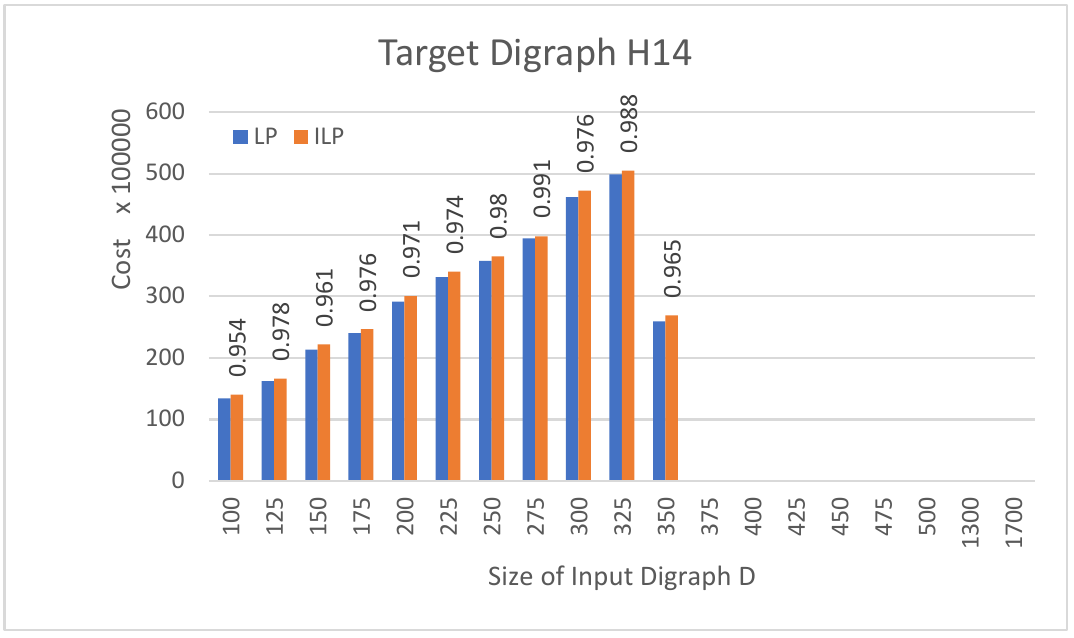}
  \end{subfigure}
  \hspace{2em}
  \begin{subfigure}{.2\linewidth}
    \includegraphics[scale=0.5]{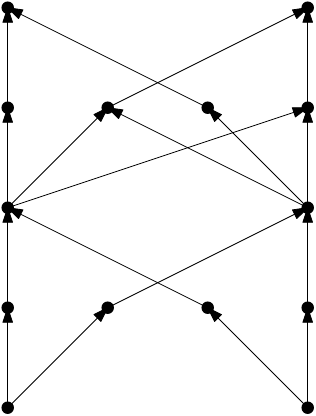}
  \end{subfigure}
  \caption{Comparison between LP and ILP when the target digraph is $H14$. $H14$ is a DAT-free digraph.}
  \label{examples-H14}
\end{figure}

\begin{figure}[H]
  \centering
  \begin{subfigure}{.6\linewidth}
    \includegraphics[width =
    \linewidth]{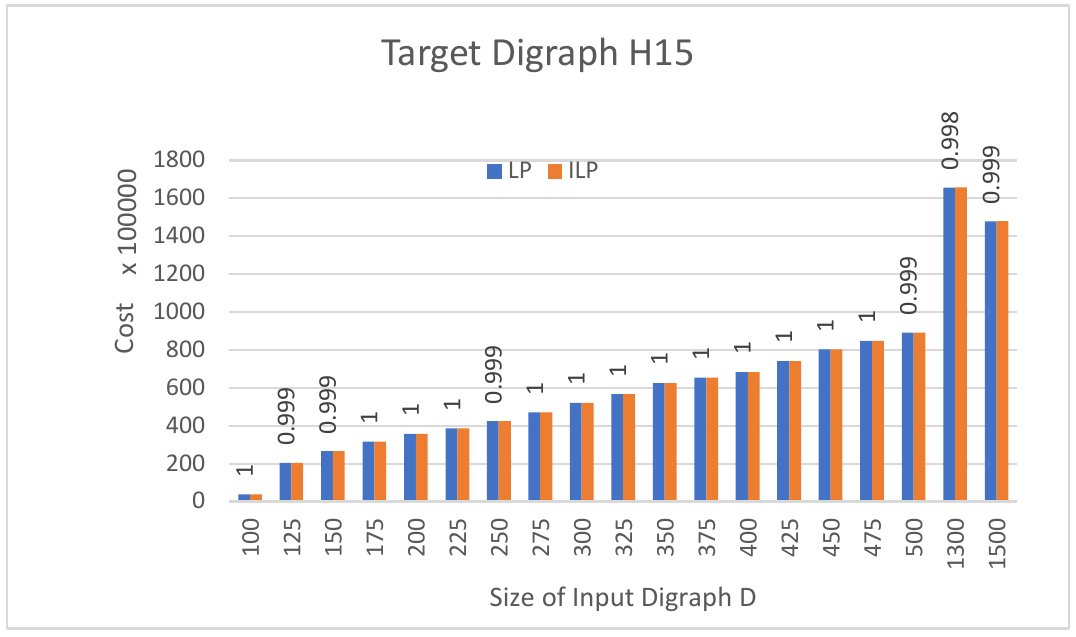}
  \end{subfigure}
  \hspace{1em}
  \begin{subfigure}{.2\linewidth}
    \includegraphics[scale=0.5]{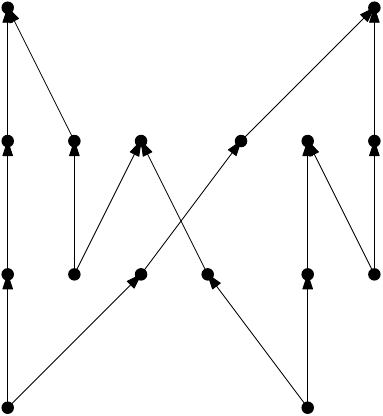}
   
  \end{subfigure}
  \caption{Comparison between LP and ILP when the target digraph is $H15$. $H15$ admits a conservative majority polymorphism. It does not admit a min-ordering.}
  \label{examples-H16}
\end{figure}

\section{Conclusion} In this paper we study the approximation of MinHOM problem. We present several positive results proving that for digraphs $H$ which admit a min-ordering or a $k$-min-ordering MinHOM($H$) admits a constant factor approximation algorithm. Moreover, we obtain a complete classification of graphs for which MinHOM is approximable within a constant factor. 

We complement our theoretical results with an empirical study of the performance of our algorithm. We have implemented and run our algorithm on several examples of bipartite graphs and digraphs with min-ordering as well as on digraphs without min-ordering. The weights have been randomly chosen from a much larger range than the number of vertices of input digraph $G$. The implementation of our algorithms provides 
a much better approximation ratio than our  theoretical bounds. 
It leaves open to investigate a classification of digraphs $H$, where MinHOM($H$) admits a constant factor approximation algorithm
that is independent of $|V(H)|$.

\noindent \textbf{Acknowledgement:} We are thankful to Andrei Bulatov for proofreading several drafts of the work and many valuable discussions that significantly improved the paper and its presentation.


\bibliography{main.bbl}

\end{document}